\renewcommand{\Pr}{{\mathrm{P}}}
\newcommand{\mcL}{{\mathcal L}}
\newcommand{\Ep}{\mathbb{E}}
\newcommand{\Enk}{{\mathbb{E}_{n,k}}}
\newcommand{\Gnk}{{\mathbb{G}_{n,k}}}
\newcommand{\EN}{{\mathbb{E}_{N}}}
\newcommand{\GN}{\mathbb{G}_N}
\newcommand{\E}{\mathbb{E}}
\newtheorem{theorem}{Theorem}
\newtheorem{assumption}{Assumption}
\newtheorem{corollary}{Corollary}
\newtheorem{lemma}{Lemma}
\theoremstyle{remark}
\newtheorem{example}{Example}
\newtheorem{remark}{Remark}
\newtheorem{definition}{Definition}[section]
\renewcommand{\thesection}{\arabic{section}}
\renewcommand{\theequation}{\arabic{section}.\arabic{equation}}
\renewcommand{\theassumption}{\arabic{section}.\arabic{assumption}}
\renewcommand{\thelemma}{\arabic{section}.\arabic{lemma}}
\renewcommand{\theexample}{\arabic{section}.\arabic{example}}
\DeclareMathOperator{\eig}{eig}
\title[Debiased Machine Learning]{Debiased Machine Learning of Conditional Average Treatment Effects and Other Causal Functions}
\author[Semenova and Chernozhukov ]{Vira Semenova$^{\dagger}$ and
               Victor Chernozhukov$^{\ddagger}$}
\email{$^{\dagger}$semenovavira@gmail.com}
\email{$^{\ddagger}$vchern@mit.edu}
\def\AmSTeX{$\cal A$\kern-.1667em\lower.5ex\hbox{$\cal M$}\kern-.125em
    $\cal S$-\TeX}
\def\BibTeX{{\rm B\kern-.05em{\sc i\kern-.025em b}\kern-.08em
    T\kern-.1667em\lower.7ex\hbox{E}\kern-.125emX}}
\begin{document}

     \begin{abstract} 
   This paper provides estimation and inference methods for the best linear predictor (approximation) of a structural function, such as conditional average structural and treatment effects, and structural derivatives, based on modern machine learning (ML) tools. We represent this structural function as a conditional expectation of an unbiased signal that depends on a nuisance parameter, which we estimate by modern machine learning techniques. We first adjust the signal to make it insensitive (Neyman-orthogonal) with respect to the first-stage regularization bias. We then project the signal onto a set of basis functions, which grows with sample size, to get the best linear predictor of the structural function. We derive a complete set of results for estimation and simultaneous inference on all parameters of the best linear predictor,  conducting inference by Gaussian bootstrap. When the structural function is smooth and the basis is sufficiently rich, our estimation and inference results automatically targets this function. When basis functions are group indicators, the best linear predictor reduces to the group average treatment/structural effect, and our inference automatically targets these parameters. We demonstrate our method by  estimating uniform confidence bands for the average price elasticity of gasoline demand conditional on income.

        \keywords{High-dimensional statistics, heterogeneous treatment effect, conditional average treatment effect, group average effects, debiased/orthogonal estimation, machine learning, double robustness, continuous treatment effects, dose-response functions}

\end{abstract}

\section{Introduction and Motivation.}
\label{sec:model}

This paper gives a method for estimating and conducting inference on a nonparametric function $g(x)$ that summarizes heterogeneous treatment/causal/structural effects conditional on a small set of covariates $X$.  We assume that this function can be represented as a conditional expectation function
\begin{align}
\label{eq:main}
g(x) &= \Ep [Y(\eta_0) | X =x],
\end{align}
where the random variable $Y(\eta_0)$, which we refer to as a signal, depends on a nuisance function $\eta_0(z)$ of a (potentially very) high dimensional control vector $Z$. Examples of the nonparametric target function include the conditional average treatment effect (CATE), continuous treatment effects (CTEs), as well as many others discussed below. Examples of the nuisance functions $$\eta_0 =\eta_0(z)$$ include the propensity score, conditional density, and the regression function, among others. In summary, 
\begin{align*}
\text{dim} (Z) \text{ is high; } \quad \text{dim} (X) \text{ is low. }
\end{align*}

Although there are many possible choices of signals $Y(\eta)$, we focus on signals that have the orthogonality property  (\cite{Neyman:1959}). Formally, we require the pathwise derivative of the conditional expectation to be zero conditional on $X$:
\begin{align}
\label{eq:orhog}
\partial_{r} \Ep [Y(\eta_0 +r(\eta-\eta_0)) | X=x ] |_{r=0} = 0,  \quad \text{ for all }  x  \text{ and } \eta.
\end{align}
If the signal $Y(\eta)$ is orthogonal, its plug-in estimate $Y(\widehat{\eta})$ is insensitive to bias in the estimation of $\widehat{\eta}$, which results from applying modern adaptive learning methods in high dimensions. Under mild conditions, $Y(\widehat{\eta})$ delivers a high-quality estimator of the target function $g(x)$.

We demonstrate  the importance of the orthogonality property for continuous treatment effects, studied in \cite{Imbens2000}, \cite{GillRobins}, and \cite{Kennedy}. Let $X \in \mathbb{R}$ be a one-dimensional continuous treatment, and  $Y^{x}$ be the potential outcome corresponding to the subject's response after receiving $x$ units of treatment. The observed data vector $V=(X,Z,Y)$ consists of the treatment $X$, the control vector $Z$, and the observed outcome $Y=Y^{X}$. If potential outcomes $\{Y^{x}, x \in \mathbb{R}\}$ are independent of treatment $X$ conditional on controls $Z$, the 
average potential outcome is identified as
\begin{align}
\label{eq:moment}
\E[Y^{x}] = \E \mu_0(x,Z) = \int \mu_0(x,z) d P_{Z}(z),
\end{align}
where $\mu_0(x,z)=\E[Y|X=x,Z=z]$ is the regression function of the observed outcome. Since the control vector $Z$ is high-dimensional, it is necessary to estimate the regression function $\mu_0(x,z)$ with some regularized technique to achieve convergence.  

A naive approach to estimate $\E[Y^{x}]$ is to consider a sample average
\begin{align*}
   \widetilde g(x)  = \int \widehat{\mu}(x, z)  d \widehat P_{Z}(z), 
\end{align*}
where $\widehat{\mu}(x,z)$ is a regularized estimate of $\mu_0(x,z)$, and $\widehat P_{Z}$ is the empirical analog of $P_{Z}$. This approach results in a biased estimate, and the bias of estimation error, $\widehat{\mu}(x,Z)-\mu_0(x,Z)$,  does not vanish faster than $N^{-1/2}$. The plug-in estimator inherits this first-order bias,
since the moment equation \eqref{eq:moment} is not orthogonal to  perturbations of $\mu$:
\begin{align*}
    \partial_{r}\E [ (\mu_0 + r (\mu-\mu_0)) \circ (x,Z) ]|_{r=0}   = \E [\mu(x,Z)-\mu_0(x,Z)] \neq 0.
\end{align*}
This bias  implies that the plug-in estimator $\widetilde g$ won't converge at the optimal rate.

To deliver a high-quality estimate of $\E[Y^{x}]$, we represent $$g(x)=\E[Y^{x}]$$ as a special case of the signal framework \eqref{eq:main}. We choose the signal $Y(\eta)$ to be the doubly robust signal from \cite{Kennedy}:
\begin{align}
\label{eq:signal}
Y (\eta) &:=\dfrac{Y - \mu(X,Z)}{s(X|Z)} w (X) + \int \mu(X,z) d P_{Z}(z), 
\end{align}
where the nuisance parameter $$\eta_0(x,z) = \{ s_0(x|z), \mu_0(x,z),  w_0(x) \}$$ consists of  the regression function $\mu_0(x,z)$,
the conditional density $s_0(x|z)$ of $X$ given $Z$, and the  marginal treatment density $w_0(x)$. This procedure is more costly because the nuisance parameter now includes two more functions $s_0(x|z)$ and $w_0(x)$, in addition to   $\mu_0(x,z)$. However, the signal in \eqref{eq:signal} has the benefit of being conditionally orthogonal with respect to each nuisance function in $\eta_0(x,z)$:
\begin{align*}
\Ep  \left [ \begin{array}{ll}  &    -  \int_{z \in \mathcal{Z}} (\mu(X,z) - \mu_0(X,z)) dP_Z(z) + \int_{z \in \mathcal{Z}} (\mu(x,z) - \mu_0(x,z)) dP_Z(z)
 \\   &  \frac{ \mu_0(X,Z)-Y}{s^2_0(X|Z)}     (s(X|Z)-s_0(X|Z))   
  \\  &  \frac{Y - \mu_0(X,Z)}{s_0(X|Z)} (w(X) - w_0(X))  \\
  \end{array}  \Bigg| X \right ] =0.
  \end{align*}
Because this signal is conditionally orthogonal to the nuisance function, the bias of the estimation error, $\widehat{\eta}(X,Z) - \eta_0(X,Z)$,  does not create first-order bias in the estimated signal $Y(\widehat{\eta})$ and only affects  its higher-order bias.  As a result, the estimate of the target function based on $Y(\widehat{\eta})$ is high-quality under plausible conditions.

In the second stage we consider a linear projection of an orthogonal signal $Y(\eta)$ onto a vector of basis functions $p(X)$,
$$ \beta := \arg \min_{b \in \mathbb{R}^d} \E (Y(\eta) - p (X)'b)^2.$$

The choice of basis functions depends on the desired interpretation of the linear approximation. For example, 
consider partitioning  the support of $X$ into $d$ mutually exclusive groups $\{G_k\}_{k=1}^d$. Setting   $$p_k(x)=\mathbf{1}\{x  \in G_k \}, \quad k \in \{1,2, \dots, d\}$$  implies
that $p(x)'\beta_0$ is a Group Average Treatment Effect (GATE) 
for group $k$ such that $x \in G_k$. Our inference will target this parameter, allowing the
number of groups to increase at some rate.

For another example, let $p(x) \in \mathbb{R}^d$ be a $d$-dimensional dictionary of series/sieve basis functions, e.g., polynomials, splines, or wavelets. Then, $p(x)'\beta_0$ corresponds to the best linear approximation to the target function $g(x)$ in the given dictionary.  Under some smoothness conditions, as the dimension of the dictionary becomes large  $p(x)'\beta_0$ will  approximate $g(x)$, and our inference will target this function.  We derive a complete set of results for estimation and simultaneous inference on all parameters of the best linear predictor, conducting inference by Gaussian bootstrap. When the structural function is smooth and the basis is sufficiently rich, our estimation and inference results automatically target this function. When basis functions are group indicators, the best linear predictor reduces to the group average treatment/structural effect, and our inference automatically targets these parameters.

\subsection{Literature Review.}

This paper builds on three bodies of research within the semiparametric literature: orthogonal(debiased) machine learning, least squares series estimation, and treatment effects/missing data problems. Orthogonal machine learning (\cite{LRSP}, \cite{chernozhukov2016double}) proposes  inference on a fixed-dimensional target parameter $\beta_0$ in the presence of a high-dimensional nuisance function $\eta$ in a semiparametric moment problem. If the moment condition is orthogonal to  perturbations of $\eta$, estimating $\eta$ by ML methods has no first-order effect on the asymptotic distribution of the target parameter $\beta_0$. In particular, plugging in an estimate of $\eta$ obtained on a separate sample results in a $\sqrt{N}$-consistent asymptotically normal estimate whose asymptotic variance is the same as if the econometrician knew $\eta=\eta_0$. This result makes it possible to use highly complex machine learning methods to estimate the nuisance function $\eta$, such as $\ell_1$ penalized methods in sparse models (\cite{vandergeer}, \cite{orthogStructural}), $\ell_2$ boosting  in sparse linear models (\cite{Luo}), and other methods for classes of neural nets, regression trees, and random forests. This paper extends the orthogonal machine learning literature by allowing the target parameter to be a function, that is, an infinite-dimensional parameter. Next, our paper contributes to a large body of work on ``debiased'' inference for parameters following regularization or model selection (\cite{Program}, \cite{BCH}, \cite{orthogStructural}, \cite{geer}, \cite{JM}, \cite{ZhangZhang}), with the crucial difference that our (ultimate) target parameter $g(x)$ is infinite-dimensional, whereas in those papers the target parameter is finite-dimensional.

The second building block of our paper is the literature on least squares series estimation (\cite{Newey97}, \cite{Newey09}, \cite{NewOls}, \cite{ChenChris}), which establishes pointwise and uniform limit theory for least squares series estimation. We extend this theory by allowing the dependent variable of the series projection to depend upon an unknown nuisance parameter $\eta$. We show that series properties continue to hold without  any additional strong assumptions on the problem design.

Finally, we also contribute to the  literature on estimating the conditional average treatment effects and group average treatment effects with missing data (\cite{Robins}, \cite{Hahn98}, \cite{Graham},   \cite{Graham2003}, \cite{HIR2003}, \cite{AbrevayaHsu},  \cite{AtheyImbens},  \cite{GrimmerMessing}, \cite{OprescuWu}) among others. After we released the working paper version of this article (\cite{CherSemWP}), many methods (\cite{Hardle}, \cite{Lieli},  \cite{ZimLech}, \cite{Colangelo}) have been proposed for estimating group, incremental or heterogeneous treatment effects in the presence of high-dimensional controls. Our framework covers many more examples than just Conditional Average Treatment Effect or Continuous Treatment Effects  and uses series estimators, as opposed to kernels, to localize the structural function.

In a related paper, \cite{chernozhukovdemirer} study Conditional Average Treatment Effects in randomized control trials with a known propensity score. Recognizing a widespread interest in estimating CATE by modern machine learning techniques, this paper studies the best linear projection of the true CATE function onto an arbitrary ML estimator of CATE  under consideration, constructed on an auxiliary sample. \cite{chernozhukovdemirer}'s analysis does not require  any assumptions on the ML estimator; however, that paper targets  a specific feature of CATE---the best linear projection of CATE---rather than CATE itself.  In contrast, our work operates in a classic observational setting with many potential controls and targets the true CATE function.  In our setting, modern regularized methods are used to estimate the propensity score, but are required to approximate this parameter  sufficiently well.   To sum up, this paper  delivers a sharper characterization of CATE in a more challenging setting, in  exchange for stronger assumptions about the first-stage machine learning estimate.

\section{Set-Up}

\subsection{Examples.}
\label{sec:examples}

In this section, we describe our main examples. For each example we provide an orthogonal signal $Y(\eta)$ obeying \eqref{eq:main}--\eqref{eq:orhog}.


\begin{example}[Continuous Treatment Effects]
\label{CATEcont}
Let $X \in \mathbb{R}$ be  a continuous treatment variable, $Z$ be a vector of the controls,   $Y^x$ stand for the potential outcomes corresponding to the subject's response after receiving $x$ units of treatment, and $Y=Y^{X}$ be the observed outcome. The observed data $V$ is $V=(X,Z,Y)$. For a given value $x$, the target function is the average potential outcome 
\begin{align}
 \label{eq:targetcont}
 g(x)=\E[Y^{x}].
 \end{align}
 
 A standard way to identify the function $g(x)$ is to assume unconfoundedness. Suppose all of the potential outcomes $\{ Y^{x}, x \in \mathbb{R} \}$ are independent of $X$ conditional on $Z$,
\begin{align}
    \label{eq:contoutcome:ident}
    \{ Y^{x}, x \in \mathbb{R} \} \perp X | Z.
\end{align}
Then, $g(x)$ is identified as
\begin{align*}
g(x) &=  \E \mu_0(x,Z),
\end{align*}
where $\mu_0(x,z)=\E[Y | X=x,Z=z]$ is the regression function. 
Lemma \ref{thrm:conttreat} shows that the doubly robust signal from \cite{Kennedy},
\begin{align}
\label{eq:contoutcome}
Y (\eta) &:=\dfrac{Y - \mu(X,Z)}{s(X|Z)} w (X) + \int \mu(X,z) d P_{Z}(z) ,
\end{align}
is conditionally orthogonal with respect to the nuisance parameter $$\eta_0(x,z):= \{s_0(x|z),  \mu_0(x,z),  w_0(x) \}$$ consisting of the conditional treatment density (a.k.a. generalized propensity score)  $$s_0(x|z) = \dfrac{d P(X \leq t| Z=z)}{dt}\bigg|_{t=x},$$  regression function $\mu_0(x,z) = \E [Y|X=x,Z=z],$ and the marginal treatment density $$ w_0(x)=\dfrac{d \Pr (X \leq t)}{dt}\bigg|_{t=x}=\E_{Z} s_0(x|Z).$$ Theorems \ref{thrm:cte:pointwise} and \ref{thrm:cte:uniform} establish pointwise and uniform asymptotic normality for the Orthogonal Estimator of   Continuous Treatment Effects.

\end{example}

In the examples below, the vector $X$ represents a low-dimensional subset of potential controls $Z$.  

\begin{example}[Conditional Average Treatment Effect]
\label{CATE}
Let $ Y_1$ and $Y_0$ be the potential outcomes corresponding to a subject's response with and without receiving a binary treatment, respectively. Let $D=1$  be a dummy for whether a subject is treated.  The object of interest is the Conditional  Average Treatment Effect $$g(x):=\Ep [Y_1 - Y_0 | X=x].$$ Since an individual cannot be treated and untreated at the same time, the econometrician only observes the actual outcome  $Y= DY_1 + (1-D)Y_0$, but not the treatment effect  $Y_1-Y_0$. 

A standard way to make progress in this problem is to assume unconfoundedness (\cite{Rosenbaum}). Suppose there exists an observable control vector $Z$ such that treatment status $D$ is independent of the potential outcomes $Y_1,Y_0$  conditional on $Z$,
\begin{align}
    \label{eq:MAR}
    Y_1, Y_0 \perp D | Z.
\end{align}
Define the conditional probability  of   treatment receipt as  $s_0(z) = \Pr (D=1|Z=z)$.  Consider a \cite{Robins} type  orthogonal signal $Y(\eta)$,
\begin{align}
\label{eq:catesig}
Y (\eta) &:= \mu (1,Z) - \mu (0,Z)   + \dfrac{D [Y - \mu (1,Z) ]}{s (Z)} - \dfrac{(1-D)[Y- \mu (0,Z)] }{1-s (Z)},
\end{align}
where $\mu_0(d,z) = \Ep [Y|D=d,Z=z]$ is the conditional expectation function of $Y$. Corollary \ref{thrm:CATE} shows that \eqref{eq:catesig} is  orthogonal with respect to the nuisance parameter $\eta_0(z):=\{ s_0(z),\mu_0(1,z),\mu_0(0,z) \}$ and establishes pointwise and uniform asymptotic theory for the Orthogonal Estimator of  Conditional Average Treatment Effect.
\end{example}

\begin{example}[Regression Function with Partially Missing Outcome]
\label{MD}
Suppose a researcher is interested in the conditional expectation of a variable $Y^{*}$ given $X$: $$g(x):= \Ep [Y^{*}| X=x],$$ where $Y^{*}$ is partially missing. Let $D=1$ be a dummy for whether the outcome $Y^{*}$ is observed, $Z$ be a control vector, $Y = DY^{*}$ be the observed outcome, and $V=(D,Z,Y)$  be the data vector.  Since the researcher does not control $D$, a standard way to make progress is to assume there exists an observable control vector $Z$ such that $Y^{*}$ is independent of $D$ given $Z$,
\begin{align*}
    Y^{*} \perp D | Z.
\end{align*}
Corollary \ref{thrm:MD} shows that the  signal  $Y(\eta)$, defined as
\begin{align}
\label{eq:mdsig}
Y(\eta) &:= \mu (Z)+ \dfrac{D[Y- \mu (Z)]}{s (Z)},   
\end{align}
 is  orthogonal  with respect to the nuisance parameter $$\eta_0(z):=\{ s_0(z), \mu_0(z) \},$$  where $ \mu_0(z) =  \Ep [Y|Z=z, D=1]$ is the conditional expectation function of the observed outcome $Y$.

\end{example}

\begin{example}[Conditional Average Partial Derivative]
\label{CAPD}

Let $D\in \mathbb{R}$ be  a continuous treatment variable, $Z$ be a vector of the controls,   $Y^d$ stand for the potential outcomes corresponding to the subject's response after receiving $d$ units of treatment, $Y=Y^{D}$ be the observed outcome, and  $V=(D,Z,Y)$ be the data vector. Let $X$ be a subvector of controls $Z$. The target function is the average partial derivative conditional on a covariate vector $X$,
\begin{align}
 \label{eq:targetincrement}
 g(x)= \partial_{d} \E [Y^{D} | X=x].
 \end{align}
A standard way to identify the function $g(x)$ is to assume unconfoundedness. Suppose the potential outcome $Y^{d}$ is independent of $D$ conditional on $Z$,
\begin{align}
    \label{eq:contoutcome:ident2}
    \{ Y^{d}, d \in \mathbb{R} \} \perp D | Z.
\end{align}
Then, $g(x)$ is identified as
\begin{align*}
g(x) &= \E [\partial_{d} \mu_0(D,Z)  | X=x],
\end{align*}
where $\mu_0(d,z)=\E[Y | D=d,Z=z]$ is the regression function.  Corollary \ref{thrm:CAPD} shows that the signal 
\begin{align}
\label{eq:capdsig}
	Y(\eta):= -\partial_{d}\log s(D|Z) [ Y - \mu(D,Z)] + \partial_{d} \mu(D,Z),
\end{align}
 is  orthogonal  with respect to the nuisance parameter $$\eta_0(d,z) = \{ \mu_0(d,z),  s_0(d|z)\},$$ where $s_0(d|z) $ is the conditional density of $D$ given $Z$.

\end{example}

\subsection{Overview of Main Results.}
The first main contribution of this paper is to provide sufficient conditions for pointwise and uniform asymptotic Gaussian approximation of the target function. We approximate the target function $g(x)$ by a linear form $p(x)'\beta_0$:   $$ g(x) = p(x)'\beta_0+ r_{g} (x), $$ 
where $p(x)$ is a $d$-vector of basis functions of $x$,  $r_{g}(x)$ is the linear approximation error, and $\beta_0$ is the Best Linear Predictor/Approximation parameter, defined by the normal equation
\begin{align}
\label{eq:BLP}
\Ep p(X)[g(X) - p(X)'\beta_0] = \Ep p(X)r_g(X) = 0.
\end{align}

We construct the \emph{Orthogonal Estimator} $\widehat{\beta}$, the two-stage estimator of  $\beta_0$, as follows. In the first stage, we construct an estimate $\widehat{\eta}$ of the nuisance parameter $\eta_0$, using a high-quality machine learning estimator capable of dealing with the high-dimensional covariate vector $Z$. In the second stage we construct an estimate $\widehat{Y}_i$ of the signal $Y_i$ as $\widehat{Y}_i:=Y_i (\widehat{\eta})$ and run ordinary least squares of $\widehat{Y}_i$ on $p(X_i)$. We use different samples to estimate $\eta$ in the first stage and $\beta_0$ in the second stage in a form of cross-fitting. 
\begin{definition}[Cross-Fitting]
\mbox{}
 	\label{sampling} 	\begin{compactenum} 
	\item  For a random sample of size $N$, denote a $K$-fold random partition of the sample indices $[N]=\{1,2,...,N\}$ by $(J_k)_{k=1}^K$, where $K$ is  the number of partitions and the sample size of each fold is $n = N/K$. For each $k \in [K] = \{1,2,...,K\}$ define $J_k^c = \{1,2,...,N\} \setminus J_k$.
	\item For each $k \in [K]$, construct 	an estimator $\widehat{\eta}_k = \widehat{\eta}( V_{i \in J_k^c})$   of the nuisance parameter $\eta_0$ using only the data $\{ V_{j}: j \in J_k^c \}$. For any observation $i \in J_k$, define $\widehat{Y}_i := Y_i(\widehat{\eta}_k) $.
		\end{compactenum}
\end{definition}

\begin{definition}[Orthogonal  Estimator]
	\label{def:OS}
 	Given  $(\widehat{Y}_i)_{i=1}^N$, define
	\begin{align}
	\label{eq:OLS}
	\widehat{\beta} &:=\left( \dfrac{1}{N} \sum_{i=1}^N p(X_i) p(X_i)' \right)^{-1} \dfrac{1}{N} \sum_{i=1}^N  p(X_i)\widehat{Y}_i.  	\end{align}
	
\end{definition}

Under  mild conditions on $\eta$,  the Orthogonal Estimator delivers a high-quality estimate $p(x)'\widehat{\beta}$ of the pseudo-target function $p(x)'\beta_0$ with the following properties:

\begin{enumerate}

\item With probability (w.p.) $\rightarrow 1$, the mean squared error of $p(x)'\widehat{\beta} $ is bounded by $$ \bigg(\dfrac{1}{N} \sum_{i=1}^N (p(X_i)' (\widehat{\beta} - \beta_0))^2 \bigg)^{1/2}  = O_{P} \bigg(\sqrt{\dfrac{d}{N}} \bigg) .$$
\item The estimator $p(x)'\widehat{\beta}$ of the pseudo-target function $p(x)'\beta_0$ is asymptotically linear:
$$\sqrt{N} \dfrac{ p(x)' (\widehat{\beta} - \beta_0)}{\sqrt{p(x)' \Omega p(x)}} = G_N(x) + o_{P} (1/\sqrt{\log N }),$$ where the empirical process $G_N(x)$ is approximated by a Gaussian process
$$
G(x) = \dfrac{ p(x)'}{{\sqrt{p(x)' \Omega p(x)}} } N(0, \Omega)
$$
uniformly over $x \in \mathcal{X}$, and the covariance matrix $\Omega$  can be consistently estimated by a sample analog $\widehat{\Omega}$. 

\item If the misspecification error $r_g(x)$ is small, the pseudo-target function $p(x)'\beta_0$ can be replaced by the target function $g(x)$:
$$\sqrt{N} \dfrac{ p(x)' \widehat{\beta} - g(x) }{\sqrt{p(x)' \Omega p(x)}} = G_N(x) + o_{P} (1/\sqrt{\log N }).$$ 

\item Simultaneous inference is facilitated by Gaussian bootstrap, which relies on simulating
the empirical Gaussian process:
$$
G^\star(x) = \dfrac{ p(x)'}{{\sqrt{p(x)' \Omega p(x)}} } N(0, \widehat{\Omega}).
$$
The quantiles of the suprema of $x \mapsto G(x)$ can be consistently approximated by simulation, which makes it possible to construct uniform confidence bands for $x \mapsto p(x)'\beta$ and $x \mapsto g(x)$.

\end{enumerate}

Our results accommodate high-dimensional/highly complex modern machine learning (ML) methods to estimate $\eta$, such as random forests, neural networks, and $\ell_1$-shrinkage estimators, as well as procedures that estimate $\widehat{\eta}$ by classic nonparametric methods. The only requirement we impose on the estimation of  $\widehat{\eta}$  is that it converges to the true nuisance parameter $\eta_0$ at a fast enough rate $o_P(N^{-1/4-\delta})$ for some $\delta \geq 0$. This requirement is satisfied under structural assumptions on $\eta_0$, such as approximate sparsity of $\eta_0$ with respect to some dictionary, or if $\eta_0$ is well approximated by trees or by sparse neural and deep neural nets. For the Conditional Average Treatment Effect,  it is straightforward to apply our method using the \url{best_linear_predictor} command in the $R$ \url{grf} package, available from \cite{grf}. 

\begin{figure}[ht]
\begin{center}
\begin {tikzpicture}[-latex, auto, node distance =2.5cm and 3cm, on grid, thick, 
  rect/.style ={rectangle, top color=blue, bottom color = white, draw, white, text=black , minimum width =1 cm, minimum height =1.5 cm},
  ml/.style ={circle, top color=yellow, bottom color = white, draw, black, text=black , minimum width =2 cm},
   data/.style ={rectangle, top color=blue, bottom color = white, draw, black, text=black , minimum width =2 cm, minimum height =1 cm},
    ols/.style ={rectangle, top color=white, bottom color = white, draw, white, text=black , minimum width =2 cm, minimum height =1.5 cm}]

\node[rect]   (Y1) {$Y(\widehat{\eta}_1)$};

\node[rect]   (Y2)[below =of Y1]  {$Y(\widehat{\eta}_2)$};

\node[rect]   (Y3) [below =of Y2]  {$Y(\widehat{\eta}_3)$};

\node[ml]   (ML1) [left =of Y1] { \begin{tabular}{c} ML1 \\ $\widehat{\eta}_1$ \end{tabular}};
\node[ml]   (ML2) [below  =of ML1] {\begin{tabular}{c} ML2 \\ $\widehat{\eta}_2$ \end{tabular}};
\node[ml]   (ML3) [below  =of ML2] {\begin{tabular}{c} ML3 \\ $\widehat{\eta}_3$ \end{tabular}};
\node[data] (Data1) [left =of ML1] {$\text{Data}_{1}$};
\node[data] (Data2) [left =of ML2] {$\text{Data}_{2}$};
\node[data] (Data3) [left =of ML3] {$\text{Data}_{3}$};
\node[ols] (OLS) [below =of ML3] {\begin{tabular}{c} OLS \\ $\widehat{\beta} $ \end{tabular}};
\node[ols] (Input) [left =of OLS] {\begin{tabular}{c} Input \\ $ x$ \end{tabular}};
\node[ols] (Output) [right =of OLS] {\begin{tabular}{c} Output \\ $p(x)'\widehat{\beta}$ \end{tabular}};


\path[draw, dotted] (Data1) edge  (ML2);
\path[draw, dotted] (Data1) edge  (ML3);
\path[draw, dotted] (Data2) edge  (ML1);
\path[draw, dotted] (Data2) edge  (ML3);
\path[draw, dotted] (Data3) edge  (ML1);
\path[draw, dotted] (Data3) edge  (ML2);
\path[draw, dotted] (ML1) edge  (Y1);
\path[draw, dotted] (ML2) edge  (Y2);
\path[draw, dotted] (ML3) edge  (Y3);
\path[draw] (Input) edge  (OLS);
\path[draw] (OLS) edge  (Output);

\path [draw, dotted] (Data1.east) edge[bend  right=270] (Y1.west) {};
\path [draw, dotted] (Data3.east) edge[bend  right=90] (Y3.west) {};
\path [draw, dotted] (Data2.east) edge[bend  right=90] (Y2.west) {};

\draw[draw, dashdotted] (-1.5,-6) rectangle (1.5,1);
\draw[draw, dashdotted] (-7.5,-6) rectangle (-4.5,1);
\draw[draw, dashdotted] (-4.5,-6) edge (-3,-7.1);
\draw[draw, dashdotted] (-1.5,-6) edge (-3,-7.1);

\end{tikzpicture} \end{center}
\caption{Graphical representation of  the Orthogonal Estimator (OE) with cross-fitting. \emph{First Stage}. The rectangles represent the   partition of  the data into $K=3$ subsets. The circles represent $K=3$ instances of machine learning algorithm ML, whose training sets are indicated by straight arrows. For each partition $k \in \{1,2,3\}$, the signal $Y(\widehat{\eta}_k)$ is estimated using the $\text{Data}_k$ and ML instance $\widehat{\eta}_k$. \emph{Second Stage}. The O.L.S. estimator $\widehat{\beta}= (\sum_{i=1}^N p(X_i) p(X_i)')^{-1} \sum_{i=1}^N p(X_i) Y_i(\widehat{\eta})$ is estimated on the full data set with covariate vector $p(X)$ and outcome variable $Y(\widehat{\eta})$, where $p(X)$ is a vector of series terms. For a given point of interest $x$ (input), the function $g(x)$ is estimated by $p(x)'\widehat{\beta}$ (output).  }
\end{figure}

%
Define the covariance matrix of the basis functions as
$$ Q = \E p(X) p(X)', $$
and define its empirical analog as
$$ \widehat{Q} = \dfrac{1}{N} \sum_{i=1}^N p(X_i) p(X_i)'. $$
In Examples \ref{CATE}--\ref{CAPD},  the asymptotic covariance matrix of  Orthogonal Estimator is
$$\Omega = Q^{-1} \Ep p(X) p(X)' ( U + r_g(X ))^2  Q^{-1},$$
and its empirical analog is
\begin{equation} \label{eq:omegahat}\widehat{\Omega}:= \widehat{Q}^{-1} \EN p(X_i) p(X_i)' ( Y_i (\widehat{\eta}) - p(X_i)' \widehat{\beta} )^2  \widehat{Q}^{-1}. \end{equation}
In the case of Continuous Treatment Effects (Example \ref{CATEcont}), the asymptotic variance  contains an additional component that we describe in Section \ref{sec:apps}.

\begin{definition}[Pointwise and Uniform Confidence Bands]
    \label{def:bands}
Let $\widehat{g}(x) = p(x)'\widehat{\beta}$. Denote the $t$-statistic as
\begin{align}
\label{eq:tN}
t_N(x):= \dfrac{\widehat{g}(x) - g(x)}{\widehat{\sigma}_N(x)},
\end{align}
where $\widehat{\sigma}_N(x) = \sqrt{p(x)' \widehat{\Omega} p(x)/N}$, and denote the bootstrapped $t$-statistic as
\begin{align}
\label{eq:bootcritvalue}
\widehat{t}^{b}_N(x) := \dfrac{p(x)' \widehat{\Omega}^{1/2} /\sqrt{N} }{\widehat{\sigma}_N(x)}\mathcal {N}^b_{d}, 
\end{align}
where $\mathcal {N}^b_{d}$ is a bootstrap draw from $N(0, I_d)$. Define the confidence bands for $g(x)$ as
\begin{align}
\label{eq:bootconfbands}
[ \underline{i}(x), \overline{i}(x) ]  :=  [\widehat{g}(x) -c_N (1-\alpha)\widehat{\sigma}_N(x),  \quad \widehat{g}(x) +c_N (1-\alpha)\widehat{\sigma}_N(x) ], \quad x \in \mathcal{X},
\end{align}
where  the  critical value $c_N (1-\alpha)$ is the $(1-\alpha)$-quantile of $N(0,1)$ for the pointwise bands and  $c_N (1-\alpha)$ is the $(1-\alpha)$-quantile of $\sup_{x \in \mathcal{X}} |\widehat{t}^{b}_N(x)  | $ for the uniform bands. 

\end{definition}

In the case of Continuous Treatment Effects (Example \ref{CATEcont}), the orthogonal signal \eqref{eq:contoutcome} involves an auxiliary nuisance parameter---the expectation of $\mu(x,Z)$ with respect to $Z$ for each value of $x$. We estimate this parameter by a leave-one-out sample average
\begin{align}
\label{eq:feassignal}
Y^{\dagger}_i(\widehat{\eta}) =\dfrac{Y_i^o - \widehat{\mu} (X_i, Z_{i})}{ \widehat{s} (X_i | Z_{i})} \widehat{w}(X_i)+ \dfrac{1}{n-1} \sum_{j \in J_k, j \neq i} \widehat{\mu} (X_i, Z_{j}), \quad i \in J_k,
\end{align}
where  $\widehat{\eta} (x,z)= \widehat{\eta}_k (x,z)$ is estimated on $J_k^c$ for each $k$ and  the sample average in the second summand is taken over data $(V_j)_{j \in J_k}$ excluding observation $i$.  Having replaced $Y_i(\widehat{\eta})$ by $Y^{\dagger}_i(\widehat{\eta})$ in Definition \ref{def:OS}, we obtain an asymptotically linear estimator  $p(x)'\widehat{\beta}^{\dagger}$ of the pseudo-target function:
\begin{align}
\label{eq:ALintro}
\dfrac{\sqrt{N} p(x)' (\widehat{\beta}^{\dagger} - \beta_0) }{\sqrt{p(x)' \Omega^{\dagger} p(x)}} = G_N(x) + o_{P} (1/\sqrt{\log N }),
\end{align}
where $\Omega^{\dagger}$ is the asymptotic variance of  $\widehat{\beta}^{\dagger}$ and the empirical process $G_N(x)$ is approximated by a Gaussian process 
$$
G(x) = \dfrac{ p(x)'}{{\sqrt{p(x)' \Omega^{\dagger} p(x)}} } N(0, \Omega^{\dagger})
$$
uniformly over $x \in \mathcal{X}$.

\section{Main Theoretical Results.}

We use the empirical process notation. For a generic function $f$ and a generic sample $(V_i)_{i=1}^N$, denote the empirical sample average by $ \EN f(V_i) := \dfrac{1}{N}  \sum_{i=1}^N f(V_i) $ and the
scaled, demeaned sample average by  $$\GN f(V_i) := 1/\sqrt{N} \sum_{i=1}^N  [f(V_i) - \int f(v) d P(v)]$$ 

\label{sec:asymp}

The following assumptions impose regularity conditions on the covariate distribution, error terms, and the estimator of the nuisance parameter.

\begin{assumption}[Identification]
	\label{ass:identification}
	
	Let $Q := \Ep p(X) p(X)' =Q_d$ denote population covariance matrix of $p(X)$. Assume that there exist $0 < C_{\min} < C_{\max} < \infty$ that do not depend on $d$ s.t. $C_{\min} \leq \min \eig (Q) \leq \max \eig (Q) \leq C_{\max}$ for all $d$.
\end{assumption}
\begin{assumption}[Growth Condition]
	\label{ass:growth} We assume that the $\sup$-norm of the basis functions $\xi_d := \sup_{x \in \mathcal{X}} \| p(x) \| = \sup_{x \in \mathcal{X}}  (\sum_{j=1}^d p_j(x)^2)^{1/2} $ grows sufficiently slow:$$ \sqrt{\dfrac{\xi_d^2 \log N}{N}} = o(1).$$
\end{assumption}

\begin{assumption}[Misspecification Error] 
	\label{ass:approx}
There exists a sequence of finite constants ${l_d, r_d}$ such that the norms of the misspecification error are controlled as follows:
$$ \| r_{g} \|_{P,2} := \sqrt{\int r_{g}(x)^2  dP(x)} \lesssim r_d \text{ and }  \| r_{g} \|_{P,\infty} := \sup_{x \in \mathcal{X}} | r_{g}(x)| \lesssim l_dr_d.$$
\end{assumption}
Assumption \ref{ass:approx} introduces the rate of decay of the misspecification error. Specifically, the sequence of constants $r_d$  bounds the mean squared misspecification error. In addition, the sequence $l_d r_d$ bounds the worst-case misspecification error uniformly over the domain $\mathcal{X}$, where $l_d$ is the modulus of continuity of the worst-case error with respect to mean squared error.

Define the stochastic error $U$ as   $$U:=Y - g(X)$$ 
and the lower and upper bounds on its second moment conditional on $X$ as
\begin{align*}
   \underbar{$\sigma$}^2:=\inf_{x \in \mathcal{X}} \E [ U^2 | X= x], \quad \overline {\sigma}^2:=\sup_{x \in \mathcal{X}} \E [ U^2 | X= x].
\end{align*}

\begin{assumption}[Error Assumption]
\label{ass:boundederror}
The second moment of the sampling error $U$ conditional on $X$ is bounded from above: $\overline{\sigma}^2 \lesssim 1$. 
\end{assumption}

To describe the first-stage rate requirement, Assumption \ref{ass:smallbias} introduces a sequence of nuisance realization sets $\mathcal{T}_N$ for the nuisance parameter $\eta_0$. As sample size $N$ increases, the sets $\mathcal{T}_N$ shrink around the true value $\eta_0$.   The shrinkage speed is described in terms of the statistical rates $B_{N}$ and $\Lambda_N$.
\begin{assumption}[Small Bias Condition]
\label{ass:smallbias}
There exists a sequence $\epsilon_N = o(1)$, such that with probability at least $1-\epsilon_N$, for all $k \in [K]$, the first stage estimate $\widehat{\eta}_k $, obtained by cross-fitting (Definition \ref{sampling}), belongs to a shrinking neighborhood of $\eta_0$, denoted by $\mathcal{T}_N$. Uniformly over $\mathcal{T}_N$, the following mean square convergence holds:
	\begin{align}
	B_{N}:= \sqrt{N} \sup_{\eta \in \mathcal{T}_N} \| \Ep  p(X) [Y(\eta) - Y(\eta_0)] & \|= o(1),\\
	\Lambda_N:=  \sup_{\eta \in \mathcal{T}_N} (\E \| p(X) [Y(\eta) - Y(\eta_0)]  \|^2)^{1/2} &= o(1 ). 
	\end{align}
In particular, $\Lambda_N$ can be bounded as $\Lambda_N \lesssim \xi_d \sup_{\eta \in \mathcal{T}_N} (\E  (Y(\eta) - Y(\eta_0))^2)^{1/2}.$

\end{assumption}

\begin{remark}[Sufficient Conditions for Assumption \ref{ass:smallbias}]
\label{rm:rate}
Assumption \ref{ass:smallbias} is stated in a high-level form in order to accommodate various machine learning estimators. We demonstrate the plausibility of Assumption \ref{ass:smallbias}  for a high-dimensional sparse model for Example \ref{MD} in Appendix B, adapting the work of   \cite{Program}. Furthermore, one can also use deep neural networks (\cite{Schmidt}, \cite{Farrell}), random forest in small (\cite{WagerWalther}) dimensions and high (\cite{fastRF}) dimensions with sparsity structure. 

\end{remark}

\subsection{Pointwise Limit Theory}

In this section, we establish pointwise asymptotic properties for the Orthogonal Estimator. Our first result concerns with mean square convergence rate and pointwise linearization. 

\begin{lemma}[$L_2$ Rate and Pointwise Linearization]
\label{lem:pointwise}
 Let  Assumptions \ref{ass:identification}-\ref{ass:smallbias}  hold. Then, the following statements hold:	
 \begin{enumerate}
		\item[(a)] The $\ell_2$-norm of the estimation error is bounded as: $$ \|\widehat{\beta}-\beta_0 \|_{2}  \lesssim_{P} \sqrt{ \dfrac{d}{N} } + \big[ \sqrt{\dfrac{d}{N}} l_d r_d  \wedge \xi_d r_d /\sqrt{N} \big], $$ which implies a bound on the mean squared error of the estimate $p(x)'\widehat{\beta}$ of the pseudo-target function $p(x)'\beta_0$: $$ (\EN (p(X_i)' (\widehat{\beta} - \beta_0))^2)^{1/2}  \lesssim_{P}  \sqrt{ \dfrac{d}{N} } + \big[\sqrt{\dfrac{d}{N}}l_d r_d  \wedge \xi_d r_d /\sqrt{N} \big]. $$
		        \item[(b)] For any $\alpha \in \mathcal{S}^{d-1} := \{ \alpha \in \mathbb{R}^d: \| \alpha \| = 1\}$, the estimator $\widehat{\beta}$ is approximately linear:
		         \begin{align*}
        \sqrt{N} \alpha' (\widehat{\beta} - \beta)  = \alpha' Q^{-1}  \GN p(X_i) (U_i + r_g(X_i)) + R_{1,N} (\alpha),
        \end{align*}
        where the remainder term $ R_{1, N} (\alpha)$ is bounded as  $$R_{1,N} (\alpha)  \lesssim_{P}  B_N+ \Lambda_N+
 \sqrt{\dfrac{\xi_d^2 \log N}{N} } \bigg( 1 + \min \bigg\{ l_d r_d \sqrt{d} ,  \xi_d r_d \bigg\} \bigg) .$$ 
 \end{enumerate}
\end{lemma}

Under small bias condition, Lemma \ref{lem:pointwise} states that the Orthogonal  Estimator converges at the oracle rate and achieves oracle asymptotic linearity representation, where oracle knows the true value of the nuisance parameter $\eta_0$.

\begin{theorem}[Pointwise Normality of the Orthogonal Estimator] 
 Suppose  Assumptions \ref{ass:identification}-\ref{ass:smallbias}  hold. In addition, suppose $ (\xi_d^2 \log N/N)^{1/2} \cdot ( 1 + l_d r_d \sqrt{d})=o(1)$, $1 \lesssim \underbar{$\sigma$}^2$ and the  Lindeberg condition holds: $ \sup_{x \in \mathcal{X}} \Ep [U^2 1_{|U|>M} |X=x] \rightarrow 0, \quad M \rightarrow \infty$. Then for any $\alpha \in \mathcal{S}^{d-1}$, the Orthogonal Estimator is asymptotically normal:
		\begin{align}
	\lim_{N \rightarrow \infty} \sup_{t \in \mathbb{R}} \left| \Pr \left(\dfrac{\sqrt{N} \alpha' (\widehat{\beta} - \beta_0)}{\sqrt{\alpha' \Omega \alpha }}  < t \right) - \Phi(t) \right| = 0.
	\end{align} 
	Moreover, for any $x_0=x_{0,N} \in \mathcal{X}$ the estimator $p(x_0)'\widehat{\beta}$ of the pseudo-target value $p(x_0)'\beta_0$ is asymptotically normal:
		\begin{align}
	\lim_{N \rightarrow \infty} \sup_{t \in \mathbb{R}} \left| \Pr \left(\dfrac{\sqrt{N} p(x_0)' (\widehat{\beta} - \beta_0)}{\sqrt{p(x_0)' \Omega p(x_0) }}  < t \right) - \Phi(t) \right| = 0,
	\end{align}

	and if the approximation error is negligible relative to the estimation error, namely $\sqrt{N}r_g(x_0)=o( \| \Omega^{1/2} p(x_0) \|)$, then $\widehat{g}(x)=p(x_0)'\widehat{\beta}$ is asymptotically normal:
		\begin{align}
	\lim_{N \rightarrow \infty} \sup_{t \in \mathbb{R}} \left| \Pr \left(\dfrac{\sqrt{N} (\widehat{g}(x_0) - g(x_0))}{\sqrt{p(x_0)' \Omega p(x_0) }}  < t \right) - \Phi(t) \right| = 0.
	\end{align}
	\label{thrm:OS} 
\end{theorem}

Theorem \ref{thrm:OS}  delivers the pointwise convergence in distribution of the Orthogonal  Estimator for any point $x_0$ that can depend on $N$.

\subsection{Uniform Limit Theory}

In this section, we establish uniform asymptotic properties for the  Orthogonal Estimator. Not surprising, stronger conditions are required for our results to hold when compared to the pointwise case. Let $m>2$. The following assumption controls the tails of the regression errors.

\begin{assumption}[Tail Bounds]
 \label{ass:merror}
 	There exists a constant $m>2$ such that the upper bound of the $m$'th moment of $|U|$ is bounded conditional on $X$:  $$ \sup_{x \in \mathcal{X}} \Ep [| U|^m|X = x] \lesssim 1.$$
\end{assumption}

Denote by $\alpha(x) := p(x)/\| p(x)\|$  the normalized value of basis functions vector $p(x)$. Define  Lipschitz constant  as:
$$\xi_d^L = \sup_{x,x' \in \mathcal{X}, x \neq x' } \dfrac{\|  \alpha(x) - \alpha(x') \|}{\| x- x'\|}.$$

\begin{assumption}[Basis]
\label{ass:basis}
	Basis functions are well-behaved, namely (i) $(\xi_d^L)^{2m/(m-2)} \log {N}/N \lesssim 1$  and $\log \xi_d^L \lesssim \log d$ for the same $m$ as in Assumption \ref{ass:merror}. 
\end{assumption}

\begin{assumption}[Condition for Matrix Estimation]
\label{ass:matrix}
Let $\mathcal{T}_N$ be as in Assumption \ref{ass:smallbias}. Uniformly over $\mathcal{T}_N$, the following convergence holds:
       \begin{align}
       \kappa_N^1 := \sup_{\eta \in \mathcal{T}_N} \E [  \max_{1 \leq i \leq N} |Y_i(\eta) - Y_i(\eta_0) | ] &= o(1), \\
        \kappa_N :=\sup_{\eta \in \mathcal{T}_N} (\E \max_{1 \leq i \leq N} (Y_i(\eta) - Y_i(\eta_0))^2)^{1/2} &= o(1).
	\end{align}

\end{assumption}

Lemma \ref{lem:uniform} establishes  asymptotic linearity representation uniformly over the domain $\mathcal{X}$ and  uniform convergence rate.
	
\begin{lemma}[Uniform Rate and Uniform Linearization]
    \label{lem:uniform}
  Suppose Assumptions \ref{ass:identification}-\ref{ass:basis} hold. 
  \begin{enumerate}
		\item[(a)]  The Orthogonal Estimator is approximately linear uniformly over  $\mathcal{X}$:

		\begin{align*} 		 | \sqrt{N} \alpha(x)' (\widehat{\beta} - \beta_0) - \alpha'(x) Q^{-1} \GN p(X_i)[U_i + r_g(X_i)] | \leq R_{1,N}(\alpha(x)),
		 \end{align*} where $ R_{1,N}(\alpha(x))$, summarizing the impact of unknown design and the first stage error, obeys \begin{align*} \sup_{x \in \mathcal{X}} R_{1,N}(\alpha(x)) \lesssim_{P} 
		B_N+ \Lambda_N +\sqrt{\dfrac{\xi_d^2 \log N}{N}}(N^{1/m}  \sqrt{\log N}+ \sqrt{d} l_d r_d )   =: \bar{R}_{1N} \end{align*} uniformly over $x \in \mathcal{X}$. Moreover, $$ | \sqrt{N} \alpha(x)' (\widehat{\beta} - \beta_0) - \alpha'(x) Q^{-1} \GN p(X_i) U_i | \leq R_{1,N}(\alpha(x)) + R_{2,N}(\alpha(x)),$$ where $R_{2,N}(\alpha(x))$, summarizing the impact of misspecification error, obeys  $$R_{2,N}(\alpha(x)) \lesssim_{P} \sqrt{\log N} l_d r_d=: \bar{R}_{2N} $$ uniformly over $x \in \mathcal{X}$.
	\item[(b)] The estimator $p(x)'\widehat{\beta}$ of the pseudo-target $p(x)'\beta_0$ converges uniformly over $\mathcal{X}$: $$\sup_{x \in \mathcal{X}} | p(x)' (\widehat{\beta} - \beta_0) | \lesssim_{P} \dfrac{\xi_d}{\sqrt{N}} [\sqrt{\log N} + \bar{R}_{1N} + \bar{R}_{2N}].$$
	\end{enumerate}
\end{lemma}

\begin{remark}[Optimal Uniform Rate in Holder class]
Suppose the true function $g(x)$ belongs to the Holder smoothness class of order $k$, denoted by  $\Sigma_k(\mathcal{X})$. Suppose $l_d r_d \lesssim d^{-k/\text{dim}(X)}, \xi_d \lesssim \sqrt{d}, \bar{R}_{1N} + \bar{R}_{2N} \lesssim (\log N)^{1/2}$.  Then, the optimal number $d$ of technical regressors that comprise a vector $p(x)$ obeys $$d \asymp (\log N/N )^{-\text{dim}(X)/(2k+\text{dim}(X))}.$$ This choice of $d$ yields the optimal uniform rate: $$ \sup_{x \in \mathcal{X}} | \widehat{g}(x) - g(x)| \lesssim_{P} \Big( \dfrac{\log N}{N} \Big)^{k/(2k+\text{dim}(X))} . $$
\end{remark}

Theorem \ref{thrm:ult} establishes a strong approximation of  the Orthogonal Estimator's series process by a sequence of zero-mean Gaussian processes.

\begin{theorem}[Strong Approximation by a Gaussian Process] 	 
\label{thrm:ult}
Suppose Assumptions \ref{ass:identification}-\ref{ass:basis} hold with $m \geq 3$.  Let $\bar{a}_N$ be a sequence of positive numbers s.t. $\bar{a}_N^{-1}=o(1)$. Suppose (i)  $\bar{R}_{1N}=o(\bar{a}^{-1}_N)$, (ii) 
$1 \lesssim \underbar{$\sigma$}^2$ and (iii) $d^4 \bar{a}_N^6 \xi_d^2 (1 +l_d^3 r_d^3)^2 \log^2 N /N=o(1)$. Then, for some
$\mathcal{N}_d \sim N(0, I_d)$,  the following statement holds for $e(x)=\Omega^{1/2}p(x)$
\begin{align}
       \sqrt{N}\dfrac{p(x)'(\widehat{\beta} -\beta_0)}{\| e(x) \|}&=_d \dfrac{e(x)}{\| e(x) \|}\mathcal{N}_{d}+ o_{P} (\bar{a}_N^{-1})  \text{ in } \ell^{\infty}(\mathcal{X}).  
\end{align}
In addition, if  $\sup_{x \in \mathcal{X}} \sqrt{N} |r(x)|/ \| e(x) \|=o(\bar{a}_N^{-1})$, then, for $\widehat{g}(x)=p(x)'\widehat{\beta}$
\begin{align}
    \sqrt{N}\dfrac{\widehat{g}(x) -g(x)}{\| e(x) \|}&=_d \dfrac{e(x)}{\| e(x) \|}\mathcal{N}_{d} + o_{P} (\bar{a}_N^{-1}) \text{ in } \ell^{\infty}(\mathcal{X}).  
\end{align}

\end{theorem}



Theorem \ref{lem:matrix} establishes the convergence rate for the covariance matrix estimator $\widehat{\Omega}$. It extends Theorem 4.6 of \cite{NewOls}, allowing the signal $Y(\eta_0)$ to depend upon an unknown nuisance parameter $\eta_0$.

\begin{theorem}[Matrices Estimation]
\label{lem:matrix}
Suppose Assumptions \ref{ass:identification}-\ref{ass:matrix} hold.  In addition, suppose (i)  $\bar{R}_{1N} + \bar{R}_{2N} \lesssim \sqrt{\log N}$, (ii) $(N^{1/m}+l_d r_d) (\sqrt{\frac{\xi_d^2 \log N}{N}} + \kappa_N^1  )= o(1)$. Then, the estimator $\widehat{\Omega}$, defined in \eqref{eq:omegahat},  converges in the matrix operator norm with the following rate:
\begin{align*}
\| \widehat{\Omega} - \Omega \| &\lesssim_{P} \bigg(N^{1/m} +l_d r_d \bigg) \cdot \bigg(\sqrt{\frac{\xi_d^2 \log N}{N}} +\kappa_N^1 \bigg)+ \kappa_N^2 =:a_N.
\end{align*}
Moreover, for $ \sigma_N(x) = \sqrt{ p(x)' \Omega p(x)/N}$ and  $ \widehat{\sigma}_N(x) = \sqrt{ p(x)' \widehat{\Omega} p(x)/N}$, the following bound holds:
\begin{align}
\label{eq:lemma51}
\sup_{x \in \mathcal{X}}  \bigg|  \dfrac{\widehat{\sigma}_N(x)}{ \sigma_N(x)}-1 \bigg| \lesssim_{P} \| \widehat{\Omega} - \Omega \|  \lesssim_{P}  a_N.
\end{align}
%
\end{theorem}

Theorem \ref{cor:bootstrap} establishes validity of empirical (Gaussian) bootstrap.

\begin{theorem}[Validity of Gaussian Bootstrap]
 \label{cor:bootstrap}
 Suppose Assumptions of Theorem \ref{thrm:ult} hold with $\bar{a}_N = \log N$ and
 Assumptions of Theorem \ref{lem:matrix} hold with  $a_N=O(N^{-b})$ for some $b>0$.  
In addition, suppose (i) $1 \lesssim \underbar{$\sigma$}^2$ and there exists a sequence $\xi_N'$ obeying $1 \lesssim \xi_N' \lesssim \| p(x) \|$ uniformly for all $x \in \mathcal{X}$ so that $\| p(x) - p(x') \|/\xi_N' \leq L_N \| x - x' \|$, where $\log L_N \lesssim \log N$. Let $\mathcal{N}^b_{d}$ be a bootstrap draw from $N(0, I_d)$ and $P^{*}$ be a probability conditional on data $(V_i)_{i=1}^N$.  Then, the following approximation holds uniformly in $\ell^{\infty}(\mathcal{X})$:
 \begin{align}
 \label{eq:nb1}
    \dfrac{p(x)'\widehat{\Omega}^{1/2} }{\| \widehat{\Omega}^{1/2} p(x) \|}   \mathcal{N}^b_{d}  =^d   \dfrac{p(x)'\Omega^{1/2} }{\| {\Omega}^{1/2} p(x) \|}\mathcal{N}^b_{d} + o_{P^{*}}(\log^{-1}N).
 \end{align}

 \end{theorem}

Theorem \ref{cor:confbands} establishes the asymptotic validity of uniform confidence bands. It also shows that the uniform width of the bands is of the same order as the uniform rate of
convergence.
 
\begin{theorem}[Validity of Uniform  Confidence Bands]
\label{cor:confbands}
Let Assumptions \ref{ass:identification}-\ref{ass:matrix} hold  with $m \geq 4$. In addition, suppose (i) $\bar{R}_{1N} + \bar{R}_{2N} \lesssim \log^{-1/2} N$, (ii) $\xi_d \log^2 N/N^{1/2-1/m}=o(1)$, (iii) $1 \lesssim \underbar{$\sigma$}^2$, (iv) $\sup_{x \in \mathcal{X}} \sqrt{N} |r_g(x)|/\|p(x) \|=o(\log^{-1/2} N)$, (v) $d^4 \xi_d^2 (1+l_d^3r_d^3)^2 \log^5 N/N=o(1)$.  Then,
\begin{align*}
\Pr \big( \sup_{x \in \mathcal{X}} | t_N(x) | \leq c_N (1-\alpha) \big) = 1 - \alpha + o (1)
\end{align*}
for $t_N$ defined in \eqref{eq:tN}. As a consequence, the confidence bands defined in  \eqref{eq:bootconfbands} satisfy
\begin{align*}
\Pr \big( g(x) \in [ \underline{i}(x), \overline{i}(x) ]  \quad  \forall x \in \mathcal{X}  \big) = 1 - \alpha + o (1).
\end{align*}
The width of the confidence bands $2  c_N (1-\alpha) \widehat{\sigma}_N(x) $ obeys
\begin{align*} 
2  c_N (1-\alpha) \widehat{\sigma}_N(x) \lesssim_{P} \sigma_N(x)\sqrt{ \log N}  \lesssim \sqrt{\dfrac{ \xi_d^2 \log N}{N }}
\end{align*}
uniformly over $x \in \mathcal{X}$.
\end{theorem}

\section{Applications}
\label{sec:apps}
In this section, we apply the results of Section \ref{sec:asymp} for empirically relevant settings, described in Examples \ref{CATEcont}-\ref{CAPD}.

\subsection{Continuous Treatment Effects}
\label{ex:conttreatment}
Consider the setup of Example \ref{CATEcont}. We provide sufficient low-level conditions on the first-stage nuisance parameter $\eta_0(x,z)$ such that the pointwise and uniform Gaussian approximations established in Section \ref{sec:asymp} hold.

Assume that there exists a sequence of numbers $\epsilon_N = o(1)$ and sequences of neighborhoods $S_N$ of  $s_0(\cdot|\cdot) $, $M_N$ of $\mu_0(\cdot,\cdot)$, $W_N$ of $w_0(\cdot)$ such that the first-stage estimate $$\{\widehat{s}(\cdot|\cdot), \widehat{\mu}(\cdot,\cdot),\widehat{w}(\cdot)\} $$ belongs to the set $\{S_N \bigtimes M_N \bigtimes  W_N \}$ w.p. at least $1-\epsilon_N$. The shrinkage speed of this set is measured by the following statistical rates:
\begin{align*}
\textbf{s}_{N,q} &:= \sup_{ s \in S_N} (\E ( s(X|Z) - s_0(X|Z))^q)^{1/q}, \\
 \textbf{m}_{N,q} &:= \sup_{\mu \in M_N } (\E ( \mu(X,Z) - \mu_0(X,Z))^q)^{1/q}, \\
 \textbf{w}_{N,q} &:= \sup_{ w \in W_N} (\E ( w(X) - w_0(X))^q)^{1/q},
\end{align*}
where $q$ is either a positive number $q \geq 2$ or $q=\infty$, which corresponds to $\ell_{\infty}$-norm (sup-norm). For $q=2$,  we will refer to $\textbf{s}_{N}:=\textbf{s}_{N,2}$ as the conditional density mean square rate, $\textbf{m}_{N}:=\textbf{m}_{N,2}$ as the regression function mean square rate, and $ \textbf{w}_{N}=\textbf{w}_{N,2}$ as the marginal density mean square rate.

\begin{assumption}[First-Stage Rate of CTE]
\label{ass:fsratecont} 
Assume that mean square rates $\textbf{s}_N, \textbf{m}_N,  \textbf{w}_N$ decay sufficiently fast: $$\xi_d (\textbf{s}_N \vee  \textbf{m}_N \vee  \textbf{w}_N)  = o(1)$$ 
and one of two alternative conditions hold. (1) Bounded basis. There exist $\bar{B}< \infty$ so that $\sup_{x \in \mathcal{X}} \|p(x)\|_{\infty} \leq \bar{B}$ and $ \sqrt{N}\sqrt{d} (\textbf{m}_N \textbf{s}_N \vee \textbf{m}_N \textbf{w}_N )= o(1)$. (2) Unbounded basis. There exist $\kappa, \gamma \in [1, \infty], \quad 1/\kappa + 1/\gamma=1$ so that   $ \sqrt{N}\sqrt{d} (\textbf{m}_{N,2\gamma} \textbf{s}_{N,2\kappa} \vee \textbf{m}_{N,2\gamma} \textbf{w}_{N,2\kappa} )= o(1)$. Furthermore, there exists a constant $\bar{\mathcal{C}}< \infty$ such that
$ \sup_{\mu \in M_N } \sup_{(x,z) \in \mathcal{X} \bigtimes \mathcal{Z}}   | \mu (x,z) | < \bar{\mathcal{C}}$,  $ \sup_{s \in S_N } \sup_{(x,z) \in \mathcal{X} \bigtimes \mathcal{Z}} s^{-1}(x|z)   < \bar{\mathcal{C}}$, $ \sup_{s \in S_N } \sup_{(x,z) \in \mathcal{X} \bigtimes \mathcal{Z}} s(x|z)   < \bar{\mathcal{C}}$, $$ \sup_{w \in W_N }  \sup_{x \in \mathcal{X}}   w^{-1}(x)  < \bar{\mathcal{C}}.$$
\end{assumption}
\begin{lemma}[Orthogonal Signal for CTE]
\label{thrm:conttreat}
Suppose Assumption \ref{ass:fsratecont} holds. Then, the orthogonal signal $Y(\eta)$, defined in \eqref{eq:contoutcome}, satisfies Assumption \ref{ass:smallbias}. 
\end{lemma}
As discussed in the Introduction, the  estimator $\widehat{\beta}^{\dagger}$  takes the form
\begin{align}
\label{eq:betadagger}
\widehat{\beta}^{\dagger} = \widehat{Q}^{-1} \dfrac{1}{N} \sum_{i=1}^N p(X_i) Y^{\dagger}_i(\widehat{\eta}),
\end{align}
where $Y^{\dagger}_i(\widehat{\eta})$ is as in \eqref{eq:feassignal}. Since $\mu$ enters linearly in \eqref{eq:contoutcome}, the error term $Y^{\dagger}_i(\widehat{\eta})-  Y_i(\widehat{\eta})$  does not introduce bias in $\widehat{\beta}^{\dagger}$, but introduces an extra term in asymptotic variance, which we characterize below. For a function $\mu(x,z)$, define its demeaned analog $\mu^0(x,z)$ as
\begin{align*}
	\mu^{0}(x,z) := \mu(x,z) - \E \mu(x,Z)
\end{align*}
and the kernel function as 
\begin{align}
\label{eq:tau}
\tau (v_1, v_2; \mu) =\dfrac{1}{2} (p(x_1) \mu^0(x_1, z_2) +  p(x_2) \mu^0(x_2, z_1)),
\end{align}
where $v_1=(x_1,z_1)$ and $v_2=(x_2,z_2)$. Finally, define the Hajek projection of $\tau (v_1, v_2; \mu)$ as
\begin{align}
\label{eq:tau1}
\tau_1(v;\mu):=  \E \tau (v, V; \mu) =\E p(X) \mu^0(X, z) =  \tau_1(z;\mu).
\end{align}
Decompose the Orthogonal Estimator $\widehat{\beta}^{\dagger}$  into the sum of its infeasible analog $\widehat{\beta}$ and a mean-zero $U$-statistic with the kernel function $\tau(v_1, v_2;\mu)$
\begin{align*}
\widehat{\beta}^{\dagger} &=:\widehat{\beta} + \widehat{Q}^{-1} \dfrac{1}{K} \sum_{k=1}^K \dfrac{1}{n(n-1)} \sum_{i,j \in J_k, i \neq j} \tau (V_i, V_j; \widehat{\mu}).
\end{align*}

Theorem \ref{thrm:cte:pointwise} establishes pointwise asymptotic normality of the Orthogonal Estimator.  Its asymptotic variance  is $$
    \Omega^{\dagger} = Q^{-1} \Sigma^{\dagger}  Q^{-1},$$ where $\Sigma^{\dagger}$ is
\begin{align}
    \label{eq:covmat:cte}
    \Sigma^{\dagger} = \E [p(X) (U + r_g(X)) + \tau_1(Z;\mu_0)][p(X) (U + r_g(X)) + \tau_1(Z;\mu_0)]'.
\end{align}


 \begin{theorem}[Pointwise Asymptotic Theory for Continuous Treatment Effects]
 \label{thrm:cte:pointwise}
 Suppose Assumptions \ref{ass:identification}-\ref{ass:boundederror} and \ref{ass:fsratecont} hold. Let  $C_{\min}^{\dagger}>0$ be an absolute constant. In addition, suppose (i) $ (\xi_d^2 \log N/N)^{1/2} \cdot ( 1 + l_d r_d \sqrt{d})=o(1)$, (ii) $\sqrt{d\xi_d^4 \log^3 N/N^2} +  \sqrt{d}\textbf{m}_N=o(1)$, $\xi_d l_d r_d = o(N^{1/2})$, (iii) $\min \eig \Omega^{\dagger} \geq C_{\min}^{\dagger}$,  and (iv)  the  Lindeberg condition holds: $ \sup_{x \in \mathcal{X}} \Ep [U^2 1_{|U|>M} |X=x] \rightarrow 0, \quad M \rightarrow \infty$.  Then, for any   $x_0 = x_{0,N} \in \mathcal{X}$ the estimator $p(x_0)'\widehat{\beta}^{\dagger}$ of the pseudo-target $p(x_0)'\beta_0$ is asymptotically normal:
		\begin{align}
	\lim_{N \rightarrow \infty} \sup_{t \in \mathbb{R}} \left| \Pr \left(\dfrac{\sqrt{N}  p(x_0)' (\widehat{\beta}^{\dagger} - \beta_0)}{\sqrt{ p(x_0)' \Omega^{\dagger} p(x_0)
	}}  < t \right) - \Phi(t) \right| = 0,
	\end{align} 
	and if the approximation error is negligible relative to the estimation error, namely $\sqrt{N}r_g(x_0)=o( \| (\Omega^{\dagger})^{1/2} p(x_0) \|$, then $\widehat{g}^{\dagger} (x_0) = p(x_0)'\widehat{\beta}^{\dagger}$ is asymptotically normal:
	\begin{align}
	\lim_{N \rightarrow \infty} \sup_{t \in \mathbb{R}} \left| \Pr \left(\dfrac{\sqrt{N}   ( \widehat{g}^{\dagger} (x_0) - g(x_0))}{\sqrt{ p(x_0)' \Omega^{\dagger} p(x_0)
	}}  < t \right) - \Phi(t) \right| = 0.
	\end{align}
\end{theorem}

\begin{theorem}[Uniform Asymptotic Theory for Continuous Treatment Effects]
Suppose Assumptions \ref{ass:identification}-\ref{ass:basis} hold with $\xi_d^L/C_{\text{min}} \geq e^2/16 \vee e$.  In addition, suppose 
$ \sqrt{ d \xi_d^2 \log N/N} =o(1)$ and $\xi_d \log^2 N/N=o(1)$. Then, the following statements hold.
 \begin{enumerate}
		\item[(a)]  The Orthogonal Estimator converges uniformly over $\mathcal{X}$:
		\begin{align*}
		    \sup_{x \in \mathcal{X}}| p(x)' (\widehat{\beta}^{\dagger} - \beta_0)| &\lesssim_{P} \dfrac{\xi_d}{\sqrt{N}} \bigg(\sqrt{\log N} +\bar{R}_{1N} + \bar{R}_{2N}  + \sqrt{d} \textbf{m}_N  \bigg),
		\end{align*}
		where $\bar{R}_{1N}$ and $\bar{R}_{2N}$ are as in Lemma \ref{lem:uniform}.
		\item[(b)] Suppose Assumption \ref{ass:merror}-\ref{ass:basis} hold with $m \geq 3$. Then, the statement of  Theorem \ref{thrm:ult} holds for $\widehat{\beta}^{\dagger}$ in place of $\widehat{\beta}$,
		$\Omega^{\dagger}$ in place of $\Omega$		and $e(x):=(\Omega^{\dagger})^{1/2} p(x)$. 
\end{enumerate}

 \label{thrm:cte:uniform}
 
\end{theorem}

Theorem \ref{thrm:cte:uniform} establishes uniform convergence rate and strong Gaussian approximation for Continuous Treatment Effects. We compare our Theorems \ref{thrm:cte:pointwise} and  \ref{thrm:cte:uniform} to \cite{Kennedy} who introduced the doubly robust score for the average potential outcome. First, by virtue of sample splitting, we do not impose any complexity requirements on the estimator of the first-stage nuisance parameters. In particular, unlike Theorem 2 of \cite{Kennedy}, we do  not require the function class  containing $\widehat{\mu}(\cdot, \cdot), \widehat{s}(\cdot|\cdot), \widehat{w}(\cdot)$ to have bounded uniform entropy integrals. As a result, our method accommodates a wide class of modern regularized methods to be employed to estimate first-stage parameters as discussed in Remark \ref{rm:rate}. Second, Theorem \ref{thrm:cte:uniform} offers uniform asymptotic statements, while 
 while the work by \cite{Kennedy} offers only  pointwise results. Finally, \cite{Kennedy} employs local linear regression in the second stage and delivers convergence at a $\sqrt{N h^{\text{dim}(X)}}$ rate, where $h=h(N)$ is the kernel bandwidth. In contrast, our method delivers $\sqrt{N}$- approximation for the normalized projection $p(x)' \beta_0/\| p(x) \|$.

\subsection{Conditional Average Treatment Effect}
\label{ex:CATE}
Consider the setup of Example \ref{CATE}. We provide sufficient low-level conditions on the  regression functions $\mu_0(1,\cdot), \mu_0(0,\cdot)$ and the propensity score $s_0(\cdot)$ such that the pointwise and uniform Gaussian approximations  of Section \ref{sec:asymp} hold. 
\begin{assumption}[Strong Overlap]
	\label{ass:overlap} 
	\begin{enumerate}
	\item[A] The propensity score is bounded above and below. Specifically, there exists $\bar{\pi}_0 > 0$ such that  $ 0< \bar{\pi}_0 < s_0(z) < 1-\bar{\pi}_0<1 $ for any $z \in \mathcal{Z}$. 
	\item[B] The propensity score is bounded below. Specifically, there exists $\bar{\pi}_0 > 0$ such that $0< \bar{\pi}_0 < s_0(z) <1 $ for any $z \in \mathcal{Z}$. 
	\end{enumerate}
\end{assumption}
In the context of Example \ref{CATE} Assumption \ref{ass:overlap}(a) ensures that the probability of assignment to the treatment and control group is bounded away from zero. In the context of Example \ref{MD} Assumption \ref{ass:overlap}(b) ensures that the probability of observing the response $Y^{*}$ is bounded away from zero. 

Given the true functions $s_0(\cdot), \mu_0(1,\cdot),  \mu_0(0,\cdot)$ and  sequences of shrinking neighborhoods  $S_N$ of $s_0(\cdot)$,  $M_N$ of  $\mu_0(1,\cdot)$ and of  $\mu_0(0,\cdot)$ define the following rates:
	           $$ \textbf{s}_{N,q} := \sup_{s \in \mathcal{S}_N} ( \Ep   ( s(Z) - s_0(Z))^q)^{1/q},$$  
           $$  \textbf{m}_{N,q} := \sup_{\mu  \in M_N}  ( \Ep   ( \mu (1,Z) -\mu_0(1,Z))^q)^{1/q} \vee \sup_{\mu  \in M_N}  ( \Ep   ( \mu (0,Z) -\mu_0(0,Z))^q)^{1/q}, $$ 
where $q\geq 2$ is either a positive number or $q=\infty$.  We will refer to $\textbf{s}_N:=\textbf{s}_{N,2}$ as the propensity score mean square  rate and $\textbf{m}_N:=\textbf{m}_{N,2}$ as the regression function mean square  rate.
\begin{assumption}[First-Stage Rate for CATE] 
\label{ass:fsrate} 
Assume that there exists a sequence of numbers $\epsilon_N = o(1)$ and sequences of neighborhoods $S_N$ of  $s_0(\cdot) $, $M_N$ of $\mu_0(1,\cdot)$ and  $\mu_0(0,\cdot)$  such that the first-stage estimate $\{\widehat{s}(\cdot), \widehat{\mu}(1,\cdot),\widehat{\mu}(0,\cdot)\} $ belongs to the set $\{S_N \bigtimes M_N \bigtimes M_N\}$ w.p. at least $1-\epsilon_N$.  Assume that mean square rates $\textbf{s}_N, \textbf{m}_N$ decay sufficiently fast: $$\xi_d (\textbf{s}_N \vee  \textbf{m}_N)  = o(1)$$ 
and one of two alternative conditions hold. (1) Bounded basis. There exist $\bar{B}< \infty$ so that $\sup_{x \in \mathcal{X}} \|p(x)\|_{\infty} \leq \bar{B}$ and $ \sqrt{N}\sqrt{d} \textbf{m}_N \textbf{s}_N = o(1)$. (2) Unbounded basis. There exist $\kappa, \gamma \in [1, \infty], \quad 1/\kappa + 1/\gamma=1$ so that   $ \sqrt{N}\sqrt{d} \textbf{m}_{N,2\gamma} \textbf{s}_{N,2\kappa} = o(1)$. 
	Finally, the functions in $S_N$ and $M_N$ are bounded uniformly over their domain:   $$ \sup_{\mu \in M_N } \sup_{z \in \mathcal{Z}}  \sup_{d \in \{1,0\}} | \mu (d,z) | \vee \sup_{s \in S_N } \sup_{z \in \mathcal{Z}}  s^{-1}(z)  < \bar{\mathcal{C}} < \infty.$$
\end{assumption}

\begin{corollary}[Asymptotic Theory for Conditional Average Treatment Effect]
\label{thrm:CATE}
Under Assumptions \ref{ass:overlap}[A] and \ref{ass:fsrate}, the orthogonal signal  $Y(\eta)$, given by Equation \eqref{eq:catesig}, satisfies Assumption \ref{ass:smallbias}. As a result, the statements of Theorems \ref{thrm:OS}-\ref{cor:confbands} hold for Conditional Average Treatment Effect.

\end{corollary}

\subsection{Regression Function with Partially Missing Outcome}
\label{ex:MD}
Consider the setup of Example \ref{MD}. Define the regression function rate $\textbf{m}_{N,q}$ as
\begin{align*}
    \textbf{m}_{N,q} := \sup_{\mu  \in M_N}  ( \Ep   ( \mu (Z) -\mu_0(Z))^q)^{1/q}
\end{align*}
where $q\geq 2$ is either a positive number or $q=\infty$ and let $\textbf{s}_{N,q}$ be as defined in Section \ref{ex:CATE}.  We show that pointwise and uniform Gaussian approximations  of Section \ref{sec:asymp} hold for Regression Function with Partially Missing Outcome.



\begin{corollary}[Asymptotic Theory for Regression Function with Partially Missing Outcome]
\label{thrm:MD}
Suppose Assumptions \ref{ass:overlap} [B] and Assumption \ref{ass:fsrate} hold for $\textbf{s}_{N,q}$ defined in Example \ref{ex:CATE} and $\textbf{m}_{N,q}$ redefined above. Then,  the orthogonal signal $Y(\eta)$, given by Equation \eqref{eq:mdsig}, satisfies Assumption \ref{ass:smallbias}. Then, the statements of Theorems \ref{thrm:OS}-\ref{cor:confbands} hold for the regression function with Partially Missing Outcome. 
\end{corollary}
We give an example of low-level sparse conditions for the propensity score $s(\cdot)$ and regression $\mu(\cdot)$ in Appendix B.

\subsection{Conditional Average Partial Derivative}
\label{ex:CAPD}
Consider the setup of Example \ref{CAPD}. We provide sufficient low-level conditions on the  regression functions $s(d|z), \mu(d,z)$ such that the pointwise and uniform Gaussian approximations  of Section \ref{sec:asymp} hold.

Given a true function $s_0(\cdot|\cdot), \mu_0(\cdot,\cdot)$, let $S_N, M_N$ be a sequence of shrinking neighborhoods of $s_0(\cdot|\cdot)$ and $\mu_0(\cdot,\cdot)$ constrained as follows:
	           $$ \textbf{s}_{N,q} := \sup_{s \in \mathcal{S}_N} ( \Ep   ( s(D|Z)) - s_0(D|Z))^q)^{1/q} \vee  \sup_{s \in \mathcal{S}_N} ( \Ep   ( \partial_{d} s(D|Z)) - \partial_{d} s_0(D|Z))^q)^{1/q}  $$ 
           $$  \textbf{m}_N := \sup_{\mu  \in M_N}  ( \Ep   ( \mu (D,Z) -\mu_0(D,Z))^q)^{1/q} \vee \sup_{\mu  \in M_N}  ( \Ep   ( \partial_{d} \mu (D,Z) -\partial_{d} \mu_0(D,Z))^q)^{1/q},  $$
where $q\geq 2$ is either a positive number or $q=\infty$.  We will refer to $\textbf{s}_N := \textbf{s}_{N,2}$ as mean square conditional density rate and $\textbf{m}_N:=\textbf{m}_{N,2}$ as mean square regression function rate. 
\begin{assumption}[First-Stage Rate for CAPD]
	\label{ass:fsrate3}   Assume that there exists a sequence of numbers $\epsilon_N = o(1)$ and a sequence of neighborhoods $S_N,M_N$ of functions $s_0(\cdot|\cdot)$, $\mu_0(\cdot,\cdot)$ 
	such that the first-stage estimate $\{\widehat{s} (\cdot|\cdot), \widehat{\mu} (\cdot) \} $ belongs to the set $\{S_N \bigtimes  M_N \}$ w.p. at least $1-\epsilon_N$. 
	Assume that mean square rates $\textbf{s}_N, \textbf{m}_N$ decay sufficiently fast $\xi_d (\textbf{s}_N \vee  \textbf{m}_N)  = o(1)$ and one of two alternative conditions hold. (1) Bounded basis. There exist $\bar{B}< \infty$ so that $\sup_{x \in \mathcal{X}} \|p(x)\|_{\infty} \leq \bar{B}$ and $ \sqrt{N}\sqrt{d} \textbf{m}_N \textbf{s}_N = o(1)$. (2) Unbounded basis. There exist $\kappa, \gamma \in [1, \infty], \quad 1/\kappa + 1/\gamma=1$ so that   $ \sqrt{N}\sqrt{d} \textbf{m}_{N,2\gamma} \textbf{s}_{N,2\kappa} = o(1)$.  
	The functions in $M_N$ and $S_N$ are bounded uniformly over their domain:
	$$\sup_{\mu \in M_N} \sup_{(d,z) \in \mathrm{R} \bigtimes \mathcal{Z}  }   | \mu (d,z) |\vee \sup_{s \in \mathcal{S}_N}   \sup_{ (d,z) \in  \mathrm{R} \bigtimes \mathcal{Z} }   \max\{  s (d|z) , s^{-1}(d|z) \}  < \bar{\mathcal{C}}.$$ 
	
\end{assumption}

\begin{corollary}[Asymptotic Theory for Conditional Average Partial Derivative]
\label{thrm:CAPD}
Suppose Assumption \ref{ass:fsrate3} holds. Then,  the orthogonal signal $Y(\eta)$, given by Equation \eqref{eq:mdsig}, satisfies Assumption \ref{ass:smallbias}. Then, the statements of Theorems \ref{thrm:OS}-\ref{cor:confbands} hold for Conditional Average Partial Derivative.

\end{corollary}

\section{Empirical Application: Inference on Conditional
Average Elasticity}
\label{sec:empapp}
To show the immediate usefulness of the method, we consider an important problem of inference on structural derivatives.  We apply our methods to study the household demand for gasoline, a question studied in \cite{HausmanNewey},  \cite{Stoker}, \cite{YatchewNo} and  \cite{Blundell}. These papers estimated the demand function and the average price elasticity for various demographic groups. The dependence of the price elasticity on the household income was highlighted in \cite{Blundell}, who have estimated the elasticity by low, middle, and high-income groups and found its relationship with income to be non-monotonic. To gain more insight into this question, we estimate the average price elasticity as a function of income and provide simultaneous confidence bands for it.

The data for our analysis are the same as in  \cite{YatchewNo}, coming from the National Private Vehicle Use Survey, conducted by Statistics  Canada between October 1994 and September 1996. The data set is based on fuel purchase diaries and contains detailed information about fuel prices, fuel consumption patterns, vehicles, and demographic characteristics.  We employ the same selection procedure as in   \cite{YatchewNo} and \cite{CherChetQuant}, focusing on a sample of the households with non-zero licensed drivers, vehicles, and distance driven which leaves us with 5001 observations.

The object of interest is the average predicted percentage change in the demand due to a unit percentage change in the price, holding the observed demographic characteristics fixed, conditional on income. In the context of Example \ref{CAPD}, this corresponds to the conditional average derivative \begin{align*} g(x) &= \Ep [\partial_{d} \mu(D,Z) | X=x], \\
\mu(d,z) &= \Ep [Y|D=d,Z=z], 
\end{align*}
where $Y$ is the logarithm of  gas consumption, $D$ is  the logarithm of  price per liter, $X$ is log income, and $Z$ are the observed subject characteristics such as household size and composition, distance driven, and the type of fuel usage, including income. We use the orthogonal signal $Y(\eta)$ of \eqref{eq:capdsig}.

The choice of the estimators in the first and the second stages is as follows. To estimate the conditional expectation function $  \mu(d,z) $ and its partial derivative  $ \partial_{d} \mu(d,z) $, we consider a linear model 
$$ \mu(d,z) = b(d,z)' \omega,$$
where the basis function $b(d,z)$ includes price, price squared, income, income squared, their interactions with 28  time, geographical, and household composition dummies. All in all, we have 91 explanatory variables. We estimate the coefficient vector $ \omega$  using  Lasso with the penalty level chosen as in \cite{BCH} and plug the estimate $ \widehat{\omega}$ into the expression for the derivative:
$$ \partial_{d} \mu(d,z)  = \partial_{d} b(d,z)' \omega $$
to estimate $\partial_{d} \mu(d,z)$.

To estimate the conditional density $ s(d|z) $, we consider a model:
$$ D = l(Z) + U, \quad U \perp Z,$$
where $l(z) = \Ep [D|Z=z]$ is the conditional expectation of price variable  $D$ given covariates $Z$, and $U$ is an independent  continuously distributed shock with univariate density $\phi(\cdot)$. Under this assumption, the log density $\partial_{d} \log s(d|z)$ equals to $$\partial_{d} \log s(d|z) = \dfrac{\phi'(d-l(z) )}{\phi(d-l(z))}.$$  We estimate $\phi(u): \mathbb{R} \rightarrow \mathbb{R}^{+}$ by an adaptive kernel density estimator of \cite{PortnoyKoenker} with Silverman choice of bandwidth. Finally, we plug in the estimates of $  \mu(d,z)$, $ \partial_{d} \mu(d,z)$ , $s(d|z)$ into the Equation \eqref{eq:signal} to get an estimate of $\widehat{Y}$ and estimate $g(x)$ by least squares series regression of $\widehat{Y}$ on $X$. We try both polynomial basis function and B-splines to construct technical regressors.

Figures \ref{fig:hd} and \ref{fig:hdhh}  report the estimate of the target function (the black line), the pointwise (the dashed blue lines) and the uniform confidence (the solid blue lines) bands for the average price elasticity conditional on income, where the significance level $\alpha = 0.05$. The panels of Figure  \ref{fig:hd} correspond to different choices of the first-stage estimates of the nuisance functions $\mu(d,z)$ and $s(d|z)$ and dictionaries of technical regressors. The panels of Figure  \ref{fig:hdhh} correspond to the subsamples of large and small households and to different choices of the dictionaries.

The summary of our empirical findings based on Figure \ref{fig:hd} and \ref{fig:hdhh}  is as follows. We find the elasticity to be in the range $(-1,0)$ and significant for majority of income levels. The estimates based on $B$-splines (Figures \ref{fig:lassokernelsp}, \ref{fig:forestkernelsp}) are monotonically increasing in income, which is intuitive. The estimates based on polynomial functions are non-monotonic in income. For every algorithm in Figure \ref{fig:hd} we cannot reject the null hypothesis of constant price elasticity for all income levels: for each estimation procedure, the uniform confidence bands contain the constant function.  Figure \ref{fig:hdhh} shows the average price elasticity conditional on income for small and large households.  For the majority of income levels, we find large households to be more price elastic than the small ones, but the difference is not significant at any income level.

 To demonstrate the relevance of demographic data $Z$ in the first stage estimation, we also show the average predicted effect of the price change on the gasoline consumption (in logs), without accounting for the covariates in the first stage. In particular, this effect equals to $\Ep [\partial_{d} \mu(D,X)|X=x]$, where $\mu(d,x) = \Ep [Y|D=d,X=x]$ is the conditional expectation of gas consumption given income and price.  Figure \ref{fig:noz} shows this predictive effect, approximated by the polynomials of degree $d \in \{1,2\}$, conditional on income. By contrast to the results in Figure   \ref{fig:hd}, the slope of the polynomial of degree $d=1$ has a negative relationship between income and price elasticity, which present evidence that the demographics strongly confound the relationship between income and price elasticity.

\FloatBarrier

\begin{figure}
	\begin{subfigure}[b]{.4\textwidth}
		\includegraphics[width=\textwidth]{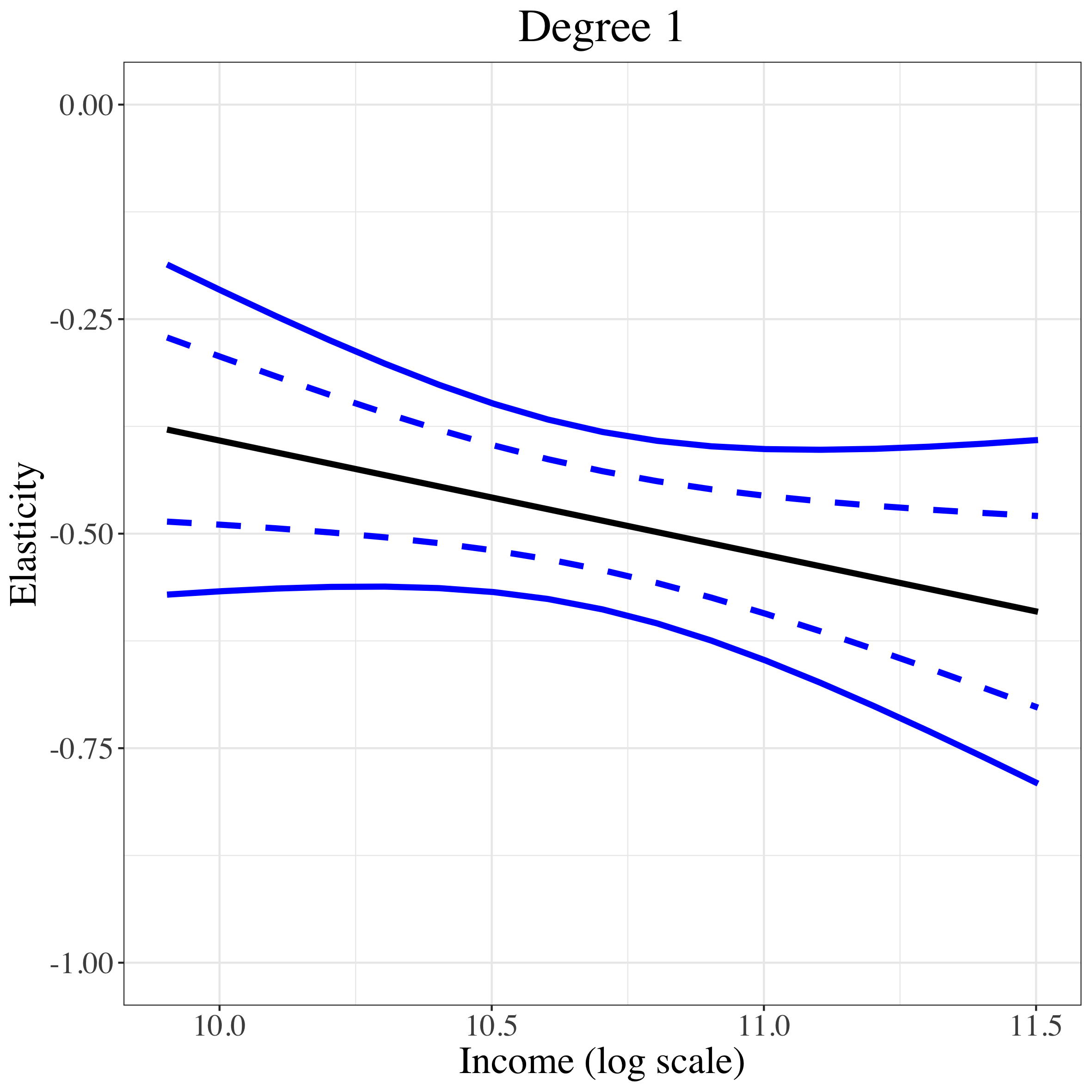}
		\caption{Polynomial degree $d=1$}
		\label{fig:level1}
	\end{subfigure}
	\centering
	\begin{subfigure}[b]{.4\textwidth}
		\includegraphics[width=\textwidth]{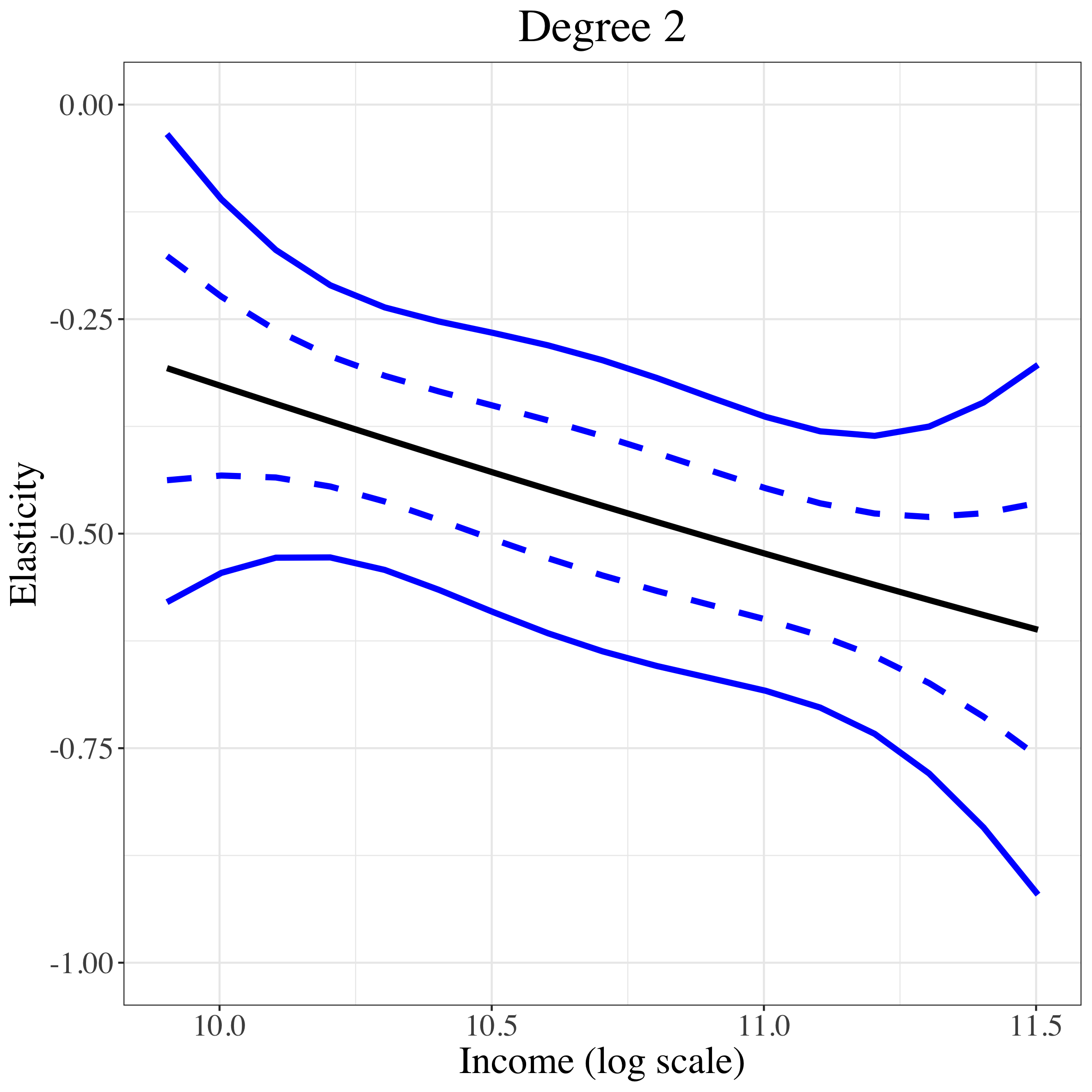}
		\caption{Polynomial degree $d=2$}
		\label{fig:level2}
	\end{subfigure}
	\caption{$95 \%$ confidence bands for the best linear approximation of the average price elasticity conditional on income without accounting for the demographic controls in the first stage. The black line is the estimated function, the dashed blue lines and the solid blue lines are the pointwise and the uniform confidence bands.  The estimation algorithm has three steps: (1) first-stage estimation of the conditional expectation function $\mu(d,x) = \Ep [Y|D=d,X=x] $, (2) second-stage estimation of the conditional density $ s(d|x)$, and (3) third-stage estimation of the target function $g(x)$ by least squares series. Step 1 is performed using least squares series regression using polynomial functions $\{1, x, \dots, x^q\}, q=3$ whose power $q$ is chosen by cross-validation out of $\{1,2,3\}$. Step 2 is performed by kernel density estimator with the Silverman choice of bandwidth.  Step 3 is performed using  polynomial functions $\{1, x, \dots, x^d\}$ and is shown for $d=1$ and $d=2$. Uniform confidence bands are based on $B=200$ repetitions of weighted (Bayes) bootstrap algorithm, described in \cite{NewOls}.  }
		\label{fig:noz}
\end{figure}

\begin{figure}[h]
\begin{subfigure}[b]{.4\textwidth}
		\includegraphics[width=\textwidth]{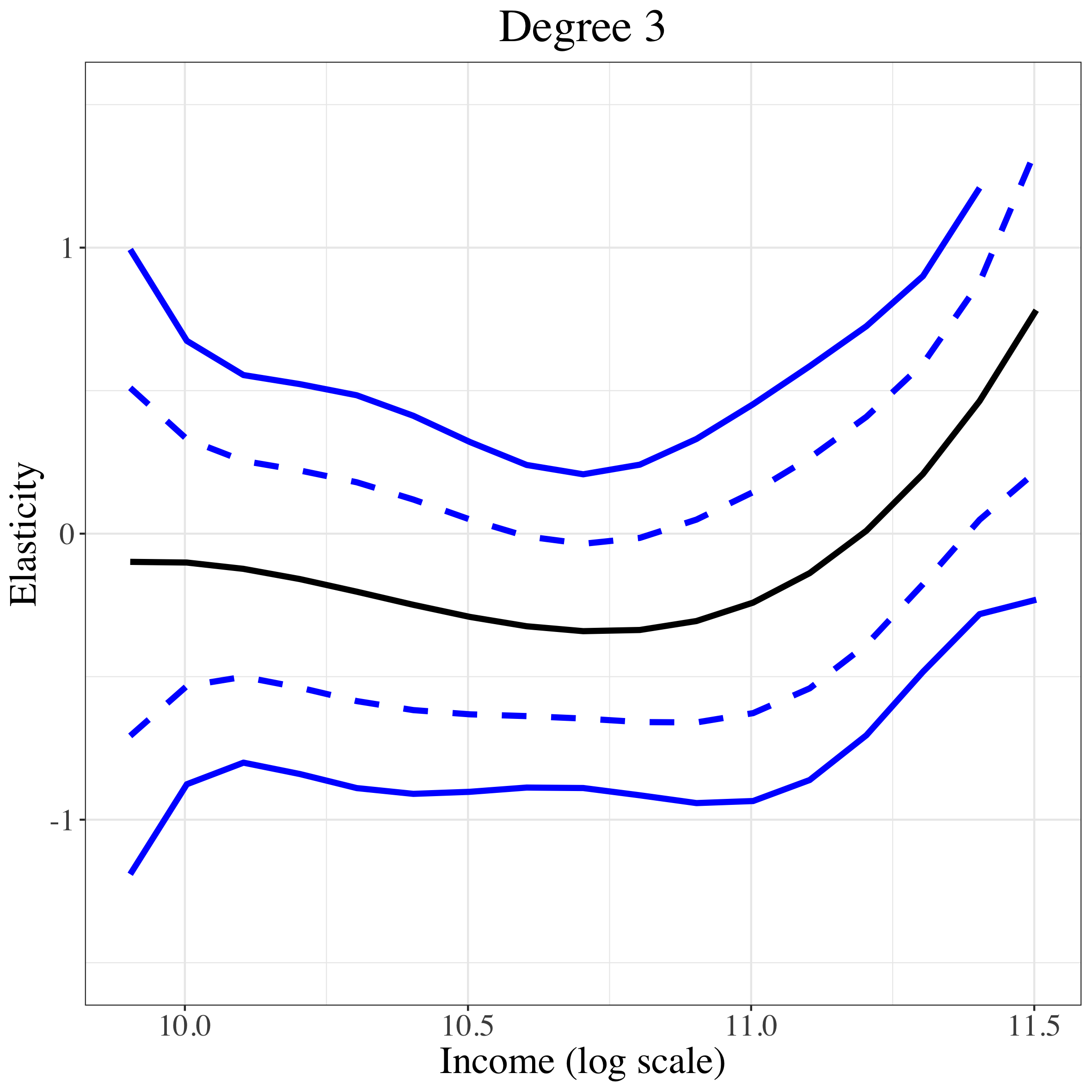}
		\caption{Step 2: $l(z)$ is estimated by Lasso. Step 3: polynomials of degree $3$. }
		\label{fig:lassokernelpoly}
	\end{subfigure}
	\centering
	\begin{subfigure}[b]{.4\textwidth}
		\includegraphics[width=\textwidth]{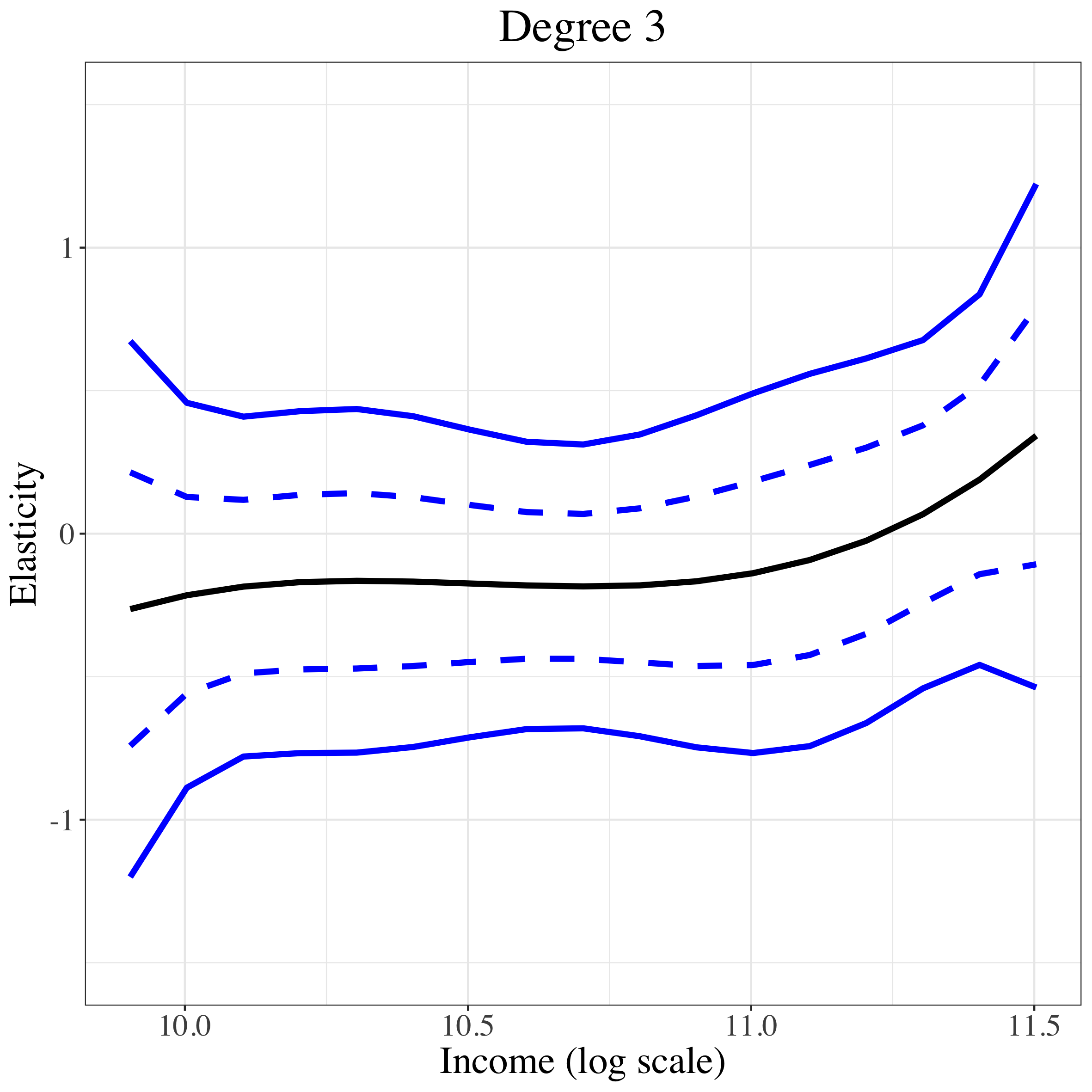}
		\caption{Step 2: $l(z)$ is estimated by random forest. Step 3: polynomials of degree $3$.}
		\label{fig:forestkernelpoly}
	\end{subfigure}
	\centering
		\begin{subfigure}[b]{.4\textwidth}
		\includegraphics[width=\textwidth]{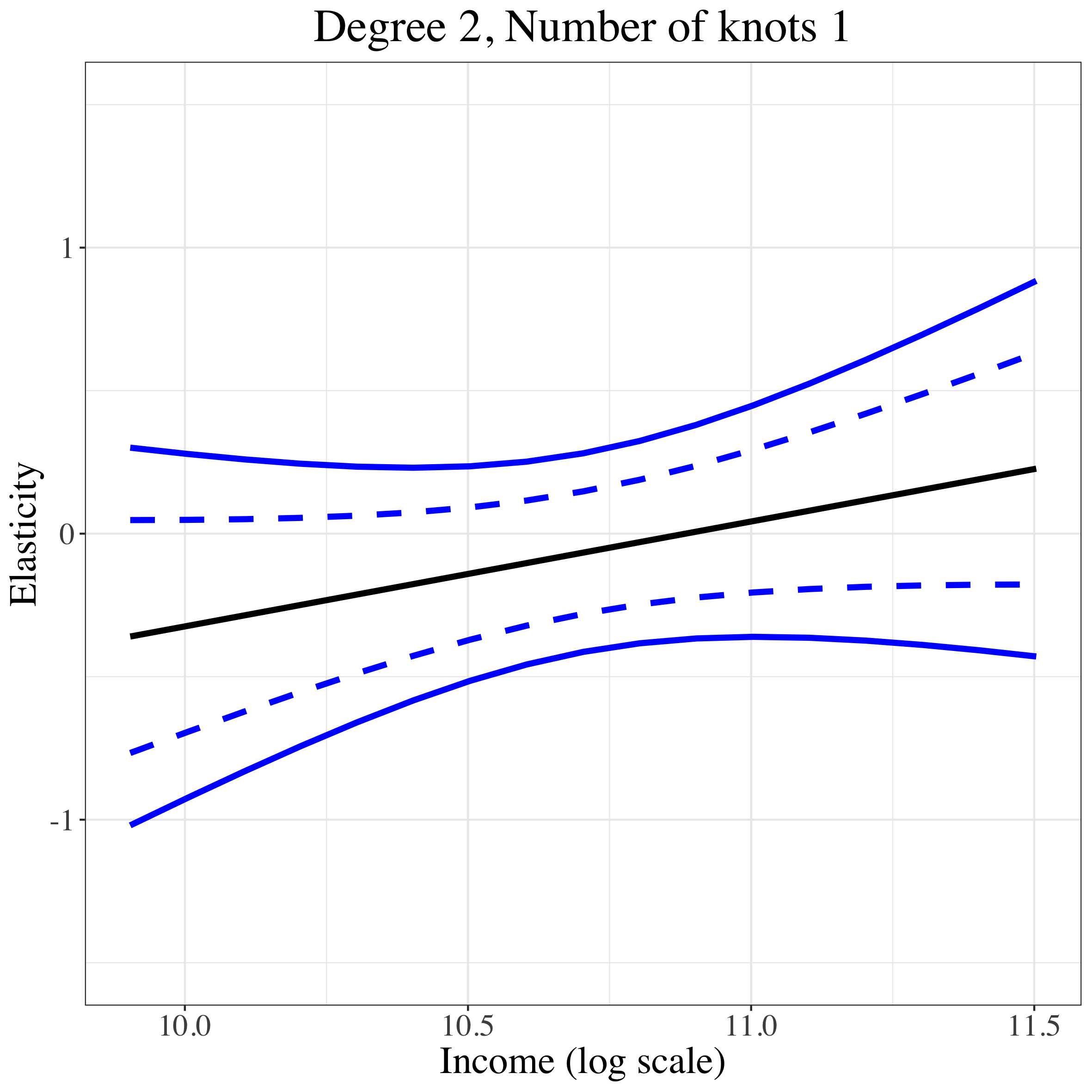}
		\caption{Step 2: $l(z)$ is estimated by Lasso. Step 3: $B$-splines of order $2$ with $1$ knot.}  
		\label{fig:lassokernelsp}
	\end{subfigure}
	\centering
	\begin{subfigure}[b]{.4\textwidth}
		\includegraphics[width=\textwidth]{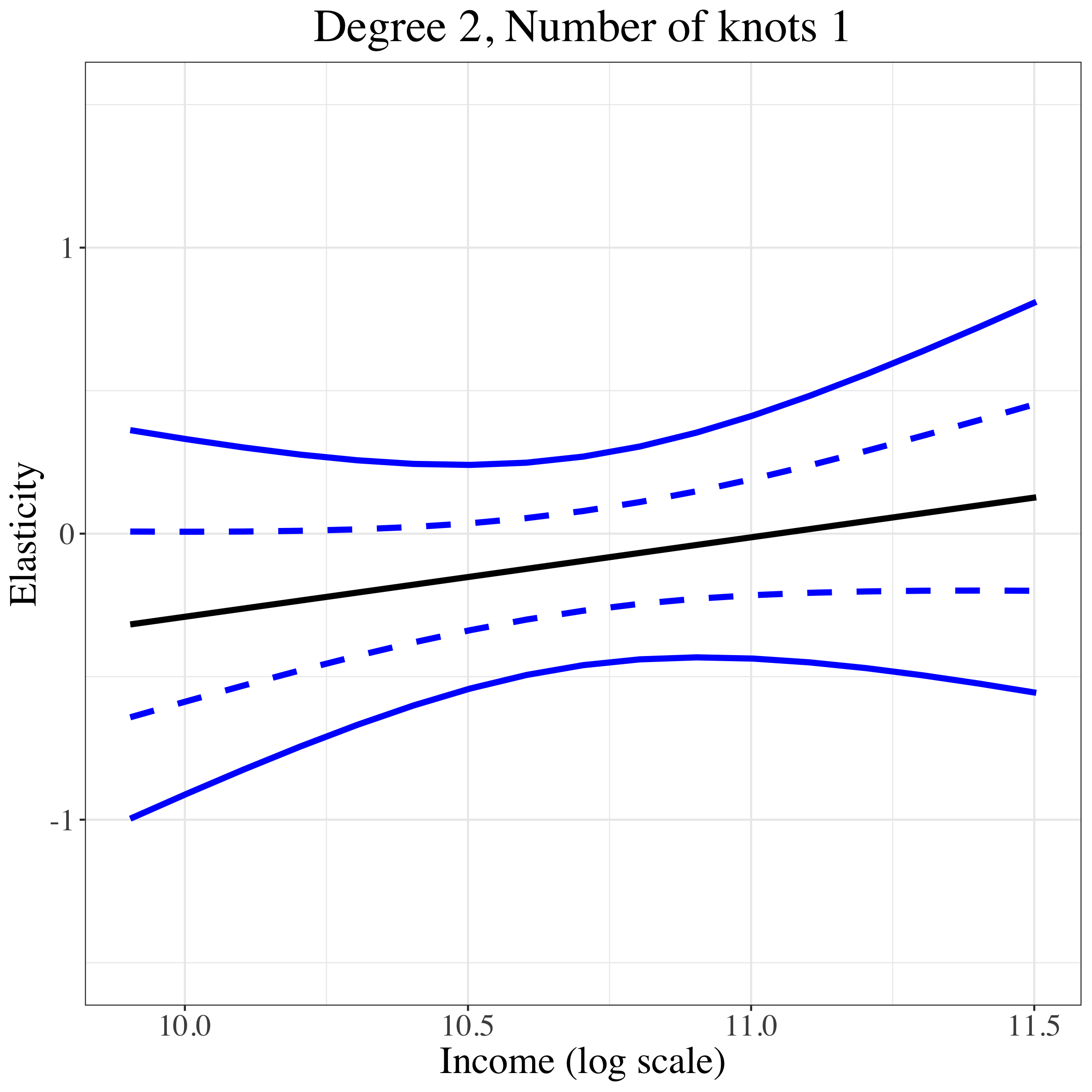}
		\caption{Step 2: $l(z)$ is estimated by random forest. $B$-splines of order $2$ with $1$ knot. }
		\label{fig:forestkernelsp}
	\end{subfigure}
	\caption{ $95 \%$ confidence bands for the best linear approximation of the average price elasticity conditional on income with accounting for the demographic controls in the first stage. The black line is the estimated function, the dashed(solid) blue lines are the pointwise (uniform) confidence bands.   The estimation algorithm has three steps: (1) first-stage estimation of the conditional expectation function $\mu(d,z) $, (2) second-stage estimation of the conditional density $ s(d|z)$, and (3) third-stage estimation of the target function $g(x)$ by least squares series. Step 1 is performed using Lasso with standardized covariates and the penalty choice $\lambda =2.2  \sqrt{n} \widehat{\sigma} \Phi^{-1} (1- \gamma/2p)$, where $\gamma= 0.1/\log n$ and $\widehat{\sigma}$ is the estimate of the residual variance. Step 2 is performed by estimating the regression function of $l(z)=\Ep [D|Z=z]$ and estimating the density $ \phi(d-l(z))$ of the residual $d-l(z)$ by adaptive kernel density estimator of \cite{PortnoyKoenker} with the Silverman choice of bandwidth. The regression function $l(z)$ is estimated lasso (\ref{fig:lassokernelpoly}, \ref{fig:lassokernelsp}) and random forest (\ref{fig:forestkernelpoly}, \ref{fig:forestkernelsp}). Step 3 is performed using B-splines of order $2$ with the number of knots equal to one (\ref{fig:lassokernelsp}, \ref{fig:forestkernelsp}) and  polynomial functions of order $3$.  (\ref{fig:lassokernelpoly}, \ref{fig:forestkernelpoly}). Uniform confidence bands are based on $B=200$ repetitions of weighted (Bayes) bootstrap algorithm, described in \cite{NewOls}.  }
	\label{fig:hd}
\end{figure}

\begin{figure}[h]
\begin{subfigure}[b]{.4\textwidth}
		\includegraphics[width=\textwidth]{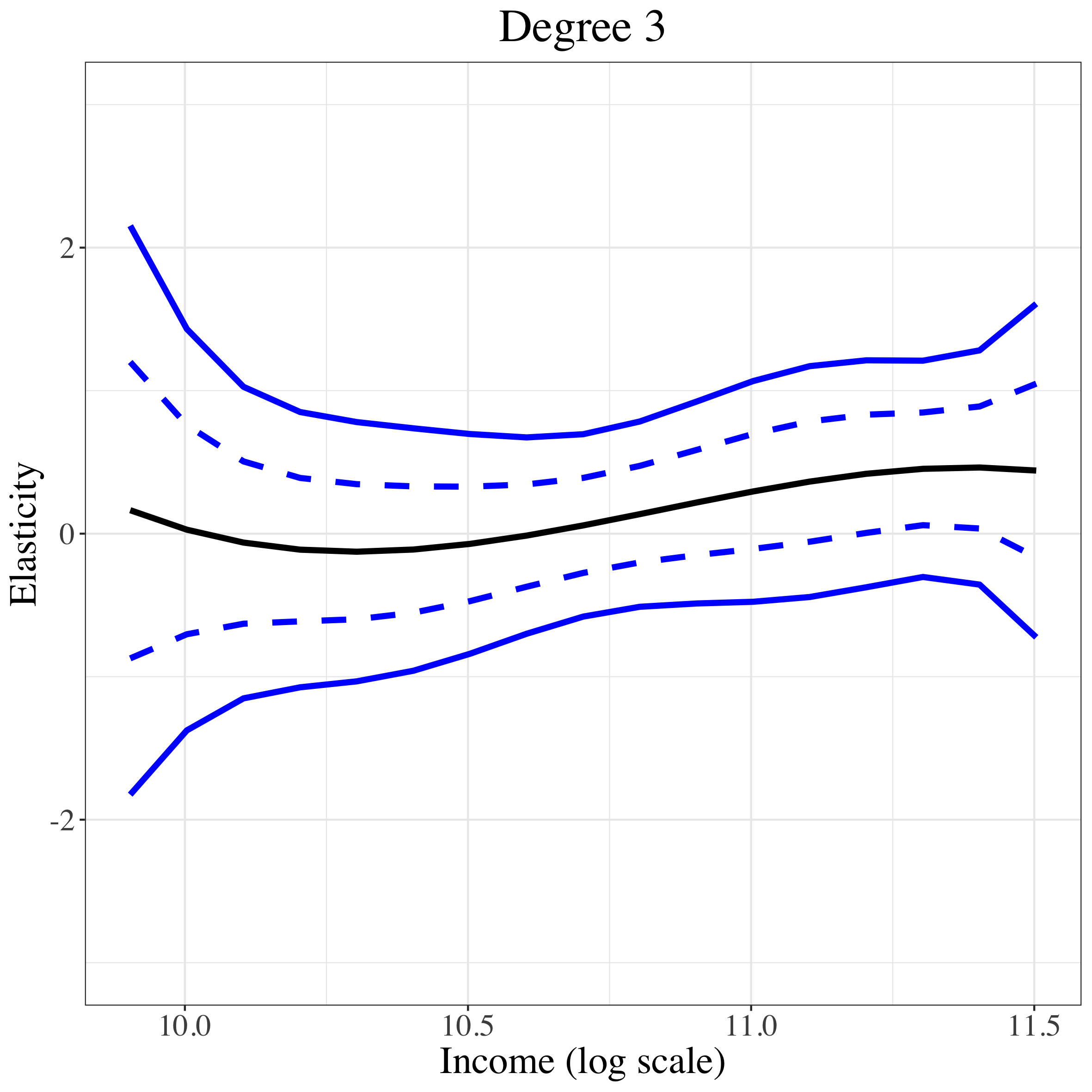}
		\caption{Large Households, Polynomials of degree $3$. }
		\label{fig:largelassokernelpoly}
	\end{subfigure}
	\centering
	\begin{subfigure}[b]{.4\textwidth}
		\includegraphics[width=\textwidth]{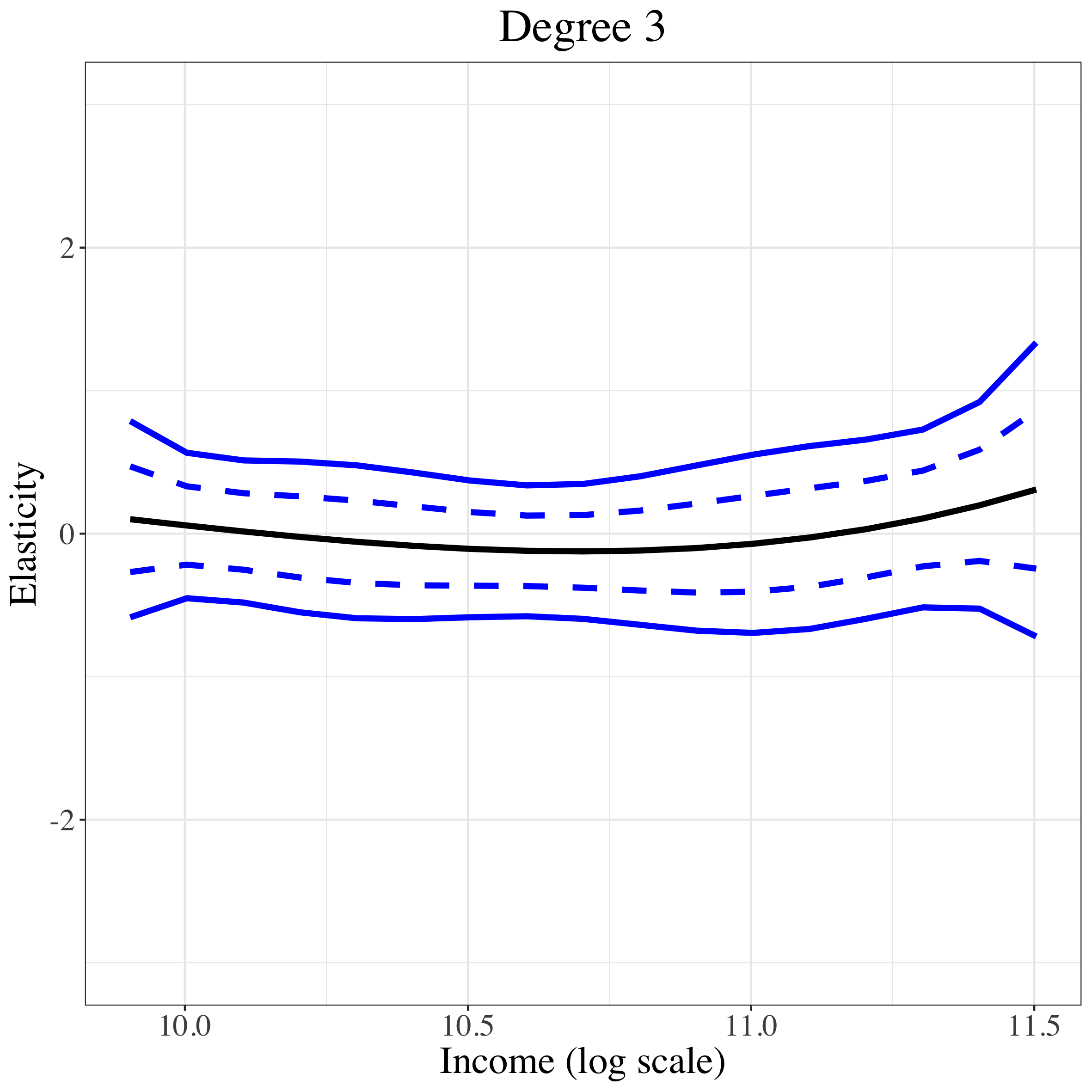}
		\caption{Small Households, Polynomials of degree $3$. }  
		\label{fig:smalllassokernelpoly}
	\end{subfigure}
	\begin{subfigure}[b]{.4\textwidth}
		\includegraphics[width=\textwidth]{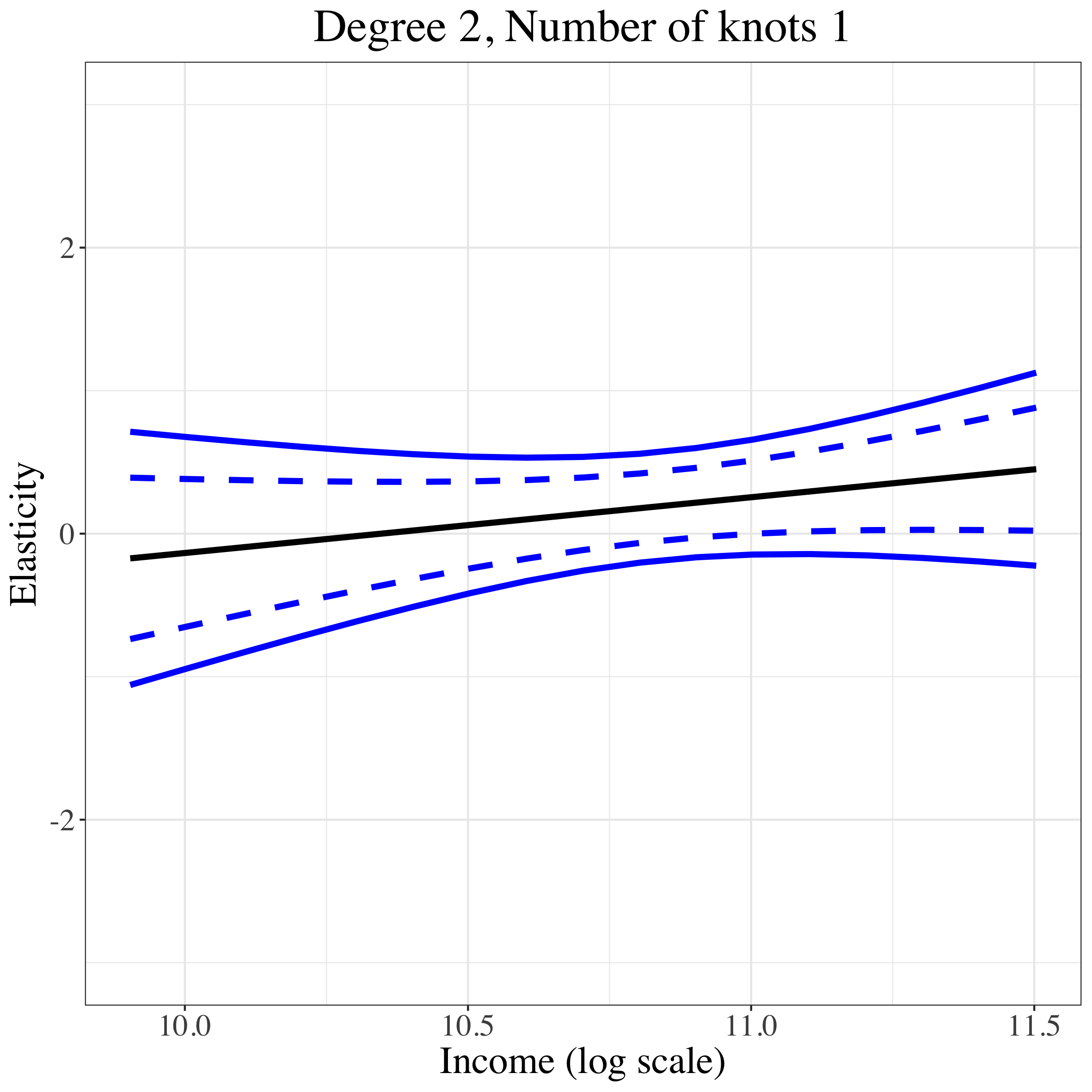}
		\caption{Large Households, B-splines of degree $2$ with $1$ knot. }
		\label{fig:largelassokernelsp}
	\end{subfigure}
	\centering
	\begin{subfigure}[b]{.4\textwidth}
		\includegraphics[width=\textwidth]{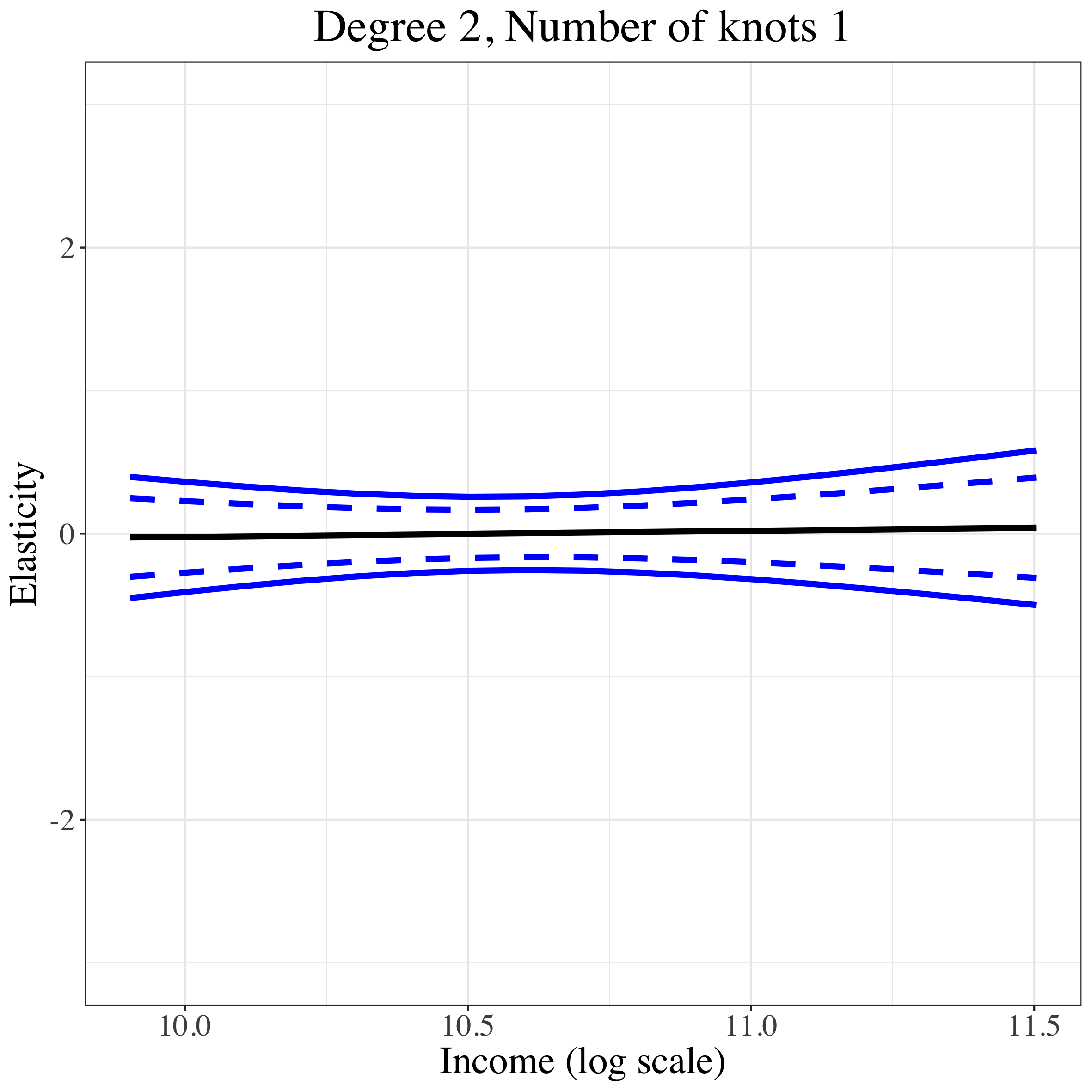}
		\caption{Small Households, B-splines of degree $2$ with $1$ knot. }
		\label{fig:smalllassokernelsp}
	\end{subfigure}
	\caption{$95 \%$ confidence bands for the best linear approximation of the average price elasticity conditional on income  with accounting for the demographic controls in the first stage by household size. The black line is the estimated function, the dashed(solid) blue lines are the pointwise (uniform) confidence bands.    The estimation algorithm has three steps: (1) first-stage estimation of the conditional expectation function $\mu(d,z) $, (2) second-stage estimation of the conditional density $ s(d|z)$, and (3) third-stage estimation of the target function $g(x)$ by least squares series. Step 1 is performed using Lasso with standardized covariates and the penalty choice $\lambda =2.2  \sqrt{n} \widehat{\sigma} \Phi^{-1} (1- \gamma/2p)$, where $\gamma= 0.1/\log n$ and $\widehat{\sigma}$ is the estimate of the residual variance. Step 2 is performed by estimating the regression function of $l(z)=\Ep [D|Z=z]$ and estimating the density $ \phi(d-l(z))$ of the residual $d-l(z)$ by adaptive kernel density estimator of \cite{PortnoyKoenker} with the Silverman choice of bandwidth. The regression function $l(z)$ is estimated lasso. Step 3 is performed using B-splines of order $2$ with the number of knots equal to one (\ref{fig:largelassokernelsp}, \ref{fig:smalllassokernelsp})  and using non-orthogonal polynomial functions of degree $3$  (\ref{fig:largelassokernelpoly}, \ref{fig:smalllassokernelpoly}). Uniform confidence bands are based on $B=200$ repetitions of weighted (Bayes) bootstrap algorithm, described in \cite{NewOls}. }
	\label{fig:hdhh}
\end{figure}

\FloatBarrier


\bibliography{/Users/virasemenova/Desktop/my_new_bibtex}
\bibliographystyle{apalike}


\newpage
\section*{Online Supplement}
    \begin{abstract}

   This document contains supplementary materials for the article "Debiased Machine Learning of Conditional Average Treatment Effects and Other Causal Functions" submitted to the Econometrics Journal. Appendix A contains proofs.   Appendix B contains verification of high-level conditions.

    \end{abstract}
    
\section*{Appendix A: Proofs }
\renewcommand{\theequation}{A.\arabic{equation}}
\renewcommand{\thelemma}{A.\arabic{lemma}}
\renewcommand{\thesection}{A}
\setcounter{equation}{0}
\setcounter{section}{0}
\setcounter{lemma}{0}
\medskip
\label{sec:otherproofs}

\paragraph{Glossary.} For two sequences of random variables  $a_N, b_N, N \geq 1: a_N \lesssim_{P}  b_n$ means    $a_N = O_{P} (b_N)$. For two sequences of numbers $a_N, b_N, N \geq 1$, $a_N \lesssim  b_N$ means $a_N = O (b_N)$. Let $a \wedge b = \min \{ a, b\}, a \vee b = \max \{ a, b\} $. The $\ell_2$ norm of a vector is denoted by $\| \cdot \|$, the $\ell_1$ norm is denoted by $\| \cdot \|_1$, the $\ell_{\infty}$ is denoted by $\| \cdot \|_{\infty}$, and $\ell_0$ is denoted by $\| \cdot \|_{0}$. Given a vector $\delta \in \mathbb{R}^p$ and a set of indices $T \subset \{ 1,...,p\}$, we denote by $\delta_T$ the vector in $\mathbb{R}^p$ for which $\delta_{Tj} = \delta_j , j \in T$ and $\delta_{Tj} = 0, j \not \in T$. Let $\xi_d := \sup_{x \in \mathcal{X}} \| p(x) \| = \sup_{x \in \mathcal{X}}  (\sum_{j=1}^d p_j(x)^2)^{1/2} $. For a matrix $Q$, let $\|Q\|$ be the maximal eigenvalue of $Q$. For a random variable $V$, let $\| V \|_{P,q}:= (\int |V|^q d P)^{1/q}$. The random sample $(V_i)_{i=1}^N$ is a sequence of independent copies of a random element $V$ taking values in a measurable space $(\mathcal{V}, \mathcal{A}_{\mathcal{V}})$  according to a probability law $P$. Let $p_i:=p(X_i)$ and $r_i:= r_g(X_i)$. Let $w_X(x)$ be the marginal density of $x$, $w_{X,Z}(x,z)$ be the joint density of $x,z$, $w_Z(z)$ be the marginal density of $z$, and  $s(x|z)$ be the conditional density of $x$ given $z$. 

Define an event $\mathcal{E}_N := \{ \widehat{\eta}_{k} \in T_N \quad \forall k \in [K] \}$, such that the nuisance parameter estimate $\widehat{\eta}_k$ belongs to the realization set $T_N$ for each fold $k \in [K]$. By union bound, this event holds w.h.p. $$\Pr (\mathcal{E}_N ) \geq 1- K \epsilon_N = 1-o(1).$$ 
For a given partition $k$ in $\{1,2, \dots, K\}$, define the partition-specific averages $$\Enk f(V_i) := \dfrac{1}{n} \sum_{i \in J_k} f(V_i), \quad \Gnk f(V_i) := \dfrac{1}{\sqrt{n}} \sum_{i \in J_k} [f(V_i) - \int f(v) dP (v)] .$$

\subsection{Useful Technical Lemmas}

\begin{lemma}[LLN for Matrices]
Let $Q_i, i = 1,2, \dots, N$ be i.n.i.d symmetric non-negative $d \bigtimes d$-matrices
such that $d \geq e^2$ and $\| Q_i \| \leq M$ a.s.  Let $Q= \frac{1}{N} \sum_{i=1}^N \Ep Q_i$ denote average value of population covariance matrices. Then, the following statement holds:
 $$ \Ep \| \widehat{Q}- Q \| \leq  \sqrt{\frac{M (1 + \| Q \|) \log N}{N}} .$$
 In particular, if $Q_i = p_i p_i' $ with $\| p_i \| \leq \xi_d$, then
 $$ \Ep \| \widehat{Q}- Q \| \leq  \sqrt{\frac{\xi_d^2 (1 + \| Q \|) \log N}{N}} .$$
\label{lem:LLN}
\end{lemma}

Proof can be found in \cite{RudVersh}. The statement is taken from \cite{NewOls}.

\begin{lemma}[Conditional Convergence Implies Unconditional]
\label{lem:cond}
Let $\{X_m\}_{m \geq 1}$ and $\{Y_m\}_{m \geq 1}$ be sequences of random vectors. (i) If for $\epsilon_m \rightarrow 0, \Pr( \| X_m \| > \epsilon_m | Y_m ) \rightarrow_{P} 0,$ then  $\Pr( \| X_m \| > \epsilon_m) \rightarrow 0$. In particular, this occurs if  $\Ep [\| X_m \|^q /\epsilon_m^q |Y_m]  \rightarrow_P 0$ for some $q \geq 1$, by Markov inequality. (ii) Let $\{A_m\}_{m \geq 1}$ be a sequence of positive constants. If $\|X_m \| = O_{P} (A_m) $   conditional on $Y_m$, namely, that for any $\ell_m \rightarrow \infty, \Pr( \| X_m \| >  \ell_m A_m|Y_m) \rightarrow_P 0$ , then $ X_m = O_{P} (A_m)$ unconditionally, namely, that for any $\ell_m \rightarrow \infty, \Pr( \| X_m \| >  \ell_m A_m) \rightarrow 0$.
\end{lemma}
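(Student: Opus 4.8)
The plan is to prove the two parts of Lemma \ref{lem:cond} by reducing unconditional statements to conditional ones via the bounded convergence theorem (or dominated convergence for probabilities, which are bounded by $1$). The key observation is that for any event, its unconditional probability is the expectation of its conditional probability, and probabilities live in $[0,1]$, so pointwise-in-probability convergence of conditional probabilities upgrades to convergence of their expectations.

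For part (i), I would write $\Pr(\|X_m\| > \eps_m) = \Ep\big[\Pr(\|X_m\| > \eps_m \mid Y_m)\big]$. By hypothesis the integrand $Z_m := \Pr(\|X_m\| > \eps_m \mid Y_m)$ converges to $0$ in probability, and it is bounded by $1$. A sequence of $[0,1]$-valued random variables converging to $0$ in probability also converges to $0$ in $L^1$ (e.g.\ split the expectation over $\{Z_m > \delta\}$ and its complement, or invoke dominated convergence along subsequences together with the subsequence characterization of convergence in probability). Hence $\Ep Z_m \to 0$, which is the claim. The ``in particular'' clause then follows because, by the conditional Markov inequality, $\Pr(\|X_m\| > \eps_m \mid Y_m) \le \Ep[\|X_m\|^q/\eps_m^q \mid Y_m] \to_P 0$, so the main statement applies.

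For part (ii), fix an arbitrary sequence $\ell_m \to \infty$. The hypothesis that $\|X_m\| = O_P(A_m)$ conditional on $Y_m$ says precisely that $\Pr(\|X_m\| > \ell_m A_m \mid Y_m) \to_P 0$ for every such $\ell_m$. Applying part (i) with $X_m$ replaced by $X_m$ and $\eps_m$ replaced by $\ell_m A_m$ (the argument of part (i) never used anything about $\eps_m$ beyond its being the threshold in the conditional probability), we get $\Pr(\|X_m\| > \ell_m A_m) \to 0$. Since $\ell_m \to \infty$ was arbitrary, this is exactly the definition of $X_m = O_P(A_m)$ unconditionally.

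The only subtlety worth stating carefully is the elementary measure-theoretic fact that for uniformly bounded random variables, convergence in probability implies convergence of expectations; I would either cite it or give the two-line $\delta$-argument. There is no real obstacle here — the lemma is a standard bookkeeping device — so the ``hard part'' is merely making sure the quantifier structure in part (ii) (``for any $\ell_m \to \infty$'') is threaded correctly through the reduction to part (i).
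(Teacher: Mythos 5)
Your proof is correct. Note that the paper itself gives no argument for this lemma: it simply declares it a restatement of Lemma 6.1 of Chernozhukov et al.\ (2016), so there is no in-paper proof to compare against. Your argument --- iterated expectations $\Pr(\|X_m\|>\epsilon_m)=\Ep\big[\Pr(\|X_m\|>\epsilon_m\mid Y_m)\big]$, plus the fact that $[0,1]$-valued random variables converging to zero in probability converge to zero in $L^1$, with the conditional Markov inequality for the ``in particular'' clause and a clean reduction of part (ii) to part (i) with threshold $\ell_m A_m$ --- is exactly the standard proof underlying the cited lemma, and it is complete; the only step you flag as needing care (bounded convergence in probability implies convergence of expectations) is handled correctly by the $\delta$-splitting you describe.
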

Lemma \ref{lem:cond} is a restatement of Lemma 6.1  of \cite{chernozhukov2016double}.

\subsection{Proofs of Lemmas from Section \ref{sec:asymp}}

\begin{lemma}[No Effect of First Stage Error]
\label{lem:smallerror}
Suppose Assumption \ref{ass:smallbias} holds. Then, 
\begin{align}
 \sqrt{N} \| \EN p_i[Y_{i}(\widehat{\eta}) -Y_i(\eta_0)] \| &= O_{P} (   B_N + \Lambda_N) = o(1).
\end{align}
\end{lemma}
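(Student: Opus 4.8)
The plan is to decompose the target quantity using the cross-fitting structure and then control the error fold-by-fold, using the robustness (Neyman orthogonality) of the signal together with the rate conditions in Assumption \ref{ass:smallbias}. First I would write
$$
\sqrt{N}\,\EN p_i [Y_i(\hat\eta) - Y_i(\eta_0)] = \frac{1}{\sqrt{K}}\sum_{k=1}^{K} \sqrt{n}\,\Enk p_i [Y_i(\hat\eta_k) - Y_i(\eta_0)],
$$
and treat each summand separately, conditioning on $J_k^c$, on which $\hat\eta_k$ is estimated, so that $\hat\eta_k$ is a fixed function for the purposes of analyzing the average over $i \in J_k$. Within fold $k$, on the event $\{\hat\eta_k \in T_N\}$ (which holds with probability $\geq 1-\epsilon_N = 1-o(1)$ by Assumption \ref{ass:smallbias}), I would add and subtract the conditional mean and split
$$
\sqrt{n}\,\Enk p_i [Y_i(\hat\eta_k) - Y_i(\eta_0)] = \Gnk p_i[Y_i(\hat\eta_k) - Y_i(\eta_0)] + \sqrt{n}\,\Ep\big[ p_i(Y_i(\hat\eta_k) - Y_i(\eta_0)) \,\big|\, J_k^c\big].
$$

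The second (bias) term is controlled directly by the first part of Assumption \ref{ass:smallbias}: since $\hat\eta_k \in T_N$, its norm is at most $\sqrt{n}\sup_{\eta\in T_N}\|\Ep p(X)[Y_\eta - Y_{\eta_0}]\| = B_N/\sqrt{K} = o(1)$ — this is exactly where robustness of the signal enters, guaranteeing $B_N = o(1)$ rather than something of order $\sqrt{n}$ times a first-stage rate. For the first (stochastic) term I would bound it conditionally on $J_k^c$ by a second-moment / Chebyshev argument: conditionally on $J_k^c$, the summands $p_i(Y_i(\hat\eta_k) - Y_i(\eta_0))$ are i.i.d.\ with mean zero (after centering), so
$$
\Ep\Big[ \big\| \Gnk p_i[Y_i(\hat\eta_k)-Y_i(\eta_0)] \big\|^2 \,\Big|\, J_k^c \Big] \leq \Ep\big[ \|p_i\|^2 (Y_i(\hat\eta_k)-Y_i(\eta_0))^2 \,\big|\, J_k^c\big] \leq \xi_d^2 \sup_{\eta\in T_N}\Ep(Y_\eta - Y_{\eta_0})^2 = (\xi_d\kappa_N)^2 = o(1),
$$
using $\|p_i\|\leq \xi_d$ from Assumption \ref{ass:growth} and the second part of Assumption \ref{ass:smallbias}. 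Then Lemma \ref{lem:cond}(i) upgrades the conditional bound to an unconditional statement: the stochastic term is $O_P(\xi_d\kappa_N) = o_P(1)$.

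Combining the two pieces over the finitely many folds $k=1,\dots,K$ (with $K$ fixed) via the triangle inequality gives $\sqrt{N}\|\EN p_i[Y_i(\hat\eta)-Y_i(\eta_0)]\| = O_P(B_N + \xi_d\kappa_N)$, and since both $B_N$ and $\xi_d\kappa_N$ are $o(1)$ under Assumption \ref{ass:smallbias}, the right-hand side is $o_P(1)$. I would finally remark that the exceptional event $\{\hat\eta_k \notin T_N\}$ contributes negligibly: on it the bound may fail, but it has probability $o(1)$, and Lemma \ref{lem:cond} handles the passage from ``with probability $1-o(1)$, $\|X_N\|=O_P(A_N)$'' to ``$\|X_N\|=O_P(A_N)$'' unconditionally. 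The only mildly delicate point — and the one I would be most careful about — is the centering in the stochastic term: $Y_i(\hat\eta_k)-Y_i(\eta_0)$ conditionally on $J_k^c$ need not have mean zero, so one must center it and absorb the conditional mean into the bias term, which is precisely why the decomposition is organized as above; everything else is a routine second-moment bound plus the conditional-to-unconditional lemma.
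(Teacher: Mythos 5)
Your proposal is correct and follows essentially the same route as the paper's proof: decompose fold-by-fold under cross-fitting, split each fold's contribution into a conditionally centered stochastic term (bounded by a second-moment/Chebyshev argument via $\xi_d^2\kappa_N^2$) and a conditional-bias term (bounded by $B_N$ using the robustness of the signal), then pass from conditional to unconditional bounds with Lemma \ref{lem:cond} and absorb the exceptional event $\{\hat{\eta}_k \notin T_N\}$ of probability $o(1)$. The only differences are cosmetic normalizations ($1/\sqrt{K}$ with $\sqrt{n}$-scaled fold averages versus the paper's $1/K$ with $\sqrt{N}$), and your explicit remark about the need to center $Y_i(\hat{\eta}_k)-Y_i(\eta_0)$ is exactly the point the paper's decomposition into $I_{1,k}$ and $I_{2,k}$ is designed to handle.
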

\begin{proof}[Proof of Lemma \ref{lem:smallerror}]
The total sum is the sum of $K$ partial sums, defined for each partition $(J_k)_{k=1}^K$:
 \begin{align*}
	\EN p_i[Y_{i}(\widehat{\eta}) -Y_i(\eta_0)] &= \frac{1}{K} \sum_{k=1}^K  \Enk p_i[Y_{i}(\widehat{\eta}) -Y_i(\eta_0)] - \Ep \big[ p_i[Y_{i}(\widehat{\eta}) -Y_i(\eta_0)] | (V_i)_{i \in J_k^c} \big]  \\
	 &+  \Ep \big[ p_i[Y_{i}(\widehat{\eta}) -Y_i(\eta_0)] | (V_i)_{i \in J_k^c} \big]  \\
	 &=: \frac{1}{K} \sum_{k=1}^K [ I_{1,k} + I_{2,k}].
\end{align*}
Conditional on $(V_i)_{i \in J_k}$, the estimator   $\widehat{\eta} = \widehat{\eta}_{k}$ is non-stochastic. On the event $\mathcal{E}_N $, \begin{align*}
\Ep[ \| \sqrt{n} I_{1,k} \|^2 |  \mathcal{E}_N, (V_i)_{i \in J_k^c} ]&\leq \Ep [ \| p_i[Y_{i}(\widehat{\eta}) -Y_i(\eta_0)] \|^2 | \mathcal{E}_N,  (V_i)_{i \in J_k^c} ] \\
&\leq  \sup_{  \eta \in T_N} \Ep  \| p_i[Y_{i}(\eta) -Y_i(\eta_0)] \|^2 = \Lambda_N^2 \tag{Assumption \ref{ass:smallbias}}
\end{align*}
Hence, $\sqrt{n} I_{1,k} = O_{P} (\Lambda_N)$ by Lemma \ref{lem:cond}. To bound $I_{2,k}$, recognize that on the event $\mathcal{E}_N $
\begin{align*}
	\Ep [ \| \sqrt{n} I_{2,k} \|  |  \mathcal{E}_N, (V_i)_{i \in J_k^c} ]&\leq \sup_{\eta \in T_N} \sqrt{n} \| \Ep  p_i (Y_{i}(\eta) -Y_i(\eta_0)) | (V_i)_{i \in J_k} \| \\
	&\leq  \sup_{\eta \in T_N} \sqrt{n} \| \Ep  p_i (Y_{i}(\eta) -Y_i(\eta_0)) \| \leq B_N.
\end{align*}
Therefore, $\sqrt{n}I_{2,k} = O_{P}(B_N)$.
\end{proof}

\begin{proof}[Proof of Lemma \ref{lem:pointwise} (a)]

\begin{align*}
	\| \widehat{\beta}- \beta_0\| &= \| \widehat{Q}^{-1} \EN p_i Y_i(\widehat{\eta}) - \beta_0 \| 
	\leq  \| \widehat{Q}^{-1}\| \| \EN p_i[Y_i(\widehat{\eta}) - Y_i(\eta_0) ]\| + \| \widehat{Q}^{-1} \| \| \EN p_i [Y_i(\eta_0) - p_i'\beta_0 ] \|  \\
	&=: S_1 + \| \widehat{Q}^{-1}\| \| \EN p_i [Y_i(\eta_0) - g(X_i) ] \|   +  \| \widehat{Q}^{-1}\| \| \EN p_i [g(X_i) - p_i'\beta_0 ] \| \\
	&=: S_1 + \| \widehat{Q}^{-1}\|  \| \EN p_i U_i \| + \| \widehat{Q}^{-1}\| \| \EN p_i r_i\|.
\end{align*}
The effect of sampling error (second term) is bounded as
\begin{align}
\| \EN p_i U_i \| &\lesssim_{P} (\Ep \big[ \| \EN p_i U_i \|^2 \big] )^{1/2} \leq (\Ep U_i^2 p_i' p_i / N )^{1/2} 
\lesssim \overline{\sigma} \sqrt{d /N}.
\end{align}
The effect of approximation error (third term) is bounded as
\begin{align}
\| \EN p_i r_i \| &\lesssim_{P} (\Ep \big[ \|\EN p_i r_i \|^2 \big] )^{1/2} \lesssim l_d r_d \sqrt{\dfrac{\Ep \| p_i \|^2 }{N}}  = l_d r_d \sqrt{\dfrac{d}{N}}.
\end{align}
Alternatively, the third term can be also bounded as
\begin{align}
\| \EN p_i r_i \| &\lesssim_{P} (\Ep \big[ \| \EN p_i r_i \|^2 \big] )^{1/2} \leq \xi_d \sqrt{\dfrac{\Ep r_i^2 }{N}}  = \xi_d r_d/ \sqrt{N}.
\end{align}
With high probability, $\| \widehat{Q}^{-1}\| \leq  2 \| Q^{-1} \| \leq 2 /C_{\min}   $. Lemma \ref{lem:smallerror} implies \begin{align*}
 \| S_1 \| & \leq   \| \widehat{Q}^{-1} \|  \| \EN p_i [Y_i(\widehat{\eta}) - Y_i(\eta_0)] \| \leq  \| \widehat{Q}^{-1} \|  \big[ B_N/\sqrt{N} +   \Lambda_N/\sqrt{N}  \big] =  o_{P} (N^{-1/2}).
\end{align*}
\end{proof}

\begin{proof}[Proof of Lemma \ref{lem:pointwise} (b)]
Decomposing  \begin{align*}
\sqrt{N}  [ \EN p_i Y_i(\widehat{\eta}) - \widehat{Q} \beta_0]&=  \sqrt{N} \EN p_i[Y_i(\widehat{\eta}) - Y_i (\eta_0)]  +\sqrt{N} \EN p_i[Y_i(\eta_0) - p_i'\beta_0] \\
  &=  \sqrt{N}\EN p_i[Y_i(\widehat{\eta}) - Y_i (\eta_0)]  + \GN p_i[Y_i(\eta_0) - p_i'\beta_0]   \\
  &=  \sqrt{N}\EN p_i[Y_i(\widehat{\eta}) - Y_i (\eta_0)]  + \GN p_i U_i +  \GN p_i r_i 
  \end{align*} 
   we obtain:
\begin{align}
\label{eq:decomp}
\sqrt{N}\alpha' &[\widehat{\beta} - \beta_0] = \sqrt{N}\alpha' \widehat{Q}^{-1}  [ \EN p_i Y_i(\widehat{\eta}) - \widehat{Q} \beta_0] \\
 &= \sqrt{N} \alpha' \widehat{Q}^{-1}[  \EN p_i [Y_i(\widehat{\eta})  - Y_i(\eta_0)]]  +  \alpha'\widehat{Q}^{-1} \GN [ p_i [r_i + U_i]] \nonumber  \\
&=  \alpha'Q^{-1}  \GN [ p_i [r_i + U_i]] \nonumber \\
&+\alpha^\top [\widehat{Q}^{-1} - Q^{-1}]  \GN [ p_i (U_i+ r_i)] +  \sqrt{N} \alpha^\top Q^{-1} \EN p_i [Y_i(\widehat{\eta}) - Y_i(\eta_0) ]  \\
&+ \sqrt{N}\alpha^\top [\widehat{Q}^{-1} - Q^{-1}] \EN p_i [Y_i(\widehat{\eta}) - Y_i(\eta_0) ]\\
&=: \alpha'Q^{-1}  \GN [ p_i [r_i + U_i]]  + R_{1,N} (\alpha) \nonumber,
\end{align} where the remainder term $R_{1,N}$ is $$ R_{1,N} (\alpha) = I_1 + I_2 +I_3. $$ Decomposing $I_1 $ into sampling and approximation parts gives
\begin{align*}
I_1 &= \sqrt{N} \alpha^\top [\widehat{Q}^{-1} - Q^{-1}]  \EN p_i U_i + \alpha^\top [\widehat{Q}^{-1} - Q^{-1}]  \GN p_i r_i =: I_{1,a} + I_{1,b}. 
\end{align*}
Observe that $	\Ep [I_{1,a} | (X_i)_{i=1}^N] = 0$.  As shown in the proof of Lemma 4.1 in \cite{NewOls}, its second moment is bounded as
\begin{align*}
	\Ep [I_{1,a}^2 | (X_i)_{i=1}^N] &\leq  \alpha^\top  [\widehat{Q}^{-1} - Q^{-1}] Q [\widehat{Q}^{-1} - Q^{-1}] \alpha  \overline{\sigma}^2   \lesssim_{P} \dfrac{\xi_d^2 \log N}{N}  \overline{\sigma}^2,
\end{align*}
which implies $I_{1,a} = O_{P}(  \sqrt{\dfrac{\xi_d^2 \log N}{N } }  )$. Likewise,  by the proof of Lemma 4.1 in \cite{NewOls}, 
\begin{align*}
|I_{1,b} | &\lesssim_{P} \sqrt{\dfrac{\xi_d^2 \log N}{N} }[ l_d r_d \sqrt{d} \wedge \xi_d r_d].
\end{align*}
By Lemma \ref{lem:smallerror},
 \begin{align*}
|I_{2}| & \lesssim_{P} \| Q \|^{-1} \| \sqrt{N} \EN  p_i [Y_i(\widehat{\eta}) - Y_i(\eta_0) ] \| \lesssim_{P} 1/C_{\min} [\Lambda_N + B_N] = o (1),\\ 
|I_{3}| & \lesssim_{P} \|  \alpha \| \| [\widehat{Q}^{-1} - Q^{-1}] \| \|\sqrt{N}  \EN p_i [Y_i(\widehat{\eta}) - Y_i(\eta_0) ] \|  \lesssim_{P}  \sqrt{\dfrac{\xi_d^2 \log N}{N}} [ \Lambda_N +B_N]  = o(1).
\end{align*} 
Therefore, with probability approaching one, $$\sup_{\eta \in T_N} \| R_{1,N} (\alpha) \| \lesssim_{P}  \Lambda_N +B_N + 
 \sqrt{\dfrac{\xi_d^2 \log N}{N} } \left( 1 + \min \bigg\{ l_d r_d \sqrt{d} ,  \xi_d r_d \bigg\} \right). $$

\end{proof}

\begin{proof}[Proof of Lemma \ref{lem:uniform} (a)] Define 
\begin{align*}
I_1(x) &= \alpha(x) ^\top [\widehat{Q}^{-1} - Q^{-1}]  \GN [ p_i (U_i+ r_i)], \\
I_2(x) &= \sqrt{N} \alpha(x) ^\top Q^{-1} \EN  p_i [Y_i(\widehat{\eta}) - Y_i(\eta_0) ],\\
I_3(x) &= \sqrt{N}\alpha(x)^\top [\widehat{Q}^{-1} - Q^{-1}] \EN  p_i [Y_i(\widehat{\eta}) - Y_i(\eta_0) ].
\end{align*}
Decompose  $$\sqrt{N} \alpha(x)^\top (\widehat{\beta} - \beta_0)  = \sqrt{N} \alpha(x)^\top Q^{-1} \GN   [ p_i [r_i + U_i]] + I_1(x) + I_2(x) + I_3(x)$$ Step 1. According to Step 1 of the proof of Lemma 4.2 in \cite{NewOls}, the following bound  holds 
\begin{align*}
    \sup_{x \in \mathcal{X}} | \alpha(x) ^\top [\widehat{Q}^{-1} - Q^{-1}]  \GN p_iU_i  | \lesssim_{P} N^{1/m} \sqrt{\dfrac{\xi_d^2 \log N}{N}}
\end{align*}
by Assumptions \ref{ass:growth}, \ref{ass:merror}, \ref{ass:basis}.

Step 2. According to Step 2 of the proof of Lemma 4.2 in \cite{NewOls}, the following bound  holds 
$$ \sup_{x \in \mathcal{X}} | \alpha(x) ^\top [\widehat{Q}^{-1} - Q^{-1}]  \GN  p_i  r_i | \lesssim_{P}  \sqrt{\dfrac{\xi_d^2 \log N}{N} } l_d r_d \sqrt{d}$$
by Assumptions \ref{ass:growth} and \ref{ass:approx}. Steps 1 and 2 give the bound on $ \sup_{x \in \mathcal{X}} I_1(x)$.

Step 3. The bound on $I_2(x) +I_3(x)$ follows from Lemma \ref{lem:smallerror}
\begin{align*}
 \sup_{x \in \mathcal{X}}  | I_2 (x) | &\leq  \sup_{x \in \mathcal{X}}  \| \alpha(x) \| C_{\text{min}}^{-1} \| \E_N p_i [Y_i(\widehat{\eta}) - Y_i(\eta_0) ] \| \\
 &\lesssim_{P} B_N + \Lambda_N = o (1) \\
  \sup_{x \in \mathcal{X}}  | I_3 (x) | &\leq  \sup_{x \in \mathcal{X}}  \| \alpha(x) \| \| Q^{-1} - \widehat{Q}^{-1}  \| \| \E_N p_i [Y_i(\widehat{\eta}) - Y_i(\eta_0) ] \| \\
 &\lesssim_{P} \sqrt{\dfrac{\xi_d^2 \log N}{N}} (B_N + \Lambda_N) =o (1).
 \end{align*}

Lemma \ref{lem:uniform}(b) follows from Theorem 4.3 in \cite{NewOls}. 
\end{proof}

\subsection{Proofs of Main Results from Section \ref{sec:asymp}}

\begin{proof}[Proof of Theorem \ref{thrm:OS}]
Proof of Theorem \ref{thrm:OS}  follows from Lemma \ref{lem:pointwise} and the proof of  Theorem 4.2 in \cite{NewOls}.

\end{proof}

\begin{proof}[Proof of Theorem \ref{thrm:ult}]
Proof of Theorem \ref{thrm:ult} follows from Lemma \ref{lem:uniform} and the proof of   Theorem 4.4 in \cite{NewOls}.

\end{proof}

\begin{proof}[Proof of Theorem \ref{lem:matrix}]
Define the estimator of the matrix $\widehat{\Sigma}$ as
\begin{align*}
\widehat{\Sigma}:= \E_N p_ip_i^\top  \widehat{U}^2_i := \E_N p_ip_i^\top  ( Y_i(\widehat{\eta}) - p(X_i)'\widehat{\beta})^2,
\end{align*} 
and $\widehat{\Omega}:= \widehat{Q}^{-1} \widehat{\Sigma}  \widehat{Q}^{-1}$. Decompose the difference $\widehat{\Sigma} - \Sigma$ as
\begin{align*}
\widehat{\Sigma} - \Sigma &=  \E_N p_i p_i'[\widehat{U}_i^2 - \{ U_i + r_i \}^2] +\E_N p_i p_i'\{ U_i + r_i \}^2 - \Sigma =: I_1 + I_2. 
\end{align*}
In \cite{NewOls}, it was shown that $I_2 \lesssim_{P} (N^{1/m}  +l_d r_d ) \sqrt{\frac{\xi_d^2 \log N}{N}}$. Therefore, it suffices to prove a bound on $I_1$. Recognize that
\begin{align*}
\widehat{U}_i &=  Y_i(\eta_0) - p_i'\beta_0 + p_i'\beta_0 - p_i'\widehat{\beta} +  Y_i(\widehat{\eta}) -Y_i(\eta_0) \\
&= (U_i + r_i) + (p_i' (\beta_0- \widehat{\beta})) +  (Y_i(\widehat{\eta}) -Y_i(\eta_0)) =  a + b + c.
\end{align*}
Plugging $|(a+b+c)^2 - a^2|  = |2 a(b+c) + (b+c)^2 | \leq 2 (c^2 + b^2  + |ac|+ |ab| )$ into $I_1$ gives:
\begin{align*}
I_1:= \| \E_N p_i p_i'[\widehat{U}_i^2 - \{ U_i + r_i \}^2] \| &\leq 2 \| \E_N p_i p_i'(Y_i(\widehat{\eta}) -Y_i(\eta_0))^2    \| \\
&+ 2 \| \E_N p_i p_i' ( U_i + r_i) (Y_i(\widehat{\eta}) -Y_i(\eta_0)) \|  \\
&+ 2 \| \E_N p_i p_i'(p_i'(\widehat{\beta} - \beta_0))^2    \|  \\
&+ 2 \| \E_N p_i p_i' ( U_i + r_i) p_i'(\widehat{\beta} - \beta_0) \|    \\
&:= 2(C^2 + AC + B^2 + AB).
\end{align*}
For each $k \in \{1,2,\dots, K \}$, conditional on $(V_j)_{j \in J_k^c}$ and the event $\mathcal{E}_N$,
\begin{align*}
 \max_{i \in J_k} ( Y_i(\widehat{\eta}) - Y_i (\eta_0))^2 &\lesssim_{P} \E [ \max_{i \in J_k} ( Y_i(\widehat{\eta}) - Y_i (\eta_0))^2 | (V_j)_{j \in J_k^c}, \mathcal{E}_N ] + o_{P} (1) \\
 &\lesssim \sup_{\eta \in \mathcal{T}_N} \E  \max_{i \in J_k} ( Y_i(\eta) - Y_i (\eta_0))^2 = \kappa^2_n.
\end{align*}
Therefore, $$ \max_{i \in \{1,2,\dots, N\}} ( Y_i(\widehat{\eta}) - Y_i (\eta_0))^2 \lesssim_{P} K \kappa^2_n + o_{P} (1).$$
\begin{align*}
C^2 &\lesssim_{P} \max_{1 \leq i \leq N} ( Y_i(\widehat{\eta}) - Y_i (\eta_0))^2 \| \E_N p_i p_i'\| \\
&\lesssim_{P} \max_{1 \leq i \leq N} ( Y_i(\widehat{\eta}) - Y_i (\eta_0))^2 O_{P}(1) \lesssim_{P} \kappa_{n}^2 \asymp \kappa_N^2
\end{align*}
The bound on $AC$ can be seen as
\begin{align*}
AC &\lesssim_{P} \max_{1 \leq i \leq N} | Y_i(\widehat{\eta}) - Y_i (\eta_0) | \max_{1 \leq i \leq N} \{ |U_i| + |r_i|  \} \| \E_N p_i p_i'\| \\
&\lesssim_{P} \max_{1 \leq i \leq N} | Y_i(\widehat{\eta}) - Y_i (\eta_0) | \max_{1 \leq i \leq N} \{ |U_i| + |r_i|  \}  O_{P} (1) \\
&\lesssim_{P}  \kappa_N^1 ( N^{1/m} +l_d r_d).
\end{align*}
The bound on $B^2+AB$ is established in the  proof of Theorem 4.6 in \cite{NewOls}. It is
\begin{align*}
B^2 + AB \lesssim_{P} (N^{1/m}  +l_d r_d) \sqrt{\frac{\xi_d^2 \log N}{N}}.
\end{align*}
Collecting the bounds on $C^2 + AC + B^2 + BC$ yields:
\begin{align}
\label{eq:i1}
I_1 \leq 2(C^2 + AC + B^2 + AB) \lesssim_{P} (N^{1/m}  +l_d r_d) (\sqrt{\frac{\xi_d^2 \log N}{N}} + \kappa_N^1 ) + \kappa_N^2.
\end{align}
Thus, the statement of Theorem follows from the proof of Theorem 4.6 of \cite{NewOls} and (\ref{eq:i1}). Finally, (\ref{eq:lemma51}) follows from the proof of Lemma 5.1 in \cite{NewOls}. 
\end{proof}

\begin{proof}[Proof of Theorem \ref{cor:bootstrap}]
Theorem \ref{cor:bootstrap} coincides with Condition C.2(b) in \cite{CLR} with $\mathcal{V} = \mathcal{X}$ and 
\begin{align*}
    Z_N^{\star}(x) &= \dfrac{p(x)'\widehat{\Omega}^{1/2}}{\| p(x)'\widehat{\Omega}^{1/2} \|} \mathcal{N}^b_d, \quad Z_N^{*} (x)=\dfrac{p(x)'\Omega^{1/2}}{\| p(x)'\Omega^{1/2} \|} \mathcal{N}^b_d, \quad \delta_N=o(1/\log N)
\end{align*}
for some $\ell_N$ chosen below. Condition C.2 (b) holds by Lemma 5 of \cite{CLR}, whose assumptions we verify below. First, condition NS 1 (a) of Lemma 5 is verified by  Theorem \ref{thrm:ult} with $\bar{a}_N=\log N$. Second, the eigenvalues of matrix $\Omega$ are bounded below by the minimal eigenvalue of $\underbar{$\sigma$}^2 Q^{-1}$, which is bounded away from zero by Assumption \ref{ass:identification}. Finally, the last two conditions of C.2 are directly assumed in the theorem.  Therefore, all conditions of Lemma 5 are verified, and the statement of Theorem holds. 

%
\end{proof}
  
 We approximate the $t$-statistic process in \eqref{eq:tN} by the  following Gaussian coupling process conditional on the data:
\begin{align}
\label{eq:tnstar}
   \bigg\{  t^{*}_N(x) &= \dfrac{p(x)'\Omega^{1/2} \mathcal{N}_d/\sqrt{N}}{\sigma_N(x)}, x \in \mathcal{X} \bigg\}
\end{align}
\begin{lemma}[Strong Approximation  of the $T$-statistic process]
\label{lem:strong}
Suppose Assumptions \ref{ass:identification}-\ref{ass:matrix} hold. In addition, suppose the assumptions of Theorem \ref{thrm:ult} hold with $\bar{a}_N^{-1}=o(1)$. Then, 
\begin{align*}
    t_N(x) &= t^{*}_N(x) + o_{P}(\bar{a}_N) \text{ in } \ell^{\infty}(\mathcal{X}),
\end{align*}
where $t_N(x)$ is as in \eqref{eq:tN} and  $t^{*}_N(x)$ is as in \eqref{eq:tnstar}.
\end{lemma}
The statement of Lemma \ref{lem:strong} follows from the proof of Theorem 5.4 from \cite{NewOls}.  Specifically, the proof of Theorem 5.4 is invoked with $a_N = \bar{a}_N$, $\ell(w) = p(x)$,  $\mathcal{I} = \mathcal{X}$, Theorem \ref{lem:matrix}(b) (in place of Lemma 5.1), Lemma \ref{lem:uniform} (a) (in place of Lemma 5.2), Lemma \ref{lem:uniform} (b) (in place of Theorem 5.3) and  Theorem \ref{thrm:ult} (in place of Theorem 4.4), where in parentheses we refer to statements from  \cite{NewOls}.


\begin{lemma}[Strong Approximation of Gaussian Process Suprema]
\label{lem:strongsup}
Suppose Assumptions \ref{ass:identification}-\ref{ass:matrix} hold with $m \geq 4$. Assume that (i) $\bar{R}_{1N} + \bar{R}_{2N} \lesssim \log^{-1/2} (N)$, (ii) $$\xi_d \log^2 N/ N^{1/2-1/m} =o(1),$$ (iii) $1 \lesssim \underbar{$\sigma$}^2$, and (iv) $\sup_{x \in \mathcal{X}} \sqrt{N} | r(x) |/ \| p(x) \| = o(1/\sqrt{\log N})$. Then,
\begin{align*}
    \sup_{x \in \mathcal{X}} | t_N(x)| =_d \sup_{x \in \mathcal{X}} | t_N^{*} (x) | + o_{P} (1/\sqrt{ \log N}).
\end{align*}
\end{lemma}
The statement of Lemma \ref{lem:strongsup} follows from the proof of Theorem 5.5 from \cite{NewOls} with $a_N = \sqrt{\log N}$, $\ell(w)=p(x)$, $\mathcal{I}=\mathcal{X}$.  Specifically, the proof of Theorem 5.5 is invoked with $a_N = \bar{a}_N$, $\ell(w) = p(x)$,   $\mathcal{I} = \mathcal{X}$, Lemma \ref{lem:strong} (in place of Theorem 5.4), Lemma \ref{lem:uniform} (a) (in place of Lemma 5.2), where in parentheses we refer to statements from  \cite{NewOls}.

\begin{proof}[Proof of Theorem \ref{cor:confbands}]
The proof follows by invoking Lemma \ref{lem:strongsup} and the same steps as the proof
of Theorem 5.5 in \cite{NewOls}, replacing  the bound on $\| \widehat{\Omega} - \Omega \|$ by the bound from Theorem \ref{lem:matrix}. The validity of the critical value obtained by Gaussian bootstrap follows exactly the same steps as in the proof of Theorem 5.6 in \cite{NewOls}.

\end{proof}

\subsection{Proofs from Section \ref{sec:apps}}

\begin{lemma}[Maximal inequality for canonical $U$-statistics]
\label{lem:Chen}
Let $(V_i)_{i=1}^N$  be a sample of i.i.d random variables. Let $\widetilde{\tau}(\cdot, \cdot) : \mathcal{V} \bigtimes \mathcal{V} \rightarrow \mathcal{R}^d$ be a symmetric $d$-dimensional kernel function whose $m$'th coordinate is denoted by $\widetilde{\tau}_m(\cdot, \cdot)$. The kernel function is canonical:
$$ \E[ \widetilde{\tau}_m(V_1, V_2) ] =0 \quad \text{ for any } m  \in \{1,2,\dots, d\}.$$
Assume that $\E | \widetilde{\tau}_m(V_1, V_2) |<\infty $ for all $m \in \{1,2,\dots, d\}$.  Let  $$\bar{V}_N = \frac{1}{N(N-1)} \sum_{1 \leq i \neq j \leq N} \widetilde{\tau}(V_i, V_j),$$ $$M =  \max_{1 \leq i \neq j \leq N}  \max_{1 \leq m \leq d} | \widetilde{\tau}_m (V_i, V_j) |, $$ and $D_q= \max_{1 \leq m \leq d} (\E |\widetilde{\tau}_m (V_1, V_2) |^q)^{1/q}, \quad q>0$. If $2 \leq d \leq \exp (bN)$ for some constant $b>0$, then there exists an absolute constant $K>0$ such that
\begin{align*}
\E [ \|\bar{V}_N\|_{\infty} ] \leq K (1 + b^{1/2})  \bigg\{ \dfrac{\log d}{N }D_2  + \bigg(\dfrac{\log d}{N } \bigg)^{5/4} D_4 + \bigg(\dfrac{\log d}{N } \bigg)^{3/2} \| M \|_4  \bigg\}
\end{align*}
\end{lemma}
Lemma \ref{lem:Chen} is a restatement of Theorem 5.1 of \cite{Chen}. In this paper, $d$ is less than $N$. Therefore, the R.H.S. is bounded by $K (1 + b^{1/2}) (D_2 \vee D_4 \vee \| M \|_4 )\dfrac{\log d}{N }$.

For a symmetric function $f: \mathcal{V}^r \rightarrow \mathcal{R}$, denote
\begin{align*}
    P^{r-k} f(x_1,x_2, \dots, x_k) = \E f(x_1, x_2, \dots, x_k, X_{k+1}, \dots, X_{r} )
\end{align*}
and let $P^0 f =f$.  Let $\mathcal{F}$ be a class of symmetric  functions and denote  $\| \cdot \|_{\mathcal{F}} = \sup_{f \in \mathcal{F}} | \cdot  |$. Assume that there exists an envelope $F$ for $\mathcal{F}$ such that $P^r F^2 < \infty$. Consider the associated $U$-process
\begin{align*}
U_N^{(r)}(f) &= \dfrac{1}{| I_{N,r} |} \sum_{(i_1,i_2,\dots, i_r ) \in I_{N,r}} f(V_{i,1}, V_{i,2}, \dots, V_{i,r}), \quad f \in \mathcal{F},
\end{align*}
where $ I_{N,r}$ is a collection of permutations. For each $ k \in \{1,2,\dots, r\}$, the Hoeffding projection (with respect to $P$) is defined by
\begin{align*}
\pi_k(f) (x_1,x_2,\dots x_k) &:= (\delta_{x_1} - P) \dots  (\delta_{x_k} - P) P^{r-k} f,
\end{align*}
where $\delta_{x}$ is the delta-function at $x$. Let $\sigma_k$ be any positive constant such that $\sup_{f \in \mathcal{F}} \| P^{r-k} f  \|_{P^k,2} \leq  \sigma_k \leq   \| P^{r-k} F  \|_{P^k,2} $ whenever $  \| P^{r-k} F  \|_{P^k,2}>0$ and let $\sigma_k=0$ otherwise. Define $M_k$ as
\begin{align*}
M_k &= \max_{1 \leq i \leq [N/k]} (P^{r-k}F) (V^{ik}_{i(k-1)+1}),
\end{align*}
where $V^{ik}_{i(k-1)+1} = (V_{i(k-1)+1},V_{i(k-1)+2}, \dots V_{ik})$. Finally, let $J_k(\delta)$ be a uniform entropy integral 
\begin{align*}
    J_k(\delta) := \int_0^{\delta} \sup_{Q} \bigg[ (1 + \log N(P^{r-k} \mathcal{F}, \| \cdot \|_{Q,2}, \| P^{r-k} F \|_{Q,2} ) ) \bigg]^{k/2} d \tau,
\end{align*}
where $P^{r-k} \mathcal{F} = \{ P^{r-k} f: f \in \mathcal{F} \} $ and 
$\sup_{Q}$ is taken over all finitely discrete distributions on $\mathcal{V}^r$. A function class $\mathcal{F}$ with envelope $F$ is said to be VC-type
with characteristics $(A, v$) if $\sup_{Q} N(\mathcal{F}, \| \|_{Q,2}, \epsilon \| F \|_{Q,2} ) \leq (A/\epsilon)^v $ for  all $0< \epsilon \leq 1$, where 
$\sup_{Q}$ is taken over all finitely discrete distributions on $\mathcal{V}^r$.

\begin{lemma}[Local maximal inequalities for $U$-processes]
\label{lem:ChenKato}
Suppose $J_k(1) < \infty$ for $k=1,2,\dots, r$. Let $\delta_k = \sigma_k/\| P^{r-k} F \|_{P^k,2}$ for $k=1,2,\dots, r$. Then, 
\begin{align*}
N^{k/2} \E [ \| U_N^{(k)} (\pi_k f) \|_{\mathcal{F}}] \lesssim J_k(\delta_k) \| P^{r-k} F \|_{P^k,2} + \dfrac{J_k^2(\delta_k) \| M_k \|_{\mathrm{P},2}}{\delta_k^2 \sqrt{N}}.
\end{align*}

\end{lemma}
Lemma \ref{lem:ChenKato} is a restatement of Theorem 5.1 of \cite{ChenKato}.

\begin{lemma}[Local maximal inequalities for $U$-processes indexed by VC-type classes]
\label{cor:ChenKato}
Let $\mathcal{F}$  be VC-type with characteristics $A\geq e^{2(r-1)}/16 \vee e$ and $v \geq 1$. Then, for $k=1,2,\dots, r$, 
\begin{align*}
N^{k/2} \E [ \| U_N^{(k)} (\pi_k f) \|_{\mathcal{F}}] &\lesssim \sigma_k \bigg\{ v \log (A \| P^{r-k} F \|_{P^k,2}  / \sigma_k) \bigg\}^{k/2} +  \dfrac{\| M_k \|_{\mathrm{P},2}}{\sqrt{N}} \bigg\{ v \log (A \| P^{r-k} F \|_{P^k,2}  / \sigma_k)  \bigg\}^{k}.
\end{align*}
\end{lemma}
Lemma \ref{cor:ChenKato} is a restatement of Corollary 5.3. of \cite{ChenKato}.

\begin{lemma}[Useful Properties for Continuous Treatment Effects]
\label{lem:cte:properties}
Suppose Assumptions \ref{ass:identification} and \ref{ass:fsratecont} hold. Let $\alpha$ be a vector on a unit sphere $\mathcal{S}^{d-1} = \{ \alpha \in \mathcal{R}^d: \| \alpha \| =1 \}$ and let $v=Q^{-1} \alpha$. Then, the following properties hold for $\tau_1(z,\mu)$ in \eqref{eq:tau1}:
 \begin{enumerate}
 \item[(a)] The function $\mu^0(x,z)$ is bounded uniformly over its domain:
$$\sup_{(x,z) \in \mathcal{X} \bigtimes \mathcal{Z}} |\mu^0(x,z)| \leq 2\bar{\mathcal{C}}.$$
 \item[(b)] The function $\tau_1(z; \mu)$ is linear in $\mu$ for each $z \in \mathcal{Z}$
\begin{align*}
\tau_1(z; \mu - \mu_0)= \tau_1(z; \mu) - \tau_1(z; \mu_0).
\end{align*}
		\item[(c)] For each $z \in \mathcal{Z}$  the following inequality holds:
		\begin{align*}
| v' \tau_1(z; \mu) | \leq  2 C_{\min}^{-1} \bar{\mathcal{C}}.
\end{align*}
\item[(d)] For any non-negative function $\phi(x,z)$, 
\begin{align*}
    \int_{z \in \mathcal{Z}}  \phi(x,z) w_Z(z) dz \lesssim \E [ \phi(X,Z) | X=x] \quad \text{ for any } x,
\end{align*}
and 
\begin{align*}
     \int_{x \in \mathcal{X}}  \bigg(\int_{z \in \mathcal{Z}}  \phi (x,z) w_Z(z) \bigg)^2 w_X(x) dz
     &\leq  \int_{x \in \mathcal{X}}  \int_{z \in \mathcal{Z}}  \phi^2(x,z) w_Z(z) w_X(x) dz dx \lesssim \E  \phi^2(X,Z). 
\end{align*}
\end{enumerate}
%
\end{lemma}

\begin{proof}[Proof of Lemma \ref{lem:cte:properties}]
Lemma \ref{lem:cte:properties}(a). Since $\sup_{(x,z) \in \mathcal{X} \bigtimes \mathcal{Z}} |\mu(x,z)| \leq \bar{\mathcal{C}}$, 
\begin{align*}
    |\mu^0(x,z)|= |\mu(x,z) - \E \mu(x,Z)| \leq |\mu(x,z)| + | \E \mu(x,Z) | \leq 2\bar{\mathcal{C}}.
\end{align*}
Lemma \ref{lem:cte:properties}(b) follows from the definition of $\tau_1(z;\mu)$. 
Lemma \ref{lem:cte:properties}(c) follows from the following inequality:
	\begin{align*}
| v' \tau_1(z; \mu) | &= | \E_{X} [(v' p(X)) \cdot (\mu^0(X;z) )) ] | \\
&\leq  v' Q v (\E_{X} [\mu^0(X;z)]^2)^{1/2} \leq  2 C_{\min}^{-1} \bar{\mathcal{C}},
\end{align*}
where the last inequality follows from Lemma \ref{lem:cte:properties}(a) and $v'Qv = \alpha' Q^{-1} \alpha$. 
Lemma \ref{lem:cte:properties}(d) holds since
\begin{align*}
    \int_{z \in \mathcal{Z}}  \phi(x,z) w_Z(z) dz &= \E \bigg[ \phi(X,Z) \dfrac{w_X(X)}{s(X|Z)} \bigg| X=x \bigg] \leq  \bar{\mathcal{C}}^2 \E [ \phi(X,Z)  | X=x]\\
      \int_{x \in \mathcal{X}}  \bigg(\int_{z \in \mathcal{Z}}  \phi (x,z) w_Z(z)\bigg)^2 w_X(x) dz &\leq      \int_{x \in \mathcal{X}}  \int_{z \in \mathcal{Z}}   \phi^2 (x,z) w_Z(z) w_X(x) dz dx \\
     &\leq \int_{x \in \mathcal{X}}\int_{z \in \mathcal{Z}}  \phi^2 (x,z) w_{Z,X} (x,z) \dfrac{w_X(x)}{s(x|z)} dx dz  \\
     &\leq \bar{\mathcal{C}}^2  \E \phi^2(X,Z). 
\end{align*}

\end{proof}

\begin{proof}[Proof of Lemma \ref{thrm:conttreat}]

\textbf{ Step 1}. Let us show that $Y(\eta)$ given by Equation \eqref{eq:contoutcome} satisfies Assumption \ref{ass:smallbias}. Bayes rule implies that $w(z |x) = \dfrac{s_0(x|z) w_Z(z)}{w_X(x) }$,
where $w_X(x)=w_0(x)$ denotes the true values the marginal density of $X$, and $w(x)$ is a candidate value of the nuisance parameter. Likewise, let $s(x|z)$ and $s_0(x|z)$ be the true and the candidate value of the conditional density. Observe that 
\begin{align*}
    \E[Y(\eta_0) | X=x] &=  \int_{z \in \mathcal{Z}}  \mu_0(x,z) dP_Z(z) \\
    \E[Y(\eta) | X=x] &= \E[\E[ Y(\eta)| X, Z] | X=x] =  \int_{z \in \mathcal{Z}}  \mu(x,z) dP_Z(z) + \E \bigg[ \dfrac{w(x)(\mu_0(x,z) - \mu(x,z))}{s(x|z)} \bigg| X=x \bigg].
\end{align*}
Plugging both statements in  $ \E(Y(\eta) - Y(\eta_0) | X=x)$ gives
\begin{align*}
    \E \left(Y(\eta) - Y(\eta_0) | X=x\right) &= \int_{z \in \mathcal{Z}}  \dfrac{\mu_0(x,z) - \mu(x,z) }{s(x|z)/w(x) } \dfrac{s_0(x|z)}{w_X(x)} w_Z(z) d z \\
    &+  \int_{z \in \mathcal{Z}}   [ \mu(x,z) -  \mu_0(x,z)] w_Z(z) d z   \\
    &= \dfrac{w(x)}{w_X(x)} \int_{z}   \dfrac{(\mu_0(x,z) - \mu(x,z)) (s_0(x|z) - s(x|z)) }{s(x|z) } w_Z(z) d z
  \\
    &+  \dfrac{w(x) - w_0(x)}{w_0(x)} \int_{z}  [\mu_0(x,z) - \mu(x,z)] w_Z(z) d z \\
    &+ \int_{z  \in \mathcal{Z}} [ \mu_0(x,z) - \mu(x,z) ] w_Z(z) d z   +\int_{z \in \mathcal{Z}}   [ \mu(x,z) -  \mu_0(x,z)] w_Z(z) d z  \\
    &=: S_1(x) + S_2(x) + 0.
\end{align*}
Recognize that 
\begin{align*}
\| \E p(X) (S_1(X) + S_2(X)) \| \leq  \| \E p(X) S_1(X) \| + \| \E p(X) S_2(X) \|.
\end{align*}
For each $j \in \{1,2, \dots, d\}$,  Cauchy-Scwartz inequality implies
$$ (\E  | p_j(X)  S_k(X) | )^2  \leq  \sup_{x \in \mathcal{X}} p_j^2(x) (\E |S_k(X)|)^2, \quad k \in \{1, 2 \}.$$
Therefore,
\begin{align*}
    \| \E p(X) (S_1(X) + S_2(X)) \|^2 &\leq 2 (\| \E p(X) S_1(X) \|^2 + \| \E p(X) S_2(X) \|^2) \\
    &\leq 2  \sum_{j=1}^d \sup_{x \in \mathcal{X}}  p_j(x)^2 ( (\E |S_1(X)|)^2 + (\E |S_2(X)|)^2)\\
    &\lesssim d ( (\E |S_1(X)|)^2 + (\E |S_2(X)|)^2)
\end{align*}
if the dictionary of basis function is bounded. Alternatively, 
\begin{align*}
    \| \E p(X) (S_1(X) + S_2(X)) \|^2 &\leq 2 (\| \E p(X) S_1(X) \|^2 + \| \E p(X) S_2(X) \|^2) \\
    &\lesssim d ( \E S_1^2(X)  + \E S_2^2(X) ),
\end{align*}
where the bounds on $(\E |S_k(X)|)^2$ and $\E S_k^2(X)$ are established in Steps 2 and 3. Finally, Assumption \ref{ass:fsratecont} implies the following inequalities:
\begin{align}
\label{eq:ineq1}
   \sup_{w \in \mathcal{W}_N} \sup_{x \in \mathcal{X}} \dfrac{w(x)}{w_0(x)} &\leq \bar{\mathcal{C}}^2, \\
   \sup_{s \in \mathcal{S}_N}  \sup_{(x,z) \in (\mathcal{X} \bigtimes \mathcal{Z})} s^{-1} (x|z) &\leq \bar{\mathcal{C}}, \label{eq:ineq2} \\
     \sup_{s \in \mathcal{S}_N}       \sup_{w \in \mathcal{W}_N} \sup_{(x,z) \in \mathcal{X} \bigtimes \mathcal{Z}}  w (x) s^{-1} (x,z) &\leq \bar{\mathcal{C}}^2. \label{eq:ineq3} 
\end{align}
Joint, conditional, and marginal densities are related as
 \begin{align}
 \label{eq:rel}
     s_0(x|z) w_Z(z) = w_{X,Z}(x,z) = w(z|x) w_X(x).
 \end{align}
\textbf{ Step 2. } The following bound holds:
\begin{align*}
    |S_1(x)| &\leq \sup_{x \in \mathcal{X}} \dfrac{w(x)}{w_0(x) } \sup_{(x,z) \in \mathcal{X} \bigtimes \mathcal{Z}} s^{-1}(x|z) \int_{z \in \mathcal{Z}} | \mu_0(x,z) - \mu(x,z) | |s_0(x|z) - s(x|z) |w_Z(z) dz \\
    &\leq^{i} \bar{\mathcal{C}}^3 \int_{z \in \mathcal{Z}} | \mu_0(x,z) - \mu(x,z) | |s_0(x|z) - s(x|z) | w_Z(z) dz \\
    &\leq^{ii} \bar{\mathcal{C}}^5  \E [ | \mu_0(X,Z) - \mu(X,Z) | |s_0(X|Z) - s(X|Z) | |X=x ],
\end{align*}
where $i$ holds by \eqref{eq:ineq1} and $ii$ holds by Lemma \ref{lem:cte:properties}(d). Therefore, $\E | S_1(X) | \lesssim \textbf{s}_N \textbf{m}_N$ and
$\E  S_1^2(X)  \lesssim \textbf{s}_{N,2\kappa} \textbf{m}_{N,2\gamma}$ for any $\kappa, \gamma \geq 1$ so that $1/\kappa+1/\gamma=1$.

 \textbf{ Step 3. } By \eqref{eq:ineq1} and \eqref{eq:rel},
\begin{align*}
    |S_2(x)| &\leq^{i}  \bar{\mathcal{C}} | w(x) -w_0(x)| \int_{z \in \mathcal{Z}} | \mu(x,z) - \mu_0(x,z) | w_Z(z) dz \\
    &\lesssim^{ii}  \bar{\mathcal{C}}^3 | w(x) -w_0(x)|  \E [| \mu(X,Z) - \mu_0(X,Z) | | X=x].
\end{align*}
where $i$ holds by \eqref{eq:ineq1} and $ii$ holds by Lemma \ref{lem:cte:properties}(d).
Therefore, $\E | S_2(X) | \lesssim \textbf{w}_N \textbf{m}_N$ and
$\E  S_2^2(X)  \lesssim \textbf{w}_{N,2\kappa} \textbf{m}_{N,2\gamma}$ for any $\kappa, \gamma \geq 1$ so that $1/\kappa+1/\gamma=1$.

\textbf{ Step 4. } To choose $\Lambda_N$, we decompose $Y(\eta) - Y(\eta_0)$ as follows
\begin{align*}
Y(\eta) - Y(\eta_0) &= (\mu_0(X,Z) - \mu(X,Z)) \left( \dfrac{w(X)}{s(X|Z)} \right) \\
&+ (Y - \mu_0(X,Z))\left( \dfrac{w(X)}{s(X|Z)} - \dfrac{w_0(X)}{s_0(X|Z)} \right)\\
&+ \int_{z \in Z} (\mu(X,z)  - \mu_0(X,z)) d P_{Z}(z) \\
&= M_1 + M_2  + M_3.
\end{align*} 
Recall that $\Lambda_N \leq \xi_d \sup_{\eta \in \mathcal{T}_N} (\E (Y(\eta) - Y(\eta_0))^2)^{1/2}$. Therefore, it suffices to bound $\E M_k^2$ for $k=1,2,3$. By \eqref{eq:ineq3},  $$\E M_1^2 = \E  (\mu_0(X,Z) - \mu(X,Z))^2  \sup_{(x,z) \in \mathcal{X} \bigtimes \mathcal{Z}}  \bigg(\dfrac{w(x)}{s(x|z)} \bigg)^2  \lesssim \textbf{m}_N^2 \bar{\mathcal{C}}^4.$$ By \eqref{eq:ineq1} and \eqref{eq:ineq2},
\begin{align*}
\E M_2^2 &\leq \overline{\sigma}^2  \E \bigg[ \dfrac{w(X)-w_0(X)}{s(X|Z)}+ \dfrac{w_0(X)}{s_0(X|Z) s(X|Z)} ( s_0(X|Z)-s(X|Z)) \bigg]^2  \lesssim 2\overline{\sigma}^2   \bar{\mathcal{C}}^2 \textbf{w}_N^2  +  2 \overline{\sigma}^2 \bar{\mathcal{C}}^6  \textbf{s}_N^2.
\end{align*} 
By Lemma \ref{lem:cte:properties}(d) and Jensen inequality, 
\begin{align*}
\E M_3^2 & = \E_{X} \bigg[ \int_{z \in Z} (\mu(X,z)  - \mu_0(X,z)) d P_Z(z) \bigg]^2 \\
 &\leq  \E_{X} \bigg[  \E [  (\mu(X,Z)  - \mu_0(X,Z)) \frac{w_0(X)}{s_0(X|Z)} | X]  \bigg]^2 \lesssim \textbf{m}_N^2.
\end{align*}
Therefore, Assumption \ref{ass:smallbias}   hold with $B_N:= \sqrt{N} \sqrt{d} (\textbf{w}_N \textbf{m}_N + \textbf{s}_N \textbf{m}_N)$ if the dictionary is bounded or 
$B_N:= \sqrt{N} \sqrt{d} (\textbf{w}_{N,2\kappa} \textbf{m}_{N,2\gamma} + \textbf{s}_{N,2\kappa} \textbf{m}_{N,2\gamma})$ for any $\kappa, \gamma$ so that $1/\kappa + 1/\gamma=1$. Furthermore, $\Lambda_N := \xi_d( \textbf{m}_N \vee \textbf{w}_N \vee  \textbf{s}_N)$ obeys Assumption \ref{ass:smallbias}. By Assumption \ref{ass:fsratecont}, $\Lambda_N=o(1)$ and $B_N=o(1)$.
\end{proof}

\begin{lemma}[Pointwise Linearization of  Orthogonal Estimator]
\label{lem:pointwise:cte}
Suppose Assumptions \ref{ass:identification}-\ref{ass:boundederror} and 
\ref{ass:fsratecont} hold, and $\xi_d N^{-1} = o(1)$. Then,  for any $\alpha \in \mathcal{S}^{d-1} := \{ \alpha \in \mathbb{R}^d: \| \alpha \| = 1\}$ the estimator $\widehat{\beta}^{\dagger}$ is approximately linear:
		         \begin{align*}
        \sqrt{N} \alpha' (\widehat{\beta}^{\dagger} - \beta)  = \alpha' Q^{-1}  \GN [p(X_i) (U_i + r_g(X_i)) + \tau_1(Z_{i};\mu_0)]+  R_{1,N}(\alpha)+ R_{1,N}^{\dagger} (\alpha),
        \end{align*}
        where $R_{1,N}(\alpha)$ is defined in Lemma \ref{lem:pointwise} and
        \begin{align*} 
        R_{1,N}^{\dagger} (\alpha) := \dfrac{1}{n (n-1)} \dfrac{1}{K}\sum_{k=1}^K \sum_{i \in J_k} \sum_{i \neq j} \tau (V_i, V_j; \widehat{\mu}) - \dfrac{1}{N} \sum_{i=1}^N \tau_1(V_i; \mu_0),
        \end{align*}
        summarizing the remainder of the $U$-statistic projection, obeys  
         \begin{align*} R_{1,N}^{\dagger} (\alpha)   
        &\lesssim_{P} \sqrt{d/N} + \sqrt{d\xi_d^4 \log^3 N/N^2} + \sqrt{d}\textbf{m}_N (1+\sqrt{\xi_d^2 \log N/N}).
         \end{align*}
\end{lemma}

\begin{proof}[Proof of Lemma \ref{lem:pointwise:cte}]

\textbf{ Step 1. } We decompose the remainder term  $R_{1,N}^{\dagger} (\alpha)$ into $R^{\dagger}_{1,k}$ and $R^{\dagger}_{2,k}$:
\begin{align*}
R_{1,N}^{\dagger} (\alpha) &:= \sqrt{N} \alpha' (\widehat{\beta}^{\dagger} - \widehat{\beta} - \frac{1}{N} \sum_{i=1}^N \tau_1(Z_{i}; \mu_0)) \\
&=\sqrt{N} \alpha' \widehat{Q}^{-1} \frac{1}{K} \sum_{k=1}^K \bigg[  \frac{1}{n(n-1)} \sum_{i,j \in J_k, i \neq j} \tau(V_i, V_j; \widehat{\mu}) -  \frac{1}{n} \sum_{i \in J_k}  \tau_1(Z_{i};\widehat{\mu}) \bigg]  \\
&+\sqrt{N} \alpha' \widehat{Q}^{-1} \frac{1}{K} \sum_{k=1}^K \bigg[\frac{1}{n} \sum_{i \in J_k}  \tau_1(Z_{i};\widehat{\mu} ) - \tau_1(Z_{i};\mu_0) \bigg]  \\
&=: \sqrt{N}\alpha' \widehat{Q}^{-1}  \frac{1}{K} \sum_{k=1}^K [R^{\dagger}_{1,k}+R^{\dagger}_{2,k}] \\
&= \frac{1}{K} \sum_{k=1}^K [\mathcal{R}_{1,k} + \mathcal{R}_{2,k} + \mathcal{R}_{1,k}' + \mathcal{R}_{2,k}'],
\end{align*}
where the latter terms are defined as:
\begin{align*}
\mathcal{R}_{1,k}:=\sqrt{N} \alpha' Q^{-1} R^{\dagger}_{1,k}, \quad \mathcal{R}_{2,k}:=\sqrt{N} \alpha' Q^{-1} R^{\dagger}_{2,k}, \\
\mathcal{R}_{1,k}':=\sqrt{N} \alpha'  (\widehat{Q}^{-1}-Q^{-1}) R^{\dagger}_{1,k}, \quad \mathcal{R}_{2,k}':=\sqrt{N} \alpha'  (\widehat{Q}^{-1}-Q^{-1}) R^{\dagger}_{2,k} \\
\end{align*}

\textbf{Step 2}. \textbf{Term $\mathcal{R}_{1,k}$}. Conditional on the data $(V_j)_{j \in J_k^c} $, $\widehat{\mu}$ is treated as fixed, and $R^{\dagger}_{1,k}$ is a degenerate $U$-statistic of order 2. To bound  $R^{\dagger}_{1,k}$, we invoke  Lemma \ref{lem:ChenKato} with the function class $$\mathcal{F}_{\alpha} = \{ \alpha' Q^{-1} \tau (\cdot, \cdot; \widehat{\mu} ) \}$$ consisting of a single element. The entropy integral of $\mathcal{F}_{\alpha}$ $\mathcal{J}_2(\delta)=\delta$.  Its envelope $F_{\alpha}(\cdot, \cdot): (V, V') \rightarrow C_{\text{min}}^{-1} \| \tau (V,V' ;\widehat{\mu})\|$ is bounded in $L_{P,2}$ norm: 
\begin{align*}
\E  \bigg[ \| \tau (V, V'; \widehat{\mu})\|^2 | \mathcal{E}_N, (V_j)_{j \in J_k^c}    \bigg] &\leq \sup_{\mu \in  M_N} \E \| \tau (V, V'; \mu)\|^2 \\
&\leq \bar{\mathcal{C}}^2 \frac{1}{2} (\| p(X) \|_{P,2}^2 + \| p(X') \|_{P,2}^2) \leq d \bar{\mathcal{C}}^2.
\end{align*}
Furthermore, for each $k \in \{1,2,\dots, K\}$
\begin{align*}
\max_{ i \in J_k} \| \tau (V_{2(i-1)+1}, V_{2i};\widehat{\mu} ) \| \leq \sup_{x \in \mathcal{X}} \| p(x) \| \bar{\mathcal{C}} =  \xi_d  \bar{\mathcal{C}}.
\end{align*}
Applying Lemma \ref{lem:ChenKato} to $ \alpha' Q^{-1} R^{\dagger}_{1,k}$ with $r=k=2$ and $\sigma_2= \| F\|_{P,2} \lesssim d  \bar{\mathcal{C}}$ gives
\begin{align*}
n \E \bigg[  |  \alpha' Q^{-1} R^{\dagger}_{1,k} | \bigg| \mathcal{E}_N,  (V_j)_{j \in J_k^c}  \bigg]  &\lesssim  \bar{\mathcal{C}} \sqrt{d}  + \bar{\mathcal{C}} n^{-1/2} \xi_d.
\end{align*}
By Markov inequality, $\mathcal{R}_1:= \sqrt{N}   \alpha' Q^{-1}  R^{\dagger}_{1,k}   \lesssim_{P} N^{-1/2} \bar{\mathcal{C}} \sqrt{d}  + \xi_d N^{-1} \bar{\mathcal{C}} = o_{P} (1)$
conditional on $(V_j)_{j \in J_k^c}$. By Lemma \ref{lem:cond}, $\mathcal{R}_{1,k}=\sqrt{N}   \alpha' Q^{-1}  R^{\dagger}_{1,k}    = o_{P} (1)$ unconditionnally.

\textbf{ Step 3}. \textbf{Term $\mathcal{R}_{1,k}'$}. By Assumptions \ref{ass:identification} and  \ref{ass:growth}, $\| \widehat{Q}^{-1} - Q^{-1} \| \lesssim_{P} \sqrt{\dfrac{\xi_d^2 \log N}{N}}$. Therefore,
$\mathcal{R}_{1,k}' \lesssim_{P} \sqrt{\frac{\xi_d^2 \log N}{N}} \| R^{\dagger}_{1,k}  \|$, where 
\begin{align}
\| R^{\dagger}_{1,k}  \| \leq \sqrt{d} \| R^{\dagger}_{1,k} \|_{\infty}.
\end{align}
We invoke  Lemma \ref{lem:Chen} for the canonical two-sample $U$-statistic $R^{\dagger}_{1,k} $ conditional on the event $\mathcal{E}_N$ and the sample $(V_j)_{j \in J_k^c}$.
On this event, $\tau(V_i, V_j; \widehat{\mu})$ is bounded in absolute norm:
\begin{align*}
\| R^{\dagger}_{1,k} \|_{\infty} \leq 2 \max_{1 \leq j \leq d} \sup_{x \in \mathcal{X}} | p_j(x) | \mathcal{C} \leq 2 \xi_d  \mathcal{C}.
\end{align*}
Therefore, the constants $D_2, D_4, M_4$ are bounded by $\xi_d  \mathcal{C}$. Thus,
\begin{align*}
\E [ \| R^{\dagger}_{1,k} \|_{\infty} ] &\lesssim \xi_d  \mathcal{C}  \dfrac{\log d}{N}  
\end{align*}
and $\| R^{\dagger}_{1,k} \|_{\infty} =  O_{P} \bigg(  \dfrac{\xi_d \log d}{N} \bigg)= o(1)$.  Invoking $\log d \leq \log N$, we obtain
\begin{align*}
\mathcal{R}_{1,k}' \lesssim_{P} \sqrt{N}  \sqrt{\frac{d \xi_d^2 \log N}{N }}   \dfrac{\xi_d \log d}{N}  = O\bigg( \sqrt{\frac{d \xi_d^4 \log^2 N}{N^2 }} \bigg).
\end{align*}

\textbf{ Step 4}.  \textbf{Term $\mathcal{R}_{2,k}$}. Conditional on the data $(V_j)_{j \in J_k^c} $, the remainder term $R^{\dagger}_{2,k}$ is a sample average of mean zero i.i.d r.v.:
\begin{align*}
R^{\dagger}_{2,k}:=  \E_{n,k} [\tau_1(Z_{i}; \widehat{\mu}) -  \tau_1(Z_{i}; \mu_0) ] =^{i}  \E_{n,k} \tau_1(Z_{i}; \widehat{\mu}- \mu_0), 
\end{align*}
where $i$ follows from Lemma \ref{lem:cte:properties}(b). The  bound on the norm follows as
\begin{align*}
\E [ \| \sqrt{n} R^{\dagger}_{2,k} \|^2 | \mathcal{E}_N, (V_j)_{j \in J_k^c} ] &\leq \E [ \|  \tau_1(Z_{i}; \widehat{\mu}) -  \tau_1(Z_{i}; \mu_0) \|^2  |  \mathcal{E}_N,  (V_j)_{j \in J_k^c} ] \\
&\leq \sup_{\mu \in  M_N} \E \| \tau_1(Z_{i}; \mu - \mu_0) \| ^2 \\
&=  \sup_{\mu \in  M_N} \sum_{j=1}^d \E_{Z} ( \E_{X} [p_j(X) (\mu^0(X;z)-\mu_0^0(X;z)) ] |_{z=Z} )^2 \\
&\leq  \sup_{\mu \in  M_N} \sum_{j=1}^d \E_{Z}   \E_{X} p_j^2(X) \E_{X} [(\mu^0(X;z)-\mu_0^0(X;z))^2]|_{z=Z} \lesssim d \textbf{m}_N^2.
\end{align*}
Therefore, $\| \sqrt{n} R^{\dagger}_{2,k} \| = O_{P} (\sqrt{d} \textbf{m}_N)$ and $\mathcal{R}_{2,k}'\lesssim_{P} \sqrt{d} \textbf{m}_N \sqrt{\frac{\xi_d^2\log N}{N }}$.



\textbf{ Step 5}. Collecting the terms gives the bound in Lemma \ref{lem:pointwise:cte}. The bound on $R_{1,N}(\alpha)$ is established in Lemma \ref{lem:pointwise}.
\end{proof}

\begin{lemma}[Uniform Linearization of   Orthogonal Estimator]

\label{lem:uniform:cte}
Suppose Assumptions  \ref{ass:identification}-\ref{ass:basis} hold with $\xi_d^L/C_{\text{min}} \geq e^2/16 \vee e$. Suppose $N^{-1} \xi_d \log N = o(1)$. Then, the estimator $\widehat{\beta}^{\dagger}$ is approximately linear uniformly over the domain $\mathcal{X}$:

		\begin{align*} 		 |& \sqrt{N} \alpha(x)' (\widehat{\beta}^{\dagger}  - \beta_0) - \alpha'(x)Q^{-1} \GN [p(X_i)(U_i + r_g(X_i))  + \tau_1(Z_{i}; \mu_0) ] | \\
		 &\leq R_{1,N}(\alpha(x)) + R_{1,N}^{\dagger}(\alpha(x)), 
		 \end{align*} where  $R_{1,N}(\alpha(x))$ is defined in Lemma \ref{lem:uniform} and		 $R_{1,N}^{\dagger}(\alpha(x))$, summarizing the 
		 remainder of the $U$-statistic projection
		  \begin{align*} R_{1,N}^{\dagger} (\alpha(x))    &\lesssim_{P}  N^{-1/2} \bar{\mathcal{C}} (\sqrt{d} \log N  + \xi_d N^{-1}\log^2 N)  \\
        &+ \ \sqrt{\dfrac{d \xi_d^4 \log^3 N}{N^2}} + \sqrt{d} \textbf{m}_N  \bigg(  1+ \sqrt{\dfrac{\xi_d^2 \log N}{N}}  \bigg)=:\bar{R}_{1,N}^{\dagger}
         \end{align*}
\end{lemma}

\begin{proof}[Proof of Lemma \ref{lem:uniform:cte}]
\textbf{ Step 1}. Conditional on the data $(V_j)_{j \in J_k^c} $,$\widehat{\mu}$ is treated as fixed, and $R^{\dagger}_{1,k}(\alpha(x))$ is a degenerate $U$-statistic of order 2.  Consider a function class
$$ \mathcal{F}  = \{ \alpha(x)'Q^{-1}  \tau (\cdot, \cdot; \widehat{\mu} ), \quad x \in \mathcal{X}  \} $$
whose envelope $F:= C_{\text{min}}^{-1} \|  \tau (\cdot, \cdot; \widehat{\mu})  \|$ is square integrable (Step 2, Proof of Lemma \ref{lem:pointwise:cte}). We  determine the bracket size. Recognize that 
$$  |  [\alpha(x) - \alpha(x')] ^\top Q^{-1} \tau (V, V'; \widehat{\mu}) | \leq \xi_d^L \| x - x' \|  \| Q^{-1} \|  \| \tau (V, V'; \widehat{\mu})   \|$$
and therefore $$ \sup_{Q} N(\mathcal{F}, L^2(Q), \epsilon \| F \|_{L^2_{Q}}) \leq \bigg(\frac{ \xi^L_d /C_{\min} }{\epsilon}\bigg)^r $$ 
Plugging in $\sigma = \|F \|_{L^2_P} \leq  \sqrt{d} \bar{\mathcal{C}} $, $A=  \xi^L_d /C_{\min}$, $V=r$ 
into Lemma \ref{cor:ChenKato}, we obtain:
\begin{align*}
 \sup_{x \in \mathcal{X}} | \sqrt{N} \alpha(x)^\top Q^{-1} R^{\dagger}_{1,k} |&\lesssim_{P} N^{-1/2} \sqrt{d} \bar{\mathcal{C}}  \log \xi_d^L + N^{-1} \xi_d \log^2 \xi_d^L \bar{\mathcal{C}}   \\
 &\lesssim_{P} N^{-1/2} \sqrt{d} \bar{\mathcal{C}} \log N + N^{-1} \xi_d \log^2 N \bar{\mathcal{C}} 
 \end{align*}

\textbf{ Step 2}. Let $R^{\dagger}_{1,k}$ and $R^{\dagger}_{2,k}$ be as defined in the proof of Lemma \ref{lem:pointwise:cte}. We establish the following bound
\begin{align*}
    \sup_{x \in \mathcal{X}} | \sqrt{N} \alpha(x)^\top (Q^{-1} - \widehat{Q}^{-1}) R^{\dagger}_{1,k} |&\lesssim_{P} \| \widehat{Q} - Q \|  \| R^{\dagger}_{1,k} \| = O_{P} \bigg(\sqrt{\frac{\xi_d^2 \log N}{N}}\sqrt{d} \| R^{\dagger}_{1,k} \|_{\infty} \bigg)\\
    &=^{i} O_{P} \bigg(\sqrt{\frac{\xi_d^4 \log^3 N}{N^2}}\sqrt{d}  \bigg),
\end{align*}
where $i$ follows from Step 3 of the Proof of Lemma \ref{lem:pointwise:cte}.

\textbf{ Step 3}. We establish the following bound
\begin{align*}
 \sup_{x \in \mathcal{X}} | \sqrt{N} \alpha(x)^\top Q^{-1} R^{\dagger}_{2,k} |&\lesssim_{P} \|Q^{-1} \|  \| R^{\dagger}_{2,k} \| = O_{P} (\sqrt{d} \textbf{m}_N) \\
 \sup_{x \in \mathcal{X}} | \sqrt{N} \alpha(x)^\top (Q^{-1} - \widehat{Q}^{-1}) R^{\dagger}_{2,k} |&\lesssim_{P} \| \widehat{Q} - Q \|  \| R^{\dagger}_{2,k} \| = O_{P} \bigg(\sqrt{\frac{\xi_d^2 \log N}{N}}\sqrt{d} \textbf{m}_N \bigg), \\
\end{align*}
where $i$ and $ii$ hold by Step 4 of the proof of Lemma \ref{lem:pointwise:cte}.

\end{proof}

\begin{proof}[Proof of Theorem \ref{thrm:cte:pointwise}]
In Lemma \ref{lem:pointwise:cte}, we have shown the pointwise linearization of $\widehat{\beta}^{\dagger}$. Consider a triangular array $\{ (\chi_{Ni})_{i=1}^N, \quad i=1,2, \dots, N \}$, where $\chi_{Ni}$ is defined as
\begin{align*}
\chi_{Ni} = \frac{\alpha' Q^{-1} [p_i (U_i+r_i) + \tau_1(Z_{i};\mu_0)]}{\sqrt{N} \| \alpha' (\Omega^{\dagger})^{1/2} \|}.
\end{align*}
We verify the conditions of Lindeberg's CLT for $\chi_{Ni}$. Assumption \ref{ass:approx} and Lemma \ref{lem:cte:properties}(a) imply
\begin{align*}
\text{var} (\sum_{i=1}^N \chi_{Ni}) = 1 \text { and }  |\chi_{Ni} | \leq (\xi_d | U_i |+ \xi_d l_d r_d + \bar{\mathcal{C}}  \sqrt{d})/\sqrt{N}.
\end{align*}
Fix $\delta>0$ and $C>0$.   Observe that
\begin{align*}
   &N  \E \chi_{Ni}^2 1_{\{ | \chi_{Ni} | > \delta \} } \\
   &\leq  \frac{2}{\alpha' \Omega^{\dagger} \alpha} \bigg( \E (\alpha' Q^{-1}  p_i (U_i+r_i))^2 1_{\{ | \chi_{Ni} | > \delta \} } +  \E (\alpha' Q^{-1}  \tau_1(Z_{i};\mu_0))^2 1_{\{ | \chi_{Ni} | > \delta \}} \bigg) = L_1 + L_2.
\end{align*}
Observe that the event $| \chi_{Ni} | > \delta$ implies a larger event
\begin{align*}
   \bigg \{  | \chi_{Ni} | > \delta \bigg \} \Rightarrow \bigg \{ |U_i| > \delta \sqrt{N}/\xi_d  - \bar{\mathcal{C}} \sqrt{d}/\xi_d  -  l_d r_d  \bigg \} =: \bigg\{ |U_i| > M \bigg\},
\end{align*}
where $M= \delta \sqrt{N}/\xi_d  - \bar{\mathcal{C}}\sqrt{d}/\xi_d   -  l_d r_d \rightarrow \infty$ by assumption of the Theorem.  The numerator of the term $L_1$ is bounded as
\begin{align*}
   \E (\alpha' Q^{-1}  p_i (U_i+r_i))^2  &\leq (\alpha' Q^{-1} Q Q^{-1} \alpha) \sup_{x \in \mathcal{X}} \E [(U_i+r_i)^2| X_i=x]1_{\{ |U_i| > M \}}  \\
   &\leq 2\alpha' Q^{-1} \alpha  \bigg( \sup_{x \in \mathcal{X}}\E [U_i^2|X_i=x] 1_{ \{ |U_i| > M\}}  \bigg)+ l_d^2 r_d^2 \Pr ( |U_i| > M ) \rightarrow 0,
\end{align*}
where  the last inequality holds by the Lindeberg condition assumed in the Theorem.   To  bound the numerator of term $L_2$, observe that
\begin{align*}
     \E (\alpha' Q^{-1}  \tau_1(Z_{i};\mu_0))^2 1_{\{ |U_i| > M  \}} &\lesssim d \bar{\mathcal{C}}^2 \Pr ( |U_i| > M ) \rightarrow 0.
\end{align*}
Finally, the common denominator of $L_1$ and $L_2$, equal to $(\alpha' \Omega^{\dagger} \alpha)^{-1}$, is bounded by  $(C_{\text{min}}^{\dagger})^{-1}< \infty$ by assumption of the Theorem.

\end{proof}

\begin{proof}[Proof of Theorem \ref{thrm:cte:uniform}]
Proof of Theorem \ref{thrm:cte:uniform} (a). The bound on
\begin{align}
    \sup_{x \in \mathcal{X}} | \alpha(x)' \EN \tau_1(Z_i;\mu_0) | &\leq \sup_{x \in \mathcal{X}} \| \alpha(x) \| \| \EN \tau_1(Z_i;\mu_0) \| \nonumber \\
    &\leq \sqrt{d} \| \EN \tau_1(Z_i;\mu_0) \|_{\infty} \lesssim_{P} \sqrt{d \log d/N} \label{eq:bound},
\end{align}
where the last inequality follows from McDiarmid inequality for bounded sample averages of bounded r.v. Therefore, the total bound on
\begin{align*}
    \sup_{x \in \mathcal{X}} | \alpha(x)' (\widehat{\beta}^{\dagger} - \beta_0) | &\leq^{i}  \sup_{x \in \mathcal{X}} | \alpha(x)' Q^{-1}[\EN p_i (U_i + r_i) +  \tau_1(Z_i; \mu_0) ] |+ \bar{R}_{1N} + \bar{R}_{2N} + \bar{R}_{1N}^{\dagger} \\
    &\lesssim_{P}  N^{-1/2} (\sqrt{\log N} + \bar{R}_{1N} + \bar{R}_{2N} +\sqrt{d \log d} +\bar{R}_{1,N}^{\dagger}),
\end{align*}
where i follows from Lemmas \ref{lem:uniform} and \ref{lem:uniform:cte} and ii follows from \eqref{eq:bound}. By assumption of the Theorem, $\sqrt{d \xi_d^4 \log^3 N/N^2} = o(\sqrt{\xi_d^2 \log^2 N/N})=o(\sqrt{\log N})$.  Furthermore, $(d \log N/N)^{1/2} =o(1)$ and $\xi_d \log^2 N/N=o(1)$. Therefore, the only non-negligible term in $\bar{R}_{1,N}^{\dagger}$ is  $\sqrt{d} \textbf{m}_N$. 

Proof of Theorem \ref{thrm:cte:uniform} (b). We shall apply Yurinskii’s coupling (see Theorem 4.4 in \cite{NewOls}). To apply the coupling, we need to verify that $\E \| \tau_1(Z_i;\mu_0)\|^3 \lesssim d^{3/2} $, which holds by assumption of the Theorem. The rest of the Theorem follows from the proof of Theorem 4.4 of \cite{NewOls}. 

\end{proof}

\begin{proof}[Proof of Corollary \ref{thrm:CATE}] Let us show that $Y(\eta)$ given by Equation \eqref{eq:catesig}  satisfies Assumption \ref{ass:smallbias}.
\textbf{ Step 1. Choice of $B_N$.}
\begin{align*}
	&Y(\eta) - Y(\eta_0) =[\mu(1,Z) - \mu_0(1,Z)][ 1-  \dfrac{D} {s_0(Z)} ]   - [\mu(0,Z) - \mu_0(0,Z)][ 1-  \dfrac{1-D} {1-s_0(Z)} ]  \\
	&+ [ s_0(Z) -s (Z)] \big[  [ Y - \mu_0(1,Z)  ] \dfrac{D}{s(Z) s_0(Z)}  - [ Y - \mu_0(0,Z)  ] \dfrac{1-D}{(1-s(Z))(1- s_0(Z))} \big]   \\
	&+ [ s_0(Z) -s (Z)]\big[  [\mu(1,Z)  - \mu_0(1,Z)  ] \dfrac{D}{s(Z) s_0(Z)}  - [\mu(0,Z) - \mu_0(0,Z)  ] \dfrac{1-D}{(1-s(Z))(1- s_0(Z))} \big]  \\
	&= S_1 + S_1' + S_2 + S_2' + S_3 + S_3'.
\end{align*}
We will bound the terms $S_1,S_2, S_3$. The bounds on the terms $S_1',S_2',S_3'$ follow by a similar argument. To choose $B_N$ in Assumption \ref{ass:smallbias}, observe that
\begin{align*}
    \E p(X) S_1 &:= \E  p(X) [\mu(1,Z) - \mu_0(1,Z)] [1- \dfrac{D}{s_0(Z)} ] \\
    &=^{i} \E \bigg[ p (X) [\mu(1,Z) - \mu_0(1,Z)] \E \bigg[1- \dfrac{D}{s_0(Z)} \bigg| Z \bigg] \bigg] =^{ii} 0,
\end{align*}
where $i$ holds because $X$ is a subvector of $Z$ and $ii$ holds by definition of the propensity score. Likewise,
\begin{align*}
    \E p(X) S_2 &:= \E   p(X)  [ s_0(Z) -s (Z)] \big[  [ Y - \mu_0(1,Z)  ]  \\
    &= \E \bigg[p (X)  [ s_0(Z) -s (Z)]  \E [   Y  - \mu_0(1,Z) | Z  ] \bigg]=^{ii} 0,
\end{align*}
where $ii$ holds by the definition of regression function. Observe that $\Pr (D^2 = 1|Z) = \Pr (D = 1|Z)=s_0(Z)$ and 
\begin{align}
    \label{eq:dss}
  \E [ D^2 s^{-2} (Z) s^{-2}_0(Z)  | Z=z] \leq \bar{\mathcal{C}}^3 \text{ for any } z \in \mathcal{Z}, \\
   \Ep \bigg[ \bigg[ 1-  \dfrac{D} {s_0(Z)} \bigg]^2 \bigg| Z=z  \bigg] = 1 - 2 + s_0^{-1}(z) \leq \bar{\mathcal{C}} \quad \text{ for any } z \in \mathcal{Z} \label{eq:dss2} 
\end{align}
Therefore,
\begin{align*}
    \| \E p(X) S_3 \|^2 \leq  \sum_{j=1}^d  (\E p_j(X)  S_3(X))^2
    \lesssim  \bar{\mathcal{C}}^3 d \textbf{m}_N^2 \textbf{s}_N^2
\end{align*}
if the basis is bounded, in which case $B_N:=\sqrt{N} d \textbf{m}_N \textbf{s}_N$ satisfies Assumption \ref{ass:smallbias}. Otherwise,
\begin{align*}
     \| \E p(X) S_3 \|^2 \lesssim d \textbf{m}_{N,2\gamma}^2 \textbf{s}_{N,2\kappa}^2
\end{align*}
for any $\kappa, \gamma$ so that $1/\kappa + 1/\gamma=1$, in which case $B_N:=\sqrt{N} d \textbf{m}_{N,2\gamma} \textbf{s}_{N,2\kappa}$ satisfies Assumption \ref{ass:smallbias}. 
By Assumption \ref{ass:fsrate}, $B_N = o(1)$ for either choice of $B_N$.

\textbf{ Step 2. Choice of $\Lambda_N$.} As stated in Assumption \ref{ass:smallbias}, $\Lambda_N \leq \xi_d \sup_{\eta \in \mathcal{T}_N} (\E (Y(\eta) - Y(\eta_0))^2)^{1/2}$. Therefore, it suffices to bound $\E S_k^2$ for $k=1,2,3$. \eqref{eq:dss2} implies the following inequalities 
$$	\Ep S_1^2 = \Ep [[\mu(1,Z) - \mu_0(1,Z)]^2 \Ep [ [ 1-  D/s_0(Z) ]^2 | Z  ]] \lesssim \textbf{m}^2_N  \bar{\mathcal{C}},
$$
and
$$	\Ep  S_2^2 =^{i} \Ep [ s(Z) -s_0(Z)]^2 \Ep  [ [ Y- \mu_0(1,Z) ]^2 | Z, D=1  \big]  \dfrac{1} {s_0(Z) s^2(Z)}  \lesssim \textbf{s}_N^2 \bar{\mathcal{C}}^3.$$
Since one of the terms $(s_0(Z) -s (Z))^2$ or $(\mu(1,Z)  - \mu_0(1,Z))^2$ is bounded by 
$\bar{\mathcal{C}}^2$ almost surely in $Z$,
$$  \Ep S_3^2 = \E [ s_0(Z) -s (Z)]^2 [\mu(1,Z)  - \mu_0(1,Z)  ]^2 s^{-2}(Z) s_0^{-1}(Z) \leq \min ( \textbf{m}_N^2, \textbf{s}_N^2) \bar{\mathcal{C}}^5.$$ Therefore,
 $\Lambda_N = \xi_d(\textbf{m}_N  \vee \textbf{s}_N)$ satisfies Assumption \ref{ass:smallbias}. By Assumption \ref{ass:fsrate}, $\Lambda_N = o(1)$. 
\end{proof}

\begin{proof}[Proof of Corollary \ref{thrm:MD}]
Let us show that $Y(\eta)$ given by Equation \eqref{eq:mdsig} satisfies Assumption  \ref{ass:smallbias}.
\begin{align*}
	Y(\eta) - Y(\eta_0) &=[ \mu(Z) - \mu_0(Z)][ 1-  \dfrac{D} {s_0(Z)} ] + [ s_0(Z) -s (Z)][ Y - \mu_0(Z)  ] \dfrac{D}{s(Z) s_0(Z)} \\
	&+ [ s_0(Z) -s (Z)][ \mu(Z) - \mu_0(Z) ] \dfrac{D}{s(Z) s_0(Z)}\\
	&= S_1 + S_2 + S_3.
\end{align*}
To choose $B_N$ in Assumption \ref{ass:smallbias}, observe that
\begin{align*}
    \E p(X) S_1 = 0 \quad \text{ and } \quad  \E p(X) S_2 = 0 
\end{align*}
by Step 1 of the proof of Corollary \ref{thrm:CATE}. Invoking \eqref{eq:dss},
\begin{align*}
     \|   \Ep p(X) S_3  \|^2 \leq  \sum_{j=1}^d (\E p_j(X) S_3)^2 \lesssim d \textbf{m}_N^2 \textbf{s}_N^2
\end{align*}
if the basis is bounded, in which case $B_N=\sqrt{d} \textbf{m}_N \textbf{s}_N $ satisfies Assumption \ref{ass:smallbias}. Otherwise,
\begin{align*}
     \|   \Ep p(X) S_3  \|^2 \leq  \sum_{j=1}^d (\E p_j(X) S_3)^2 \lesssim d \textbf{m}_{N,2\gamma}^2 \textbf{s}_{N,2\kappa}^2,
\end{align*}
where $1/\kappa+1/\gamma=1$, in which case $B_N=\sqrt{d} \textbf{m}_{N,2\gamma} \textbf{s}_{N,2\kappa}$ satisfies Assumption \ref{ass:smallbias}.  By Assumption \ref{ass:fsrate}, $B_N = o(1)$.   

As stated in Assumption \ref{ass:smallbias}, $\Lambda_N \leq \xi_d \sup_{\eta \in \mathcal{T}_N} (\E (Y(\eta) - Y(\eta_0))^2)^{1/2}$. Similar to Step 2 in the proof of Corollary \ref{thrm:CATE},
\begin{align*}
	\Ep S_1^2 &= \Ep [ \mu(Z) - \mu_0(Z)]^2 \Ep \big[ [ 1-  \dfrac{D} {s_0(Z)} ]^2 | Z=z  \big] \lesssim \textbf{m}_N^2 \bar{\mathcal{C}} \\
	\Ep S_2^2 &= \Ep [ s(Z) -s_0(Z)]^2 \Ep \big[ [ Y- \mu_0(Z) ]^2 | Z=z,D=1  \big]  \dfrac{1} {s_0(Z) s^2(Z)} \lesssim \textbf{s}_N^2 \bar{\mathcal{C}}^3 \\
	\Ep S_3^2 &=  \Ep  [ s_0(Z) -s (Z)]^2[ \mu(Z) - \mu_0(Z) ]^2 \dfrac{1}{s^2(Z) s_0(Z)} \lesssim \min (\textbf{m}_N^2, \textbf{s}_N^2)\bar{\mathcal{C}}^3.
\end{align*}
 Therefore,
 $\Lambda_N = \xi_d(\textbf{m}_N  \vee \textbf{s}_N)$ satisfies Assumption \ref{ass:smallbias}. By Assumption \ref{ass:fsrate}, $\Lambda_N = o(1)$. 
\end{proof}

\begin{proof}[Proof of Corollary \ref{thrm:CAPD}]
Let us show that $Y(\eta)$ given by Equation \eqref{eq:capdsig} satisfies Assumption \ref{ass:smallbias}.
\begin{align*}
	Y(\eta) - Y(\eta_0) &=-\partial_{d} \log s_0(D|Z) [\mu(D,Z) - \mu_0(D,Z) ]  +  \partial_{d}  [\mu(D,Z) - \mu_0(D,Z) ]    \\
	&+ [\partial_{d} \log s_0(D|Z) - \partial_{d} \log s(D|Z)][ Y - \mu_0(D,Z)]  \\
	&+ [\partial_{d} \log s_0(D|Z) - \partial_{d} \log s(D|Z)] [\mu(D,Z) - \mu_0(D,Z) ]   \\
	&= S_1 + S_2 + S_3.
\end{align*}

\textbf{ Step 1. Choice of $B_N$.} Suppose the conditional density $s_0(d|z)$ of $D,Z$ has bounded support. Integration by parts implies
\begin{align*}
   & \E \partial_{d}  [\mu(D,Z) - \mu_0(D,Z) ] |Z=z] = \int_{t \in  \mathrm{R}} \partial_{t} (\mu(t,z) - \mu_0(t,z)) s(t|z) dt  \\
    &= - \int_{t\in  \mathrm{R}} (\mu(t,z) - \mu_0(t,z)) 
    \partial_{t}  s(t|z) d t \\
    &= \int_{t\in  \mathrm{R}} (\mu(t,z) - \mu_0(t,z))  \partial_{t}  \log s(t|z) s(t|z) dt \\
    &= \E [\partial_{d} \log s_0(D|Z) (\mu (D,Z) - \mu_0(D,Z)) |Z=z].
 \end{align*}
Therefore, $\E[ S_1 | Z= z]=0$ for any $z \in \mathcal{Z}$. Since $X$ is a subvector of $Z$, for any function of $X$,
\begin{align*}
    \E p(X) S_1 = 0. 
\end{align*}
By definition of regression function $\mu_0(D,Z) = \E[ Y | D,Z]$,
\begin{align*}
\E p(X) S_2 = \E p (X) [\partial_{d} \log s_0(D|Z) - \partial_{d} \log s(D|Z)] \E [ Y - \mu_0(D,Z)] =0.
\end{align*}
Finally,
\begin{align*}
\| \E p(X) S_3 \|^2 \leq \sum_{j=1}^d (\E p_j(X)  S_3)^2 \leq d \textbf{s}_N^2 \textbf{m}_N^2
\end{align*}
if the basis is bounded, in which case $B_N=\sqrt{d} \textbf{s}_N \textbf{m}_N$ obeys Assumption \ref{ass:smallbias}. Otherwise, $B_N=\sqrt{d} \textbf{s}_{N,2\kappa} \textbf{m}_{N,2\gamma}$ obeys Assumption \ref{ass:smallbias}.

\begin{align*}
	\Ep S_1^2 &\leq \Ep (-\partial_{d} \log s_0(D|Z) [\mu(D,Z) - \mu_0(D,Z) ]  )^2 \\
	&+ \Ep ( \partial_{d}  [\mu(D,Z) - \mu_0(D,Z) ]  )^2 \lesssim \textbf{m}_N^2   \\
	\Ep S_2^2 &\leq (\Ep \big[ [ Y - \mu_0(D,Z)] ^2 | D,Z \big]) \textbf{s}_N^2 \lesssim  \textbf{s}_N^2\\
	\Ep S_3^2 &\leq (\Ep  [\mu(D,Z) - \mu_0(D,Z) ]^2) \textbf{s}_N^2 \lesssim \min [\textbf{f}^2_N \vee \textbf{m}^2_N] 
\end{align*}
 Therefore,  $\Lambda_N = \xi_d  (\Ep (Y(\eta) - Y(\eta_0) )^2)^{1/2}  \lesssim \xi_d[\textbf{m}_N \vee \textbf{s}_N]$ obeys Assumption \ref{ass:smallbias}. 
\end{proof}





\section*{Appendix B: Verification of High-Level Conditions and Monte-Carlo Evidence}
\renewcommand{\thetable}{B.\arabic{table}}
\renewcommand{\theassumption}{B.\arabic{assumption}}
\renewcommand{\theexample}{B.\arabic{example}}
\renewcommand{\thelemma}{B.\arabic{lemma}}
\renewcommand{\theequation}{B.\arabic{equation}}
\renewcommand{\thesection}{B}
\setcounter{equation}{0}
\setcounter{lemma}{0}
\medskip

\label{sec:suppapp}

Here give an example of a model and and a first-stage estimator that satisfy the small bias condition from the main text.

\begin{example}[Partially Missing Outcome with High-Dimensional Sparse Design]
\label{ex:sparse}
Consider the setup of Example \ref{MD}. Let the observable vector $(D,X,DY^{*})$ consist of the covariate vector of interest $X$ and a partially observed variable $Y^{*}$, whose presence is indicated by $D \in \{1,0\}$. In addition, suppose there exists an observable vector $Z$ such that Missingnesss at Random  is satisfied conditional on $Z$.  Let $p_{\mu}(z),p_{s}(z) $ be  high-dimensional basis functions of the vector $Z$ that approximate the conditional expectation functions $\mu_0(z), s_0(z)$ using the linear and logistic links, respectively:
 \begin{align}
\mu_0 (z)&= p_{\mu}(z)' \theta_0 + r_{\mu} (z)  \label{eq:mu0}\\
s_0 (z)&= \mcL(p_{s}(z) ' \delta_0) + r_s(z) := \dfrac{ \exp (p_{s}(z) ' \delta) }{ \exp (p_{s}(z) ' \delta) + 1} + r_s(z)
\end{align}
 where $\theta, \delta$ are the vectors in $\mathcal{R}^{p_{\theta}}, \mathcal{R}^{p_{\delta}}$ whose dimensions are allowed to be larger than the sample size $N$, and $r_{\mu}(z), r_s(z)$ are the misspecification errors of the respective link functions  that vanish  as described in Assumptions \ref{ass:ident}, \ref{ass:ident2}. For each $\gamma \in \{ \theta, \delta \}$, denote a support set $$T_{\gamma} := \{ j : \gamma_j \neq 0, j \in \{1,2,..,p_{\gamma}\}\} $$ and  its cardinality, which we refer to as sparsity index of $\gamma$,  $$s_{\gamma} := |T| = \| \gamma \|_0 \quad \forall \quad \gamma \in \{ \theta, \delta \}. $$   We allow the cardinality of $s_{\delta}, s_{\theta}$ to grow with $N$. 

Let  $r_N, \Delta_N \rightarrow 0$ be a fixed sequence of constants approaching zero from above at a speed at most polynomial in $N$: for example, $r_N \geq \frac{1}{N^c}$
for some $c>0$, $\ell_N = \log N$. For $\gamma \in \{ \theta, \delta\}$, let $c_{\gamma},C_{\gamma}, \kappa'_{\gamma}, \kappa''_{\gamma}$   and $\nu \in [0,1]$ be positive constants that do not depend on $N$. Finally, let $\| V\|_{P_{N},2} := (N^{-1} \sum_{i=1}^N V_i^2)^{1/2}$.

 \begin{assumption}[Regularity Conditions for Linear Link]
 \label{ass:ident}
We assume that the following standard conditions hold. (a) With probability $1-\Delta_N$, the minimal and maximal empirical  RSE are bounded from below by $\kappa_{\mu}'$ and from above by $\kappa_{\mu}''$: $$\kappa_{\mu}' \leq \inf_{ \| \delta \|_0 \leq s_{\theta} \ell_N, \| \delta \| =1 } \| D p_{\mu}(Z)  \|_{P_N,2}  \leq \sup_{ \| \delta \|_0 \leq s_{\theta} \ell_N, \| \delta \| =1 } \| D p_{\mu}(Z)\|_{P_N,2}  \leq \kappa_{\mu}''.$$ (b) There exists  absolute constants $B_{\theta},C_{\theta},c_{\theta}>0$: regressors $\max_{1 \leq j \leq p_{\theta}} |p_{\mu,j}(Z) | \leq B_{\theta} \quad \text{a.s.}$ and  $c_{\theta} \leq \max_{1 \leq j \leq p_{\theta}}   \Ep p_{\mu,j}(Z)^2 \leq C_{\theta}$ (c) With probability $1-\Delta_N$, $\EN r_{\mu}^2(Z_i) \leq C_{\theta} s_{\theta} \log (p_{\theta} \vee N)/N $. (d) Growth restriction: for some $r_N =o(1)$, $\log (p_{\theta} \wedge N) \lesssim r_N N^{1/3}$. (e) The moments of the model are boundedly heteroscedastic: $c_{\theta} \leq \E [(Y - \mu_0(Z))^2|Z]  \leq C_{\theta}$,  $\max_{1 \leq j \leq p_{\theta}}  \Ep [|p_{\mu,j}(Z) (Y - \mu_0(Z))|^3 + |p_{\mu,j}(Z) Y|^3] \leq C_{\theta}$. (f) With probability $1-\Delta_N$,  $$\max_{1 \leq j \leq p_{\theta}} [ \E_N - \E] [p_{\mu,j}^2(Z) (Y - \mu_0(Z))^2 + p_{\mu,j}^2(Z) Y^2] \leq r_N N^{-1/2}.$$ 
\end{assumption} 

 \begin{assumption}[Regularity Conditions for Logistic Link]
 \label{ass:ident2}
We assume that the following standard conditions hold. With probability $1-\Delta_N$, the minimal and maximal empirical  RSE are bounded from below by $\kappa_s'$ and from above by $\kappa_s''$:  $$\kappa_s' \leq \inf_{ \| \delta \|_0 \leq s_{\delta} \ell_N, \| \delta \| =1 } \| p_{s}(Z)' \delta \|_{P_N,2}  \leq \sup_{ \| \delta \|_0 \leq s_{\delta}  \ell_N, \| \delta \| =1 } \| p_{s}(Z)' \delta \|_{P_N,2}  \leq \kappa_s''.$$ (b) There exist  absolute constants $B_{\delta},C_{\delta},c_{\delta}>0$: regressors $\max_{1 \leq j \leq p_{\delta}} |p_{s,j}(Z) | \leq B_{\delta} \quad \text{a.s.}$ and  $c_{\delta} \leq  \max_{1 \leq j \leq p_{\delta}}  \Ep p_{s,j}(Z)^2 \leq C_{\delta}$ (c) With probability $1-\Delta_N$, $\EN r_{s}^2(Z_i) \leq C_{\delta} s_{\delta} \log (p_{\delta} \vee N)/N $. (d) Growth restriction: for some $r_N =o(1)$, $\log (p_{\delta} \wedge N) \lesssim r_N N^{1/3}$. (e) With probability $1-\Delta_N$,  $$\max_{1 \leq j \leq p_{\delta}} [ \E_N - \E] [p_{\delta,j}^2(Z) (D - s_0(Z))^2] \leq r_N.$$

\end{assumption} 
Assumptions  \ref{ass:ident} and \ref{ass:ident2} are a simplification of the Assumption 6.1-6.2 in \cite{Program}. The following estimators of $\mu_0(Z)$ and $s_0(Z)$ are available. 
 
 \begin{definition}[Lasso Estimator of the Regression Function]
 \label{rem:lin}
Let  $\lambda = 1.1\sqrt{N} \Phi^{-1} ( 1 - 0.05/(N \vee p_{\theta} \log N)) $ and $\widehat{\Psi}_{\theta}=\text{diag}(\widehat{l}_1, \widehat{l}_2, \dots, \widehat{l}_{p_{\theta}})$ be a diagonal matrix of data-dependent penalty loadings chosen as in Algorithm 6.1 in \cite{Program}. Define  $\widehat{\theta}$ as a solution to the following optimization problem:
\begin{align*}
\widehat{\theta} := \arg \min_{\theta \in \mathcal{R}^{p_{\theta}} }  \EN D_i (Y_i - p_{\mu}(Z_i) '\theta)^2 + \lambda \| \widehat{\Psi}_{\theta} \theta \|_1   
\end{align*}
 and a first-stage estimate of $\mu $ as    $$\widehat{\mu}(z) := p_{\mu}(z)'\widehat{\theta}.$$
\end{definition} 
\begin{definition}[Lasso Estimator of the Propensity Score]
Let  $\lambda = 1.1\sqrt{N} \Phi^{-1} ( 1 - 0.05/(N \vee p_{\delta} \log N)) $ and $\widehat{\Psi}_{\delta}=\text{diag}(\widetilde{l}_1, \widetilde{l}_2, \dots, \widetilde{l}_{p_{\delta}})$ be a diagonal matrix of data-dependent penalty loadings chosen as in Algorithm 6.1 in \cite{Program}. Let $\underbar{s}>0$ be a positive constant. Define $\widehat{\delta}$ as a solution to the following optimization problem:
\begin{align*}
\widehat{\delta} := \arg \min_{\delta \in \mathcal{R}^{p_{\delta}}} \EN [\log (1+ \exp(p_{s}(Z_i)'\delta)) - D_i p_{s}(Z_i)'\delta] + \lambda \| \widehat{\Psi}_{\delta} \delta \|_1   
\end{align*} and a first-stage estimate of $s_0$ as    $$\widehat{s}(z) :=  \max( \underbar{s}/2, \mcL(p_{s}(z)' \widehat{\delta})).$$
\end{definition}
\begin{lemma}[Sufficient Conditions for Assumption \ref{ass:fsrate}]
\label{lem:lowlevel}
Suppose Assumptions \ref{ass:ident} and \ref{ass:ident2} hold.  Then, the following statements hold.
(1) There exists $C_{\theta} <\infty$ be such that w.p. $1-o(1)$:
$\| p_{\mu}(Z)'(\widehat{\theta} -\theta_0) \|_{P_{N,2}} \leq  C_{\theta} \sqrt{\dfrac{s_{\theta}  \log p_{\theta} }{N}} $ and $\| \widehat{\theta} -\theta_0 \|_{1} \leq  C_{\theta}  \sqrt{\dfrac{s_{\theta}^2  \log p_{\theta} }{N}} $. There exists $C_{\delta} <\infty$ be such that w.p. $1-o(1)$:
$\| p_{s}(Z)'(\widehat{\delta} -\delta_0) \|_{P_{N,2}} \leq  C_{\delta} \sqrt{\dfrac{s_{\delta}  \log p_{\delta} }{N}} $ and $\| \widehat{\delta} -\delta_0 \|_{1} \leq  C_{\delta}  \sqrt{\dfrac{s_{\delta}^2  \log p_{\delta} }{N}} $. 

(2)  Define the nuisance realization sets $ M_N$ and $ S_N$ as:
\begin{align}
\label{eq:mn}
 M_N:= \bigg\{ \mu(z) = p_{\mu}(z) ' \theta: \theta \in \mathcal{R}^{p_{\theta}}: \quad &\| p_{\mu}(Z)'(\theta -\theta_0) \|_{P_{N,2}} \leq  C_{\theta}  \sqrt{\dfrac{s_{\theta}  \log p_{\theta} }{N}}, \\
 &\| \widehat{\theta} -\theta_0 \|_{1} \leq  C_{\theta} \sqrt{\dfrac{s_{\theta}^2  \log p_{\theta} }{N}}   \bigg\} \nonumber 
  \end{align} 
  \begin{align}
 \label{eq:sn}
 S_N:= \bigg\{ s(z) = \mcL(p_{s}(z)' \delta):  \delta \in \mathcal{R}^{p_{\delta}}: \quad   &\| p_{\delta}(Z)'(\delta -\delta_0) \|_{P_{N,2}} \leq  C_{\delta}  \sqrt{\dfrac{s_{\delta}  \log p_{\delta} }{N}}, \\&\| \widehat{\delta} -\delta_0 \|_{1} \leq  C_{\delta}'  \sqrt{\dfrac{s_{\delta}^2  \log p_{\delta} }{N}}    \bigg\}   \nonumber 
 \end{align} 
Then, w.p. $1-o(1)$, $\widehat{\mu}(\cdot) \in M_N$ and the set $M_N$ shrinks at rate $\textbf{m}_N:=\sqrt{\dfrac{s_{\theta}  \log p_{\theta} }{N}}$.  Then, w.p. $1-o(1)$, $\widehat{s}(\cdot) \in S_N$ and the set $S_N$ shrinks at rate $\textbf{s}_N:=\sqrt{\dfrac{s_{\delta}  \log p_{\delta} }{N}}$. 

(3) Suppose $\xi_d(\textbf{m}_N \vee \textbf{s}_N) = o(1)$  and the product of sparsity indices $s_{\theta} s_{\delta}$ grows sufficiently slow:
$$ \sqrt{N}\sqrt{d} \textbf{m}_N  \textbf{s}_N= \sqrt{d}  \sqrt{  \dfrac{   s_{\theta} s_{\delta} \log p_{\theta} \log p_{\delta} }{N}  } = o(1).$$  Then, Assumption \ref{ass:fsrate} holds.

\end{lemma}

\end{example}
\textbf{Proof of Lemma \ref{lem:lowlevel}}.

\textbf{ Step 1}.Define $\widetilde{Y} = DY$, $\widetilde{p}_{\mu} (Z) = Dp_{\mu}(Z)$, $\widetilde{r}_{\mu}(Z) = Dr_{\mu}(Z)$, and $\widetilde{\epsilon} = D[Y - \mu_0(Z)]$. Here we verify the conditions of Theorem 6.1 from \cite{Program} for the model $ \widetilde{Y}, \widetilde{p}_{\mu} (Z), \widetilde{r}_{\mu}(Z), \widetilde{\epsilon}$.  Let us show that the original coefficient $\theta_0$, defined in Equation \eqref{eq:mu0},
satisfies $$\widetilde{Y} = \widetilde{p}_{\mu} (Z)'\theta_0 + \widetilde{r}_{\mu}(Z)+ \widetilde{\epsilon}, \quad \Ep [\widetilde{\epsilon}| \widetilde{D}] = 0 $$ Indeed, \begin{align*}
\Ep [\widetilde{Y}| D,Z] &= \Ep [DY | D, Z] = D \Ep [Y| Z,D] = D [\mu_0(Z)] = Dp_{\mu}(Z)' \theta_0 + D r(Z) \\
&= \widetilde{p}_{\mu} (Z)' \theta_0 + \widetilde{r}_{\mu}(Z). \\
\Ep [\widetilde{\epsilon} | D ] &= \Ep[ \Ep  \big[ [\widetilde{Y} -\widetilde{p}_{\mu} (Z)' \theta_0 - \widetilde{r}_{\mu}(Z) ] | D,Z  \big] |D ] = 0
\end{align*}
Recognize that Assumption \ref{ass:ident} implies that an analog of Assumption \ref{ass:ident} holds for $\widetilde{p}_{\mu} (Z), \widetilde{Y}$.  Assumption \ref{ass:ident} (a) directly assumes bounded Restricted Sparse Eigenavalues for Observed Regressors $\widetilde{p}_{\mu}(Z) = {p}_{\mu}(Z) D$. Assumption \ref{ass:ident} (b) is satisfied with $\Ep \widetilde{p}_{\mu,j}(Z)^2  = \Ep D  {p}_{\mu,j}^2(Z) \geq \underbar{s} \Ep {p}_{\mu,j}^2(Z) =: c'$ where $c':= c \underbar{s}$ is the new lower bound on the moments of $\Ep \widetilde{p}_{\mu,j}(Z)$ for observed regressors. Assumption \ref{ass:ident} (c) is satified with the $\Ep \widetilde{r}_{\mu}(Z)^2 \leq \Ep r_{\mu}(Z)^2 \leq C s_{\theta} \log (p_{\theta} \vee N)/N $. By Theorem 6.1 of \cite{Program}, w.p. $1-o(1)$, $p_{\mu}(\cdot)'\widehat{\theta} \in M_N$. Under Assumption \ref{ass:ident2}, Theorem 6.2 of \cite{Program} implies that  w.p. $1-o(1)$,  $\mcL(p_{s}(\cdot)'\widehat{\delta}) \in S_N$.

\textbf{ Step 2}.Let $C'_{\theta}= (C_{\theta}^2+\sqrt{2}B_{\theta})^{1/2}$.  For any $\theta \in \mathcal{R}^{p_{\theta}}$ such that $p_{\mu}(\cdot)'\theta \in M_N$ the following inequality holds:
\begin{align*}
&\E_{Z} (p_{\mu}(Z)'(\theta - \theta_0))^2 \leq \| \theta - \theta_0 \|_{P_{N,2}}^2 \\
&+ \big| (\theta - \theta_0)^\top [\E_N p_{\mu}(Z_i) p_{\mu}^\top (Z_i) - \E  p_{\mu}(Z_i) p_{\mu}^\top (Z_i)](\theta - \theta_0)\big| \\
 &\leq \| \theta - \theta_0 \|_{P_{N,2}}^2 +  \| \theta - \theta_0 \|_{1}^2 \max_{1 \leq i, j \leq p_{\theta}} \bigg|\E_N p_{i,\mu}(Z_i) p_{j,\mu}^\top (Z_i) - \E  p_{i,\mu}(Z_i) p_{j,\mu}^\top (Z_i) \bigg| \\
 &\lesssim^{i}_{P}  C_{\theta}^2 \textbf{m}_N^2 + B_{\theta} \sqrt{ \dfrac{s^2_{\theta} \log p_{\theta} }{N}  \dfrac{2 \log p_{\theta} }{N}} \lesssim_{P} (C_{\theta}')^2 \textbf{m}_N^2, 
\end{align*}
where $i$ follows by McDiarmid maximal inequality for bounded random variables. Therefore, the nuisance realization set $M_N$ shrinks at rate $\textbf{m}_N$. 

\textbf{ Step 3}. Observe that $ (\E_{Z} ( \mcL( p_{s}(Z)'\delta) -  \mcL( p_{s}(Z)'\delta_0))^2)^{1/2}  \leq \sup_{t \in \mathcal{R}} \mcL'(t) (\E_{Z} (p_{s}(Z)' (\delta -  \delta_0))^2)^{1/2} \leq \frac{1}{4} (\E_{Z} (p_{s}(Z)' (\delta -  \delta_0))^2)^{1/2}$. By the argument similar to Step 2, the nuisance realization set $S_N$ shrinks at rate $\textbf{s}_N$.

\end{document}